\listfiles
\documentclass[review, round, plainnat]{elsarticle}
\usepackage{amsmath}
\usepackage{graphicx}
\usepackage{amsthm,amsmath,amssymb}
 
\usepackage{mathrsfs}
\usepackage{caption}
\usepackage[table]{xcolor} 
\usepackage{array} 
\usepackage{subfigure}
\usepackage{lineno,hyperref}
\usepackage{cleveref}
\usepackage{multirow}
\usepackage{geometry}
\usepackage[utf8]{inputenc}
\usepackage{colortbl} 
\usepackage{hhline} 
\usepackage{capt-of}
\usepackage{ulem}
\usepackage{amssymb}
\usepackage{color}
\usepackage[figuresright]{rotating}
\usepackage{comment}
\usepackage{pdflscape}
\usepackage{amsthm}
\usepackage[section]{placeins}
\usepackage{graphicx} 
\usepackage{subcaption} 
\usepackage{dsfont}
\usepackage{capt-of}
\geometry{a4paper,scale=0.8}
\usepackage{natbib}\biboptions{authoryear}   
\bibliographystyle{plainnat}   
\modulolinenumbers[5]
\normalem
\journal{Working Papers}

\newtheorem{assumption}{Assumption}  
\newtheorem{theorem}{Theorem}[section]
\newtheorem{corollary}{Corollary}[theorem]

\newtheorem{lemma}{Lemma}
\newtheorem{proposition}{Proposition}[section]
\newtheorem{corollaryP}{Corollary}[proposition]
\newtheorem{example}{Example}[section]
\usepackage{booktabs} 
\usepackage{siunitx}  
\sisetup{
	round-mode          = places, 
	round-precision     = 2,      
}

\renewcommand{\geq}{\geqslant}
\renewcommand{\leq}{\leqslant}
\renewcommand{\epsilon}{\varepsilon}
\usepackage{tikz}
\usetikzlibrary{calc}
\usetikzlibrary{positioning}
\usetikzlibrary{arrows}

\tikzset{
	block/.style={rectangle, draw, rounded corners, text centered, text width = 7em, minimum height = 2em},
	line/.style={draw, -latex'}
}










\begin{document}
\begin{frontmatter}

\title{
Equilibrium in Style: A Modeling Framework on the Cash Flow and the Life Cycle of a Consumer Store} 
	
\author[author1]{Shanyu Han}
\address[author1]{School of Mathematical Sciences, Peking University, Beijing, China}
\author[author2]{Jian Lei\corref{cor1}}
\address[author2]{PBC School of Finance, Tsinghua University, Beijing, China}
\cortext[cor1]{Corresponding author}\ead{leij.23@pbcsf.tsinghua.edu.cn}

\author[author3]{Yang Liu}
\address[author3]{School of Science and Engineering, The Chinese University of Hong Kong, Shenzhen, Guangdong, China}

\begin{abstract}
The consumer store is ubiquitous and plays an important role in our everyday lives. It is an open question why stores usually have such short life cycles (typically around 3 years in China). This paper proposes a theoretical framework based on an equilibrium in style supply of stores and style demand of consumers to characterize store cash flow (revenue), leading to a strong explanation of this puzzle. In our model, we derive that the preference shifting of consumers is the main reason for the cash fl ow decreasing to its break-even line over time, while the visibility broadening leads to initial growth, resulting in a rainbow-shaped cash flow and its life cycle. Moreover, the intensified spatial competition will lead to an unexpected decrease in the store's cash flow, or even closure. We calibrate our model with proprietary data of three Chinese stores from three representative industries and study the relationship between customers' preference shifting and cash flow. To our knowledge, there have been no prior attempts to quantitatively model the life cycle of the store.


\end{abstract}

\begin{keyword}
  Consumer store \sep Life cycle   \sep Conversion rate\sep Store style \sep Spatial competition\sep Nash equilibrium.
  
  $\text{JEL:}$ G12  G13  F31
\end{keyword}

\end{frontmatter}


\section{Introduction}
In the vast forest of the marketplace, conventional listed companies are the towering trees, everlasting and long-living, while consumer stores (micro-enterprises) embody the plentiful small plants which naturally grow and die with the changing seasons. These stores, such as convenience stores, restaurants, and hotels, exist everywhere in our daily life, but little is known about them academically. Especially their frequent opening and closing, leaving a question of why these stores always have short lifespans. This paper develops a framework based on equilibrium in style to model stores' cash flow over time, which can be calibrated by real data, and hence provides a capable explanation for this puzzle.

The focus of examining a store's life cycle is on the average level of cash flow and its changes over time\footnote{In this paper, we do not distinguish the two words: the cash flow and the revenue.}. Due to the necessary variable costs, such as costs of raw materials, rents, and salary, faced by a store, low levels of cash flow can lead to store closures. The generation of cash flow usually requires customers to pass through the front of the store, walk into the store and purchase goods or services (with a certain probability). Our model is built around this observation, taking into account several major factors that may affect a store's cash flow. From a general perspective, we model the \textit{cash flow} of one store via the following three steps, among which we propose novel concepts and assumptions to precisely characterize our framework. 

First, we decompose the cash flow of a store into the product of the \textit{average customer price} (ACP), the \textit{conversion rate}, and the \textit{flow of potential customers}. Conversion rate is an important concept as it characterizes the probability of consumers consuming in a store.
\cite{perdikaki2012effect} define the conversion rate as the ratio of the number of transactions to that of the traffic and study the impact of traffic on cash flow. Furthermore, \textit{population structure} is introduced to classify consumers into several types with different preference (such as based on their age, occupation, and income) with each type of consumers having its own conversion rate for a store. The overall conversion rate is the weighted average of conversion rates of each type of consumers.

Second, we define \textit{style} as the attributes of a store and its products, taking into account both the style demand of consumers and the style supply of store owners. On one hand, we capture the style preference of consumers by using the Cobb-Douglas (CD) utility function in \cite{cobb1928theory} and the Berry-Levinsohn-Pakes (BLP) model in \cite{berry1995automobile}. The latter model was the first to propose the utility of different types of consumers concerning the product style. The model of a time-varying form is later taken into account in \cite{berry2014identification}. An important assumption is that consumers' preferences shift over time. On the other hand, we explore the optimization goals of store owners to derive their optimal strategies. We assume that ACP and style of a given store is stable over time due to the price stickiness and the technical stickiness. But for some stores their styles can be updated over time. As a result, we find a Nash equilibrium, where we derive the mathematical form of the conversion rate for each type of consumers. 

Third, we turn to the flow of potential customers. The modeling of it includes three points: the flow of foot traffic, distance decrease and visibility broadening. The foot traffic is assumed uniformly distributed around the store location with a constant density. Meanwhile, similar to \cite{lucas2002internal}, we assume a store's attractiveness exponentially decreases with distance increasing. Furthermore, few consumers know about a store during its initial opening period and its visibility is broadening by a certain speed from the opening. Additionally, spatial competition is also considered. Intensive spatial competition leads to an unexpected drop in the flow of potential customers. To better illustrate, the debut of a new same-brand store near the initial store often prompts consumer defection. In cases where the local demand cannot sustain both establishments, the closure of one of the stores becomes inevitable.

Finally, we obtain an equation of a store's cash flow over time. In particular, we derive that when consumer preferences are not dispersed, the cash flow of a store over time is a rainbow-shaped curve. The increase in the cash flow curve can be interpreted as an increase in potential customers due to the visibility broadening, while the decrease can be interpreted as a decrease in the conversion rate due to consumers' preferences shifting. We also use simulation method to illustrate some important implications. First, the simulation results indicate that (i) the visibility broadening speed mainly affects the store's upfront cash flow; (ii) the decrease coefficient mainly affects the length of the store's life cycle; and (iii) the foot traffic density mainly affects the macro-scale of cash flow over time. The results show that our model provides a strong explanation for the puzzle. As a conclusion, we provide a road map for our modeling framework in Figure \ref{motivating-example}.

\begin{figure}[htp]		\small
	\begin{center}
		\begin{tikzpicture}[every node/.style={block}]
			\node (topic) {{\bf Cash flow }$\text{CF}_t$};
			\node[below left=1cm and 3cm of topic](case one) {\bf Average customer price $\theta_t$};
			\node[below = 1cm of topic](case three) {Preferences};
            \node[below right = 1cm and 3cm of topic](case two){\bf flow of potential customers $N_t$};
            \node[below = 8.15cm of case one](PPri){Product price $\boldsymbol{\theta}^*(\boldsymbol{z})$};
   			\node[below = 3cm of case two](PFD) {Foot traffic density $u$};
			\node[below right = 1cm and 0.01cm of case two](CF) {Attraction $\boldsymbol{q}_t$};
       		\node[right = 0.01cm of CF](SCom){Spatial competition};
            \node[right = 0.01cm of PFD](VS){Visibility broadening $k$};
            \node[right = 0.01cm of VS](AD){Attractiveness decrease $e^{-\delta\Vert\mathbf{x}\Vert}$};
			\node[below left = 1cm and 0.01cm of case three](CPS){Consumer population structure $\mathbf{P}$};
			\node[below = 0.9cm of CPS](SU){Score $F$, Utility $U$};
            \node[below right = 1cm and 0.01cm of case three](Store){Initial investment $I$};
			\node[below = 1.5cm of Store](IIS) { \textbf{Store style:} Product and storefront attributes $\mathbf{z} =(\mathbf{x}, \boldsymbol{\xi})$ };
			\node[below left = 1cm and 0.01cm of IIS](Eq){\bf Equilibrium};
			\node[below = 1cm of Eq](PP){Purchase probability density $\rho^*$};
			\node[below = 3.3cm of IIS](PSFA){Style set $\mathbb{Z}^*_t$};
			\node[below = 4cm of Eq](LC){\bf Consumer conversion rate $\tilde{\beta}_t$};
			\node[below = 6cm of Eq](CCR){\bf Decrease $\nu$};
			\node[below = 1cm of CCR](PS){\bf Life cycle $T$};
			
			\foreach \x/\y in {
				topic/case one,
				topic/case two,
				topic/case three,
				case one/PPri,
				case two/PFD,
				case two/CF,
				case two/SCom,
				case three/CPS,
				case three/Store,
                CF/VS,
                CF/AD,
				CPS/SU,
                Store/IIS,
                SU/Eq,
                IIS/Eq,
                Eq/PPri,
                Eq/PP,
                Eq/PSFA,
                PPri/LC,
                PP/LC,
                PSFA/LC,
                LC/CCR,
                CCR/PS}
			\draw [line] (\x) -- (\y);
		\end{tikzpicture}
	\end{center}
	\caption{A stylized roadmap for the modeling of a consumer store}
	\label{motivating-example}
\end{figure}
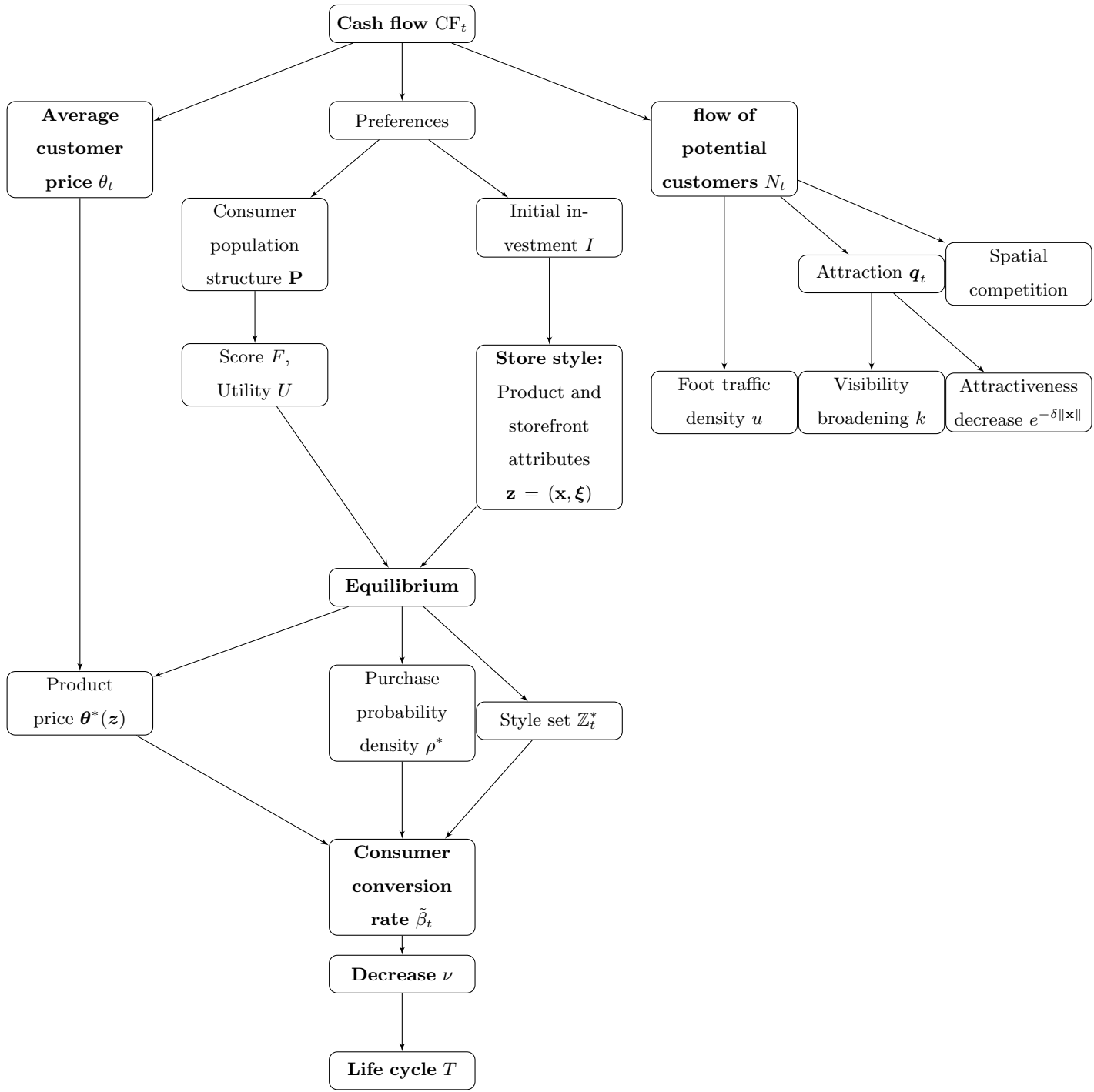

In our empirical study, we use the real data from three Chinese stores from three industries in the real economy: the retail industry, the restaurant industry, and the service industry. We conduct a Non-linear Least Squares regression (NLS) of the adjusted cash flow on the time scale to calibrate the visibility broadening speed, the initial conversion rate, and the decrease coefficient. The regression results uncover that the estimates are all statistically significant at a 1\% confidence level. Finally, we plot the real data and the fitted data to show that our model well captures the trend of the store's cash flow, which strongly justifies the explanation of our model.


Our model incorporates several key ingredients of interest in literature, including product specialization, customers' preferences, and purchase conversion rate. 
\cite{burdett1993equilibrium} establish the equilibrium for dispersed prices under certain cases. 
\cite{menzio2023optimal} discusses the relationship between product specialization to customers' preferences, search friction decrease, and price dispersion. \cite{menzio2023optimal} finds that the increase in product specialization helps sellers cater to customers' preference shifting, which leads to a higher store profit and also a higher customer surplus. The search friction can be seen as the visibility broadening in our paper. \cite{kashyap1995sticky} uses real data to illustrate the effects of the size, frequency, and synchronization of price changes for twelve selected retail goods. 
In the field of marketing science, \cite{lam2001evaluating} propose a framework to carefully classify the conversion rate into three effects of attraction (store-entry), conversion (purchase decision), and spending (purchase amount). They formulate a joint model of four simultaneous equations using front traffic, store traffic, number of store transactions, and store sales as the endogenous variables.
\cite{zimmermann2023developing} study a digital retailer and adopt a conversion rate optimization framework to increase sales.

The paper is organized as follows. Section \ref{Model Environments} proposes the model environment of our paper. The demand of consumers and the supply of stores are well defined in Section \ref{Part I: Style}. Section \ref{Part II: Equilibrium} defines the concept of preference shifting and derives a Nash equilibrium. The spatial competition is described and we close the model in Section \ref{Part III: Potential customers}. The full model is presented in Section \ref{Part IV: Full model of cash flow}. We provide empirical evidence in Section \ref{Empirical Evidence}. Section \ref{Additional Discussions} provides further discussions. Section \ref{Conclusion} concludes.

\section{Model Environments}
\label{Model Environments}

\textbf{Store.} We study the typical consumer store in this paper. A (consumer) store is a type of micro-enterprise, usually possessing the following features: (i) Initial monetary investment is small. For example, it costs the owner around 400,000 Chinese yuan to invest in a \textit{luckin coffee} store in Beijing, China. (ii) Routine expenditures include rent and salary, which form a surprisingly stable ratio to daily income. For instance, the ratio of daily rent to daily cash flow is around 6.17\% for a typical Chinese convenience store in 2022, according to the annual report of KPMG and China Chain-Store \& Franchise Association (CCFA). (iii) The store has a high-frequency cash flow. For example, a \textit{Family Mart} store typically has 1,000 orders (purchases) per day. (iv) The store caters to people's daily demands, which encompasses the restaurant, retail, and service industries. (v) Most consumers of stores purchase goods and services in-store, especially the stores in the service industry. Therefore, the location of the store impacts their financial performance. (vi) There exists perfect competition (over 100 million stores in China) in each industry. In short, our focus is not on the conventional company, but on the store, which is usually the following three types: direct-sale store, franchised outlet, and 'mom-and-pop store'.

\textbf{Cash flow decomposition.} In this paper, we fix one store on time set $\mathcal{T}.$ The cash flow $\text{CF}$ is a non-negative function on $\mathcal{T}.$ 
Then, we model the cash flow at time $t$ by a product of the average customer price (ACP) $\theta_t$, the flow of potential customers $N_t$ and the conversion rate $\beta_t$, which can be formulated as 
\begin{equation}
	\text{CF}_t = \theta_t\cdot N_t \cdot \beta_t.
\end{equation}
This decomposition of the store's cash flow is inspired by \cite{lam2001evaluating} and \cite{perdikaki2012effect}. 

In some cases, the customer flow of a store is bounded by actual conditions (usually in the form of $N_t \cdot \beta_t\leq n_{\text{max}}$). To eliminate this effect, the theoretical conversion rate, denoted as $\tilde{\beta}_t,$ is defined as the conversion rate without this actual limitation.
 We model for $\tilde{\beta}_t$ instead of ${\beta}_t,$ due to the fact that the theoretical conversion rate can be understood as the probability of consumers purchasing products. We do not consider the situation where the flow of customers reaches an upper bound, i.e. letting
\begin{equation}
	\tilde{\beta}_t = {\beta}_t.
\end{equation}

\textbf{Attributes and Styles.} Consider an economy corresponding to a consumer industry on time set $\mathcal{T}=\mathbb{R}.$ At any time $t\in\mathcal{T}$, there are new stores opened with different initial investments. Let the set of all initial investments be $\mathcal{I}\subseteq\mathbb{R}^+,$ which is a bounded set that does not change over time. Therefore, stores are infinitely numerous in this economy. The style of a store is the combination of its products and storefront. We assume that a store provides one representative product (commodity), characterized by a constant price $\theta$ and product attributes $\boldsymbol{x}\in\mathbb{R}^p$, which reflect the features of product size, taste, shape, etc. Similar assumption is mentioned or implied in \cite{chamberlin1938theory}, \cite{dixit1977monopolistic} and \cite{romer1986increasing}. Moreover, we denote the storefront attributes by $\boldsymbol{\xi}\in\mathbb{R}^q$, which reflect the implicit influencing factors of store service, decoration, reputation, etc. We define the style vector for a store $\boldsymbol{z} = (\boldsymbol{x}',\boldsymbol{\xi}')',$ and $\mathbb{R}^{p+q}$ is the entire style space. The bounded set of all styles provided in the industry at time $t$ is denoted as $\mathbb{Z}_t.$ It depends on the styles of stores opened before time $t$ and their lifespans. 

There are $m$ types of consumers who have different preferences for style. Their population structure is notated as $\mathbf{P}=\left(\mathbf{P}^{(1)},\mathbf{P}^{(2)},\ldots,\mathbf{P}^{(m)}\right)$ with $\mathbf{P}^{(j)}$ representing the proportion of the $j$th type of  consumers. Consumers will purchase products in different styles of stores with different probabilities. 
A consumer of type $j$th will assign a bounded continuous purchase probability density on $\mathbb{Z}_t$ at time $t$ and we denote this density function as $\rho_{jt}({\boldsymbol{z}}).$ We show in \ref{append1} the details about the definition of $\rho_{jt}$ and the integrals with respect to it. Furthermore, we assume that styles with small differences may be indistinguishable. $\epsilon$ represents the scale of indistinguishable difference. $\Gamma_{\epsilon}(\mathbb{Z}_t)$ is a $\epsilon$-equipartition of $\mathbb{Z}_t,$ which should satisfy that 
\begin{equation}
    \bigcup_{\Delta \in \Gamma_{\epsilon}(\mathbb{Z}_t)} \Delta = \mathbb{Z}_t,\ \ \bigcap_{\Delta \in \Gamma_{\epsilon}(\mathbb{Z}_t)} \Delta = \Phi, \end{equation}
\begin{equation}
    \sup\{\Vert\boldsymbol{z}_1-\boldsymbol{z}_2\Vert:\boldsymbol{z}_1,\boldsymbol{z}_2\in\Delta\}\leq \epsilon,\ \forall \Delta \in \Gamma_{\epsilon}(\mathbb{Z}_t),
\end{equation}
and $|\Delta|
$ is constant for all $\Delta\in \Gamma_{\epsilon}(\mathbb{Z}_t),$
where $\Phi$ is the empty set. 
Denote the subset in which style $\boldsymbol{z}$ is located as $\Delta_t(\boldsymbol{z},\epsilon) \in \Gamma_{\epsilon}(\mathbb{Z}_t).$ 
Given style ${\boldsymbol{z}}$ without analogs, a $j$th type consumer will purchase with a probability of $
\tilde{\beta}_{jt}=\int_{\Delta_t(\boldsymbol{z},\epsilon)} \rho_{jt}({\boldsymbol{z}_1})\mathrm{d}{\boldsymbol{z}_1}.
$
Thus, the theoretical conversion rate for a store with style $\boldsymbol{z}$ at time $t$ can be calculated as
\begin{equation}
\tilde{\beta}_{t} = 
\sum_{j=1}^m\mathbf{P}^{(m)}\int_{\Delta_t(\boldsymbol{z},\epsilon)} \rho_{jt}({\boldsymbol{z}_1})\mathrm{d}{\boldsymbol{z}_1} =
	\int_{\Delta_t(\boldsymbol{z},\epsilon)} \sum_{j=1}^m\mathbf{P}^{(m)}\rho_{jt}({\boldsymbol{z}_1})\mathrm{d}{\boldsymbol{z}_1}.
\end{equation}
  If the styles of two stores satisfy $\Delta_t(\boldsymbol{z}_1,\epsilon) = \Delta_t(\boldsymbol{z}_2,\epsilon)$, then $\boldsymbol{z}_1$ and $\boldsymbol{z}_2$ are similar styles and a given consumer can only be a potential customer of one of the stores. There is spatial competition between these two similarly styled stores, and this is explored in more depth in Section \ref{Part III: Potential customers} and modeled as a reduction in potential customers.
 \begin{figure}[h]
	\centering
	\includegraphics[width = 0.75\textwidth]{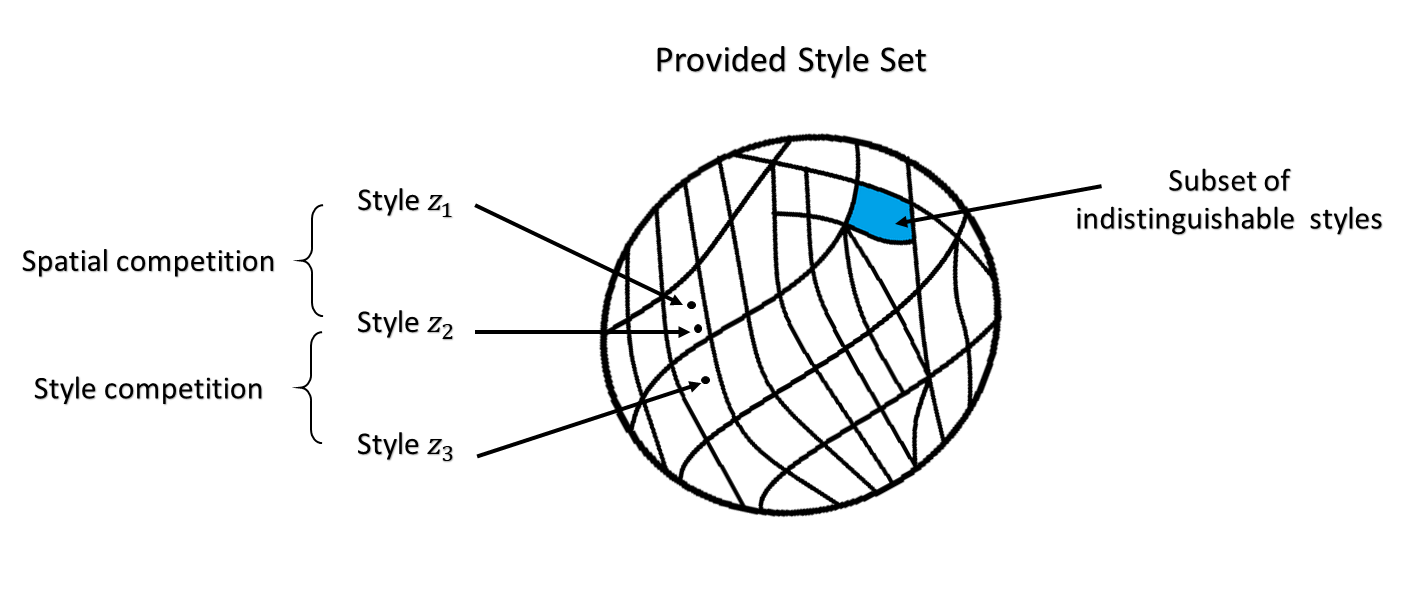}
	\caption{Equipartition on Style Space}   
	\label{stylespace}
\end{figure}
 
\section{Part I: Store style}
\label{Part I: Style}
\subsection{Style demand}
First, we are going to discuss the utility function, the strategy, and the demand of consumers. We start by defining the scoring function of type $j$ consumers for style. 
Similar to \cite{berry2014identification}, we use a function of consumer type, time, and attributes but remove the randomness, i.e. the $f$ function below is deterministic:
\begin{align}
	F_{jt}(\boldsymbol{z},\theta) &= f(\boldsymbol{\zeta}_{jt},\boldsymbol{x},\boldsymbol{\xi},\theta) \\
	\label{scoring}
	&=\phi\left(\boldsymbol{a}_j'\cdot(\boldsymbol{x}-\bar{\boldsymbol{x}}_t )+\boldsymbol{b}_j'\cdot(\boldsymbol{\xi}-\bar{\boldsymbol{\xi}}_t) - \frac{\lambda_j}{2}\left(\boldsymbol{x}'-\bar{\boldsymbol{x}}_t',\boldsymbol{\xi}'-\bar{\boldsymbol{\xi}}_t'\right)\cdot\left(\begin{array}{c}
		\boldsymbol{x}-\bar{\boldsymbol{x}}_t  \\
		\boldsymbol{\xi}-\bar{\boldsymbol{\xi}}_t 
	\end{array}\right)\right)\cdot e^{-\gamma_j\theta},
\end{align}
where $\theta$ is the average price required to purchase in the store and $\boldsymbol{\zeta}_{jt}=(\boldsymbol{a}_j,\boldsymbol{b}_j,\lambda_j,\gamma_j,\bar{\boldsymbol{x}}_t,\bar{\boldsymbol{\xi}}_t)$ is a fixed vector. $(\bar{\boldsymbol{x}}_t,\bar{\boldsymbol{\xi}}_t)$ is a certain traditional preference style and $\lambda_j,\gamma_j$ are the aversion coefficients. We call $\phi$ the transform function which is a smooth function on 
$\mathbb{U} = \left(-\infty,C\right]$ for some sufficiently large $C>0$ with
\begin{itemize}
	\item[(1)] $\phi(u)>0$ for $u \in \mathbb{U},$ 
	\item[(2)] $\phi'(u)>0$ for $u \in \mathbb{U},$ 
	\item[(3)]  $\int_{\mathbb{U}}\phi(u)\mathrm{d}u<\infty.$ 
\end{itemize}
The transform function maps elements in $\mathbb{U}$ to be positive and is strictly order-preserving. Some typical examples of transform functions are $e^{\kappa u}(\kappa>0)$ and $\ (C+C_{\epsilon} - u)^{-\alpha}(\alpha>1,C_{\epsilon}>0)$. (\ref{scoring}) implies that the scoring function is a linear combination of style minus a penalty for deviation from tradition, transformed and multiplied by the attenuation of price. The following proposition illustrates that each type of consumers has an optimal preference style and that scoring decreases with distance from the optimal preference style.

\begin{proposition}
	At time $t,$ there exists a unique optimal preference style for the $j$-th type of consumer\begin{equation}
		\label{z}
		\hat{\boldsymbol{z}}_{j,t} =\left(\begin{array}{c} \hat{\boldsymbol{x}}_{j,t}\\
			\hat{\boldsymbol{\xi}}_{j,t}
		\end{array}\right) =\left(\begin{array}{c}\frac{1}{\lambda_j} \boldsymbol{a}_j + \bar{\boldsymbol{x}}_t \\
			\frac{1}{\lambda_j} \boldsymbol{b}_j + \bar{\boldsymbol{\xi}}_t
		\end{array}\right),
	\end{equation} such that $F_{jt}(\hat{\boldsymbol{z}}_{j,t},\theta)\geq F_{jt}(\boldsymbol{z},\theta)$ for any ${\boldsymbol{z}}\in\mathbb{R}^{p+q}$ and $\theta >0.$ The scoring function can be rewritten as
	\begin{equation}
		\label{Score_simple}
		F_{jt}(\boldsymbol{z},\theta) = \phi\left(C_j - \frac{\lambda_j}{2}\left\Vert\boldsymbol{z}-\hat{\boldsymbol{z}}_{j,t} \right\Vert^2\right) \cdot e^{- \gamma_j \theta}.
	\end{equation}where $C_j$ is a constant of $t.$
\end{proposition}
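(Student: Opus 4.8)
The plan is to reduce the vector optimization to a scalar one using monotonicity of $\phi$, and then complete the square. Since the price factor $e^{-\gamma_j\theta}$ is strictly positive and constant in $\boldsymbol{z}$, and since $\phi$ is strictly order-preserving on its domain by property (2), maximizing $F_{jt}(\cdot,\theta)$ over $\boldsymbol{z}\in\mathbb{R}^{p+q}$ is equivalent to maximizing the inner argument of $\phi$. Writing the block vectors $\boldsymbol{c}_j=(\boldsymbol{a}_j',\boldsymbol{b}_j')'$ and $\bar{\boldsymbol{z}}_t=(\bar{\boldsymbol{x}}_t',\bar{\boldsymbol{\xi}}_t')'$, this argument is the scalar quadratic
\[
g_j(\boldsymbol{z}) = \boldsymbol{c}_j'(\boldsymbol{z}-\bar{\boldsymbol{z}}_t) - \frac{\lambda_j}{2}\Vert\boldsymbol{z}-\bar{\boldsymbol{z}}_t\Vert^2.
\]

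First I would complete the square. Setting $\boldsymbol{w}=\boldsymbol{z}-\bar{\boldsymbol{z}}_t$ and using $\lambda_j>0$, one obtains
\[
g_j = -\frac{\lambda_j}{2}\left\Vert\boldsymbol{w}-\tfrac{1}{\lambda_j}\boldsymbol{c}_j\right\Vert^2 + \frac{1}{2\lambda_j}\Vert\boldsymbol{c}_j\Vert^2,
\]
so that with $C_j:=\frac{1}{2\lambda_j}\Vert\boldsymbol{c}_j\Vert^2=\frac{1}{2\lambda_j}\bigl(\Vert\boldsymbol{a}_j\Vert^2+\Vert\boldsymbol{b}_j\Vert^2\bigr)$ — which involves only the $t$-independent data $\boldsymbol{a}_j,\boldsymbol{b}_j,\lambda_j$ and is therefore constant in $t$ — we may read off the maximizer $\hat{\boldsymbol{z}}_{j,t}=\bar{\boldsymbol{z}}_t+\frac{1}{\lambda_j}\boldsymbol{c}_j$, which is exactly \eqref{z}, and rewrite $g_j(\boldsymbol{z})=C_j-\frac{\lambda_j}{2}\Vert\boldsymbol{z}-\hat{\boldsymbol{z}}_{j,t}\Vert^2$. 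Feeding this back through $\phi$ and the price factor yields the reduced form \eqref{Score_simple}.

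For existence and uniqueness, the Hessian of $g_j$ is $-\lambda_j I_{p+q}$, which is negative definite because $\lambda_j>0$; hence $g_j$ is strictly concave with the unique global maximizer $\hat{\boldsymbol{z}}_{j,t}$. Since $\phi$ is strictly increasing, the strict ordering of $g_j$-values transfers to $F_{jt}$, so $\hat{\boldsymbol{z}}_{j,t}$ is the unique maximizer of $F_{jt}(\cdot,\theta)$ and $F_{jt}(\hat{\boldsymbol{z}}_{j,t},\theta)\geq F_{jt}(\boldsymbol{z},\theta)$ holds for every $\theta>0$, the inequality being strict off $\hat{\boldsymbol{z}}_{j,t}$.

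The one point requiring care — which I would flag rather than treat as a genuine obstacle — is that \eqref{Score_simple} is only well posed if the argument $g_j(\boldsymbol{z})$ remains inside the domain $\mathbb{U}=(-\infty,C]$ of $\phi$. Because $g_j(\boldsymbol{z})\le C_j$ with equality only at $\hat{\boldsymbol{z}}_{j,t}$, the argument never exceeds $C_j$, so the representation is valid precisely when $C_j\le C$. This is exactly what the hypothesis that $C$ be "sufficiently large" supplies; I would make the dependence explicit by taking $C\ge\max_j C_j$. Everything else is elementary algebra, and I expect no difficulty beyond careful bookkeeping of the stacked $(\boldsymbol{x},\boldsymbol{\xi})$ block notation.
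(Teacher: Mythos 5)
Your proof is correct and follows essentially the same route as the paper's: complete the square in the argument of $\phi$ to identify $C_j=\frac{1}{2\lambda_j}(\boldsymbol{a}_j'\boldsymbol{a}_j+\boldsymbol{b}_j'\boldsymbol{b}_j)$ and the maximizer $\hat{\boldsymbol{z}}_{j,t}$, then use monotonicity of $\phi$ and positivity of $e^{-\gamma_j\theta}$ to transfer the bound. Your additional remarks on strict concavity for uniqueness and on requiring $C\ge\max_j C_j$ so the argument stays in $\mathbb{U}$ are sound refinements that the paper leaves implicit.
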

\begin{proof}
    Notice that
    \begin{equation}
        F_{jt}(\boldsymbol{z},\theta) = \phi\left( \frac{1}{2\lambda_j}\boldsymbol{a}_j'\boldsymbol{a}_j + \frac{1}{2\lambda_j}\boldsymbol{b}_j'\boldsymbol{b}_j- \frac{\lambda_j}{2}\left\Vert\boldsymbol{z}-\hat{\boldsymbol{z}}_{j,t} \right\Vert^2\right) \cdot e^{- \gamma_j \theta}.
    \end{equation}
    Let $C_j=\frac{1}{2\lambda_j}\boldsymbol{a}_j'\boldsymbol{a}_j + \frac{1}{2\lambda_j}\boldsymbol{b}_j'\boldsymbol{b}_j$ and we have (\ref{Score_simple}). Thus, for any $\boldsymbol{z} \in \mathbb{R}^{p+q}$ we have
    \begin{equation}
        F_{jt}({\boldsymbol{z}},\theta) \leq\phi( C_j) e^{-\gamma_j \theta} = F_{jt}(\hat{{\boldsymbol{z}}}_{jt},\theta).
    \end{equation}
\end{proof}
Then we turn to the utility functions. The continuous form of the Cobb-Douglas utility function used in this paper is inspired by the pioneering analysis of factor combination efficiency by \cite{cobb1928theory}.
\begin{assumption}[Cobb-Douglas utility]
	\label{uf}
	The consumers have utility functions formulated as 
	\begin{align}
		U_{jt}(\rho)&= \int_{\mathbb{Z}_t} \ln\left(\rho({\boldsymbol{z}})^{F_{jt}({\boldsymbol{z}},\theta({\boldsymbol{z}}))} \right) \mathrm{d} {\boldsymbol{z}}\\
		&= \int_{\mathbb{Z}_t} F_{jt}({\boldsymbol{z}},\theta({\boldsymbol{z}}))\ln(\rho({\boldsymbol{z}})) \mathrm{d} {\boldsymbol{z}},
	\end{align}
	where $\rho$ is a purchase probability density.
\end{assumption}
Under Assumption \ref{uf}, The $j$-th type consumers' optimization goal is
\begin{align}
	\label{opt1}
	\max_{\rho} \quad &\int_{\mathbb{Z}_t} F_{jt}({\boldsymbol{z}},\theta({\boldsymbol{z}}))\ln(\rho({\boldsymbol{z}})) \mathrm{d} {\boldsymbol{z}}\\
	&\text{s.t.} \int_{\mathbb{Z}_t} \rho({\boldsymbol{z}}) \mathrm{d} {\boldsymbol{z}} = 1.\notag
\end{align}
We solve Problem \eqref{opt1} and obtain each type of consumers' demand, that is, the optimal purchase probability density given $\mathbb{Z}_t$. We denote it as $\rho^{(d)}_{jt}({\boldsymbol{z}})$ and we get 
\begin{equation}
	\rho^{(d)}_{jt}({\boldsymbol{z}}) = \frac{F_{jt}({\boldsymbol{z}},\theta({\boldsymbol{z}}))}{\int_{\mathbb{Z}_t} {F_{jt}({\boldsymbol{z}_1},\theta({\boldsymbol{z}}_1))} \mathrm{d} {\boldsymbol{z}}_1}\cdot\mathds{1}_{\{\boldsymbol{z}\in \mathbb{Z}_t\}},
\end{equation}
almost everywhere, where the convergence of the integral is guaranteed by the boundedness of $\mathbb{Z}_t.$ We can see that the purchase probability density is proportional to the scoring function, 
and so we can write it as 
\begin{equation}
    \label{interest}
	\rho^{(d)}_{jt}({\boldsymbol{z}}) = K_{jt} F_{jt}({\boldsymbol{z}},\theta({\boldsymbol{z}}))\cdot\mathds{1}_{\{\boldsymbol{z}\in \mathbb{Z}_t\}},
\end{equation}
where $K_{jt}$ is the level of purchase probability. In order to clarify some properties that will be mentioned later, we introduce several families of the generalized probability density of the form
\begin{equation}
    \mathscr{P}_{jt} = \left\{\rho(\boldsymbol{z}|\theta)=K\cdot \phi\left(C_j - \frac{\lambda_j}{2}\left\Vert\boldsymbol{z}-\hat{\boldsymbol{z}}_{j,t} \right\Vert^2\right)\cdot e^{ -\gamma_j \theta}:K>0\right\}.
\end{equation}
The elements in $\mathscr{P}_{jt}$ are functions on $\mathbb{R}^{p+q}$. Since they may not be normalized, they are not necessarily true probability density functions. An element in $\mathscr{P}_{jt}$ is a certain multiple of a purchase probability density on the entire style space. Later we will occasionally use functions of this form posing as purchase probability densities to illustrate some properties. However, it will be strictly guaranteed that the purchase probability densities are of the form (\ref{interest}) and are normalized in the equilibrium in style we finally derive.

\subsection{Style supply} Next, we will discuss the strategy and the supply of store owners. In our model, the store owner opens at time $t$ and needs to decide on the style ${\boldsymbol{z}}$ and the price $\theta,$ which cannot be changed afterward. \cite{kashyap1995sticky} shows that the change in retail prices is very slow. \cite{bils2004some} provides evidence that the prices of services like haircuts for males and beauty parlors for females change only every 20 months or more. So we first assume that price and style cannot be changed after opening for most of the stores. In Section \ref{chasing} we will consider situations where the style and price of particular stores can be changed over time. Moreover, storefront attributes are related to the initial investment and we consider the initial investment as exogenous variables. The store owner needs to satisfy constraint $g_I(\boldsymbol{\xi})\leq I,$ where $I$ is a fixed initial investment for a given store owner. $g_I(\cdot)$ is a smooth convex cost function. Denote that $G_I(\boldsymbol{z}) =G_I\left(\boldsymbol{x},\boldsymbol{\xi}\right) =\boldsymbol{x}'\cdot\boldsymbol{0}+g_I(\boldsymbol{\xi})$ and the constraint is also denoted as $G_I(\boldsymbol{z})\leq I$. Furthermore, we simply assume that this constraint is tight for all types of consumers, i.e. $g_I\left(\hat{\boldsymbol{\xi}}_{j,t}\right)>I,$ for $j=1,2,\ldots,m,$ $t>0$ and $I\in\mathcal{I}.$ To maximize the cash flow at time $t,$ the store owner's optimization goal is written as,
\begin{align}
	\max_{{\boldsymbol{z}},\theta} \quad &\theta\sum_{j=1}^m \mathbf{P}^{(j)}\int_{\Delta_t(\boldsymbol{z},\epsilon)} \rho_{jt}({\boldsymbol{z}_1}|\theta)\mathrm{d}{\boldsymbol{z}_1}\\
	&\text{s.t.\quad} G_I(\boldsymbol{z})\leq I.\notag
\end{align}
For the sake of simplicity, we assume $\epsilon$ is small and the partition is unobservable so that the store owner focuses on the consumers' purchase density instead of the purchase probability, but to exclude those isolated style points. The optimization goal is 
\begin{align}
	\label{opt0}
	\max_{{\boldsymbol{z}},\theta} \quad &\theta\sum_{j=1}^m \mathbf{P}^{(j)} \rho_{jt}({\boldsymbol{z}}|\theta)\\
	&\text{s.t.\quad} G_I(\boldsymbol{z})\leq I, \notag \\
   & \qquad \ (\boldsymbol{z},\theta) \in C_t \notag,
\end{align}
where 
\begin{equation}
    C_t = \left\{(\boldsymbol{z},\theta):\exists\delta_1>0,\sum_{j=1}^m \mathbf{P}^{(j)}\int_{\Vert\boldsymbol{z}_1-\boldsymbol{z}\Vert\leq \delta_1} \rho_{jt}({\boldsymbol{z}_1}|\theta)\mathrm{d}\boldsymbol{z}_1>0\right\}
\end{equation}
is the set of non-isolated points of the probability density domain. This implies that only styles that form a scale will lead to effective purchasing behavior among consumers, otherwise the purchase probability corresponding to one or several isolated styles is $0.$ When the density is non-zero on the full space, for example $\rho_{jt} \in \mathscr{P}_{jt}\ (j=1,2,\ldots,m),$ then $C_t = \mathbb{R}^{p+q}\times\mathbb{R}^+$ and the second constraint can be removed.
We call the objective in Problem \eqref{opt0} the cash flow density.
If the solution of Problem \eqref{opt0} exists and is unique, we denote it as $\left(\boldsymbol{z}_t^{(s)}(I),\theta\left(\boldsymbol{z}_t^{(s)}(I)\right)\right).$ When the cash flow density drops to a certain threshold $r(I),$ the store will shut down. Given $I,$ we define the longest lifespan of a store that exists at time $t$ as
\begin{equation}
	T(t,I) = \sup\left\{\tau>0\bigg| \theta\left({\boldsymbol{z}}_{t-\tau}^{(s)}(I)\right)\sum_{j=1}^m\mathbf{P}^{(j)}\rho_{jt}\left({\boldsymbol{z}}_{t-\tau}^{(s)}(I)\right)\geq r(I)\right\}.
\end{equation}
Thus, the provided style set at time $t$ is
\begin{equation}
	\mathbb{Z}^{(s)}_t = \left\{\boldsymbol{z}_{\tau}^{(s)}(I)| I\in \mathcal{I}, t-T(t,I)\leq\tau\leq t \right\}.
\end{equation}

\subsection{Optimal strategy uniqueness for supply}
\label{unqiue}
Problem \eqref{opt0} can be a non-convex optimization problem, so the solution to this optimization equation is not necessarily unique and even one style may correspond to multiple optimal prices. It is natural to assume that for a given style, only the infimum of the optimal prices will occur, otherwise, consumers would be perfectly capable of choosing a lower-priced store of the same style. Thus, in solving Problem \eqref{opt0}, we obtain the set of optimal styles 
\begin{equation}
	\mathscr{Z}_t(I) = \left\{{\boldsymbol{z}}_{t,i}^{(s)}(I):i\in \Lambda_t(I)\right\},
\end{equation}
and corresponding prices 
\begin{equation}
	\Theta_t(I) = \left\{\theta\left({\boldsymbol{z}}\right):{\boldsymbol{z}} \in \mathscr{Z}_t(I)\right\},
\end{equation}
of the store that opens at time $t$ with initial investment $I,$ where $\Lambda_t(I)$ is the set of indicators. These can be considered the supply of the store owner. 

\begin{example}
\label{ex1}
	Consider an illustrative example in the restaurant industry, where there are young and old people in an economy, $\frac13$ and $\frac23$ of the population, respectively. At time $0,$ the young like $3$g of salt in their food best, while the old like $1$g of salt best, and they both like it best when there are $5$ lights at the entrance of the restaurant, which costs \$$10$ each. Naturally, their generalized purchase probability densities are \begin{equation}
	    \rho_{10}(x,\xi|\theta) = K_{10}\exp\left(1-(x-3)^2-\left(\xi-5\right)^2-\theta\right),
	\end{equation} and \begin{equation}
	    \rho_{20}(x,\xi|\theta) = K_{20}\exp\left(1-(x-1)^2-\left(\xi-5\right)^2-\theta\right),
	\end{equation} respectively, where we use $\exp(\cdot)$ as the transform function.
 A new store would require an initial investment of \$$20.$ Obviously, the optimal strategy for the store includes $\xi_0^{(s)}(20) = 2$ and $\theta\left(\boldsymbol{z}_0^{(s)}(20)\right)=1.$ However, 
 \begin{equation}
     x_0^{(s)}(20) = \arg\max \left\{\frac13 K_{10}\exp\left(-(x-3)^2-9\right) + \frac23 K_{10}\exp\left(-(x-1)^2-9\right)\right\}
 \end{equation}
 is related to $K_{10}$ and $K_{20}.$ Figure \ref{eq2} shows the cash flow density of the store with respect to the amount of salt at different ratios for $K_{10}$ and $K_{20}$, respectively. 
\begin{figure}[h]

	\centering
	\begin{minipage}[b]{0.3\textwidth}
		\centering
		\includegraphics[width=\textwidth]{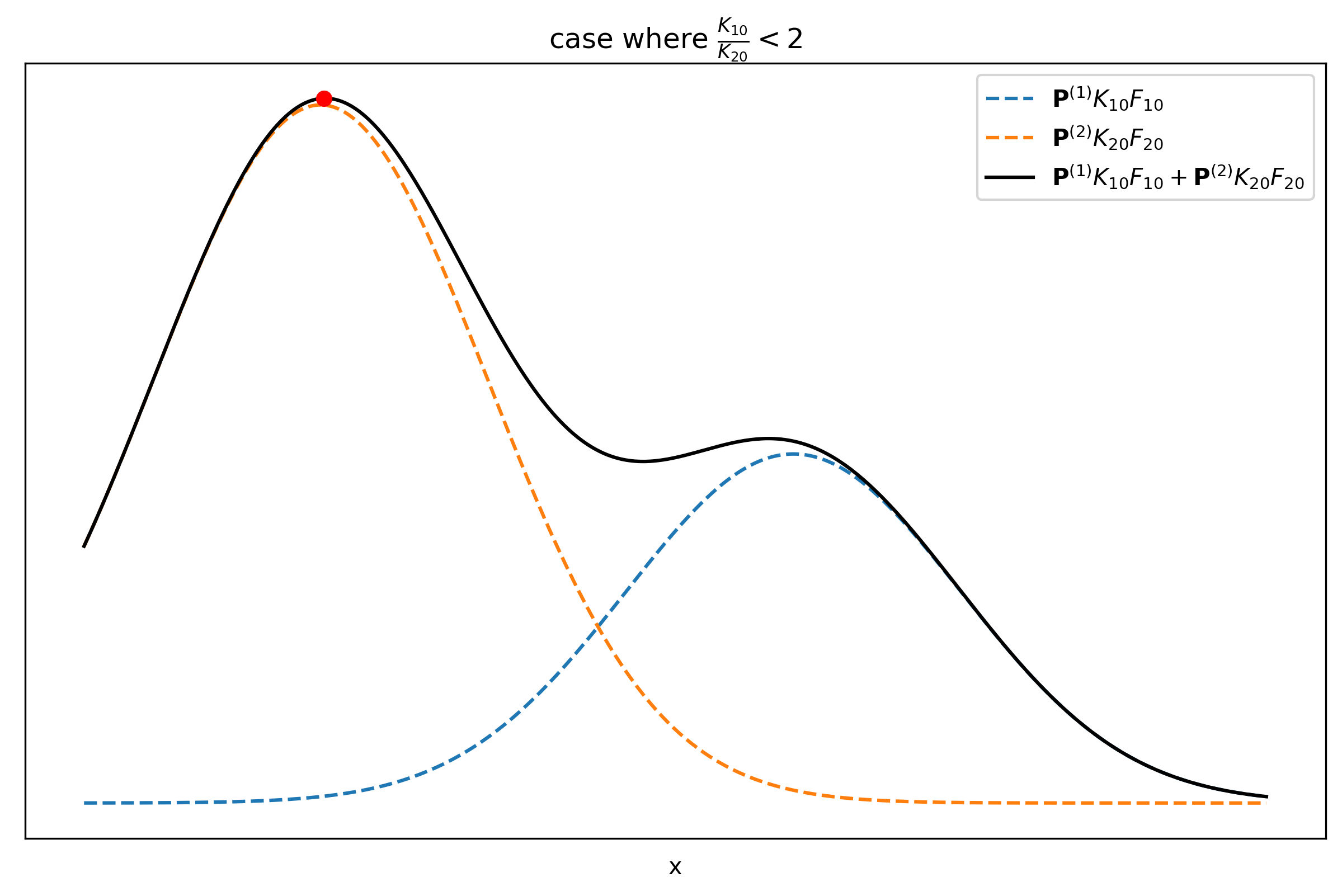}
		\raisebox{1.5ex}{(1)}
		\label{fig:image1}
	\end{minipage}
	\hfill 
	\begin{minipage}[b]{0.3\textwidth}
		\centering
		\includegraphics[width=\textwidth]{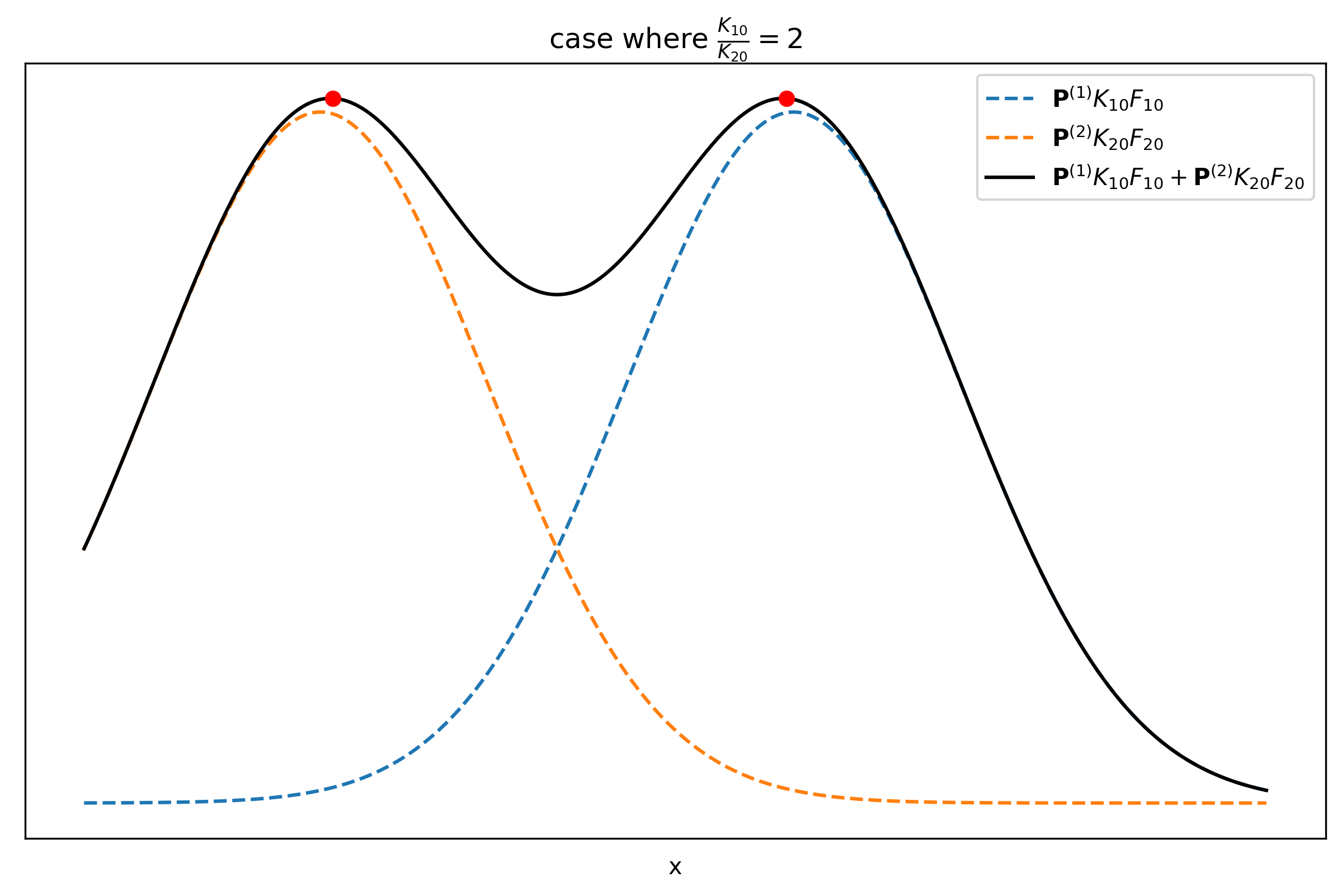}
		\raisebox{1.5ex}{(2)}
		\label{fig:image2}
	\end{minipage}
	\hfill 
	\begin{minipage}[b]{0.3\textwidth}
		\centering
		\includegraphics[width=\textwidth]{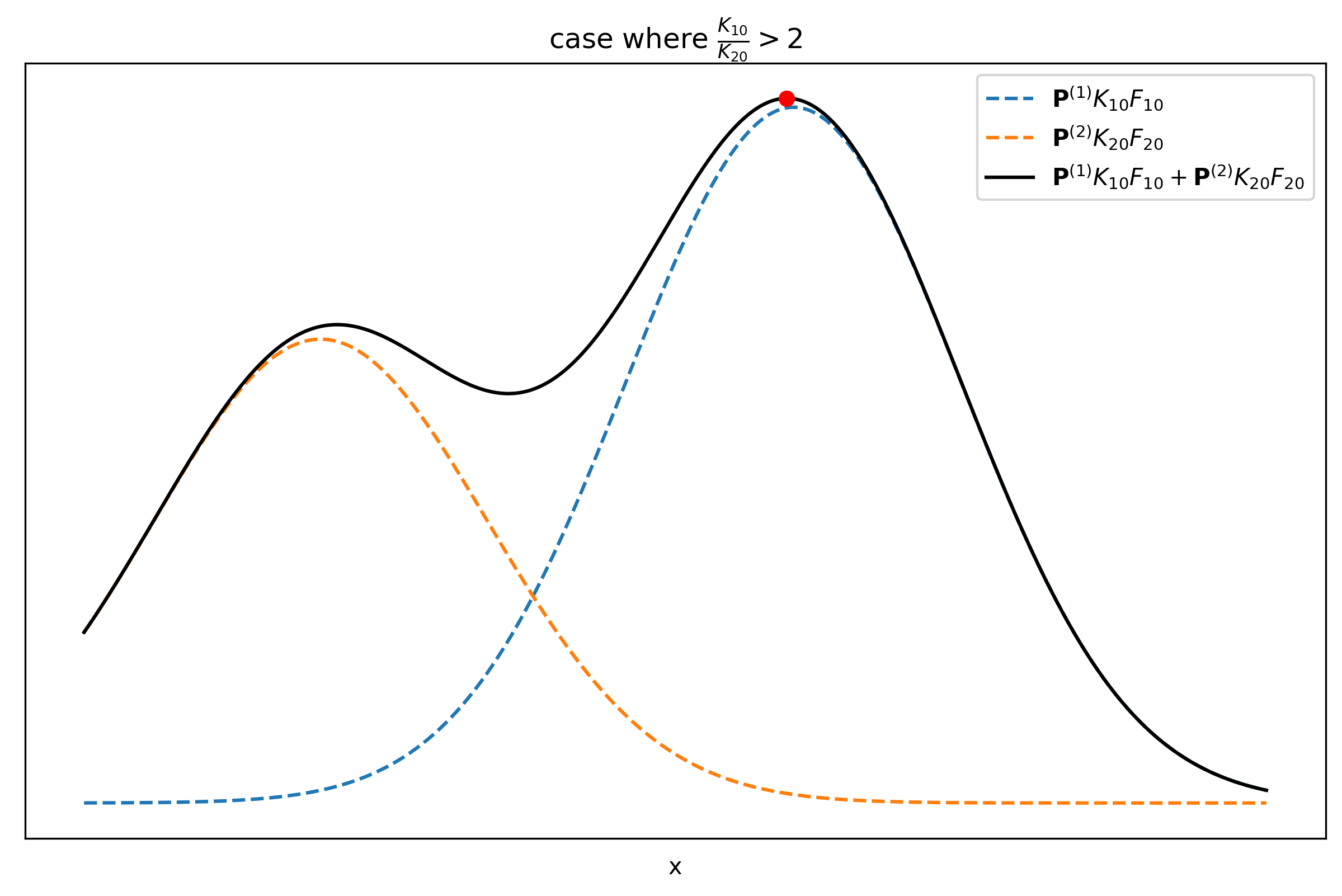}
		\raisebox{1.5ex}{(3)}
		\label{fig:image3}
	\end{minipage}
	\caption{Cash Flow Density with Different $\frac{K_{10}}{K_{20}}$}
	\label{eq2}
\end{figure}
When $\frac{K_{10}}{K_{20}}$ is less than $2$, the store favors a low-salt style, and greater than $2$, the opposite is true. However, when the ratio is $2$, the two consumers have the same power, and the store has two cash-flow-equivalent choices, which are $1.05$g and $2.97$g salt, respectively. Thus,
\begin{equation}
	\mathscr{Z}_0(I) = \left\{{\boldsymbol{z}}_{0,1}^{(s)}(I)=(1.05,2),{\boldsymbol{z}}_{0,2}^{(s)}(I)=(2.97,2)\right\}.
\end{equation}

\end{example}

For some extreme density (even if bounded), Problem \eqref{opt0} may be unsolvable, i.e. it is not possible to define the style that the store chooses. However, when $\rho_{jt}({\boldsymbol{z}}|\theta) \in \mathscr{P}_{jt},$ the following proposition guarantees the existence of the solution.
\begin{proposition}
\label{pro32}
	If $\rho_{jt}({\boldsymbol{z}}|\theta) \in \mathscr{P}_{jt},$ then $ \#(\mathscr{Z}_t(I))\geq 1.$
\end{proposition}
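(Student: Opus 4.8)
\section*{Proof proposal}

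The plan is to read Problem \eqref{opt0} as the maximization of a continuous, nonnegative cash-flow density that decays to zero toward every direction in which the feasible region is unbounded, and then to apply the Weierstrass theorem on a suitable compact set obtained from a tail estimate; any maximizing style then witnesses $\#(\mathscr{Z}_t(I))\ge 1$. First I would use the hypothesis $\rho_{jt}(\boldsymbol{z}|\theta)\in\mathscr{P}_{jt}$ to simplify: such a density is strictly positive on all of $\mathbb{R}^{p+q}\times\mathbb{R}^+$, so $C_t=\mathbb{R}^{p+q}\times\mathbb{R}^+$ and the constraint $(\boldsymbol{z},\theta)\in C_t$ is vacuous. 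Substituting the form \eqref{Score_simple}, the objective becomes
\[
  \Pi(\boldsymbol{z},\theta)=\theta\sum_{j=1}^m \mathbf{P}^{(j)} K_{jt}\,\phi\!\left(C_j-\tfrac{\lambda_j}{2}\|\boldsymbol{z}-\hat{\boldsymbol{z}}_{j,t}\|^2\right)e^{-\gamma_j\theta},
\]
to be maximized over the closed set $D=\{(\boldsymbol{z},\theta):g_I(\boldsymbol{\xi})\le I,\ \theta>0\}$, with $\boldsymbol{x}$ unconstrained (since $G_I$ ignores $\boldsymbol{x}$). Continuity of $\Pi$ on $D$ is immediate from the smoothness of $\phi$ and of the remaining elementary factors.

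Next I would record the two decay properties that tame the non-compact domain. In the price direction, because each $\gamma_j>0$, the factor $\theta e^{-\gamma_j\theta}$ is bounded on $\mathbb{R}^+$ and vanishes both as $\theta\to0^+$ and as $\theta\to\infty$. In the style direction, properties (1)--(3) of $\phi$ force $\phi(u)\to0$ as $u\to-\infty$, since a positive, increasing, integrable function on $(-\infty,C]$ cannot tend to a positive limit at $-\infty$; moreover the argument $C_j-\frac{\lambda_j}{2}\|\boldsymbol{z}-\hat{\boldsymbol{z}}_{j,t}\|^2$ never exceeds $C_j\le C$ (so it stays in $\mathbb{U}$) and tends to $-\infty$ as $\|\boldsymbol{z}\|\to\infty$. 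Hence every $\phi$-factor, and with it the $\boldsymbol{z}$-dependence of each summand, vanishes as $\|\boldsymbol{z}\|\to\infty$.

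Combining these, I would show that for each $\eta>0$ there is a box $\{\|\boldsymbol{z}\|\le R,\ \delta\le\theta\le R'\}$ outside of which $\Pi<\eta$: leaving $[\delta,R']$ makes every $\theta e^{-\gamma_j\theta}$ small, while leaving $\{\|\boldsymbol{z}\|\le R\}$ makes every $\phi$-factor small, and $\Pi$ is a finite sum of products of a small bounded factor with a bounded factor. Intersecting this box with $D$ yields a compact set $\mathcal{K}_\eta\subset D$. Since the proportions $\mathbf{P}^{(j)}$ sum to one, at least one is positive, so $M:=\sup_D\Pi>0$; choosing $\eta=M/2$ confines all near-optimal points to $\mathcal{K}_{M/2}$, on which the continuous $\Pi$ attains its supremum by Weierstrass. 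The maximizing style shows $\mathscr{Z}_t(I)\ne\varnothing$, i.e.\ $\#(\mathscr{Z}_t(I))\ge1$.

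I expect the main obstacles to be the two decay facts rather than the compactness reduction, which is routine once they are in hand. The cleanest subtlety is justifying $\phi(u)\to0$ at $-\infty$ purely from integrability and monotonicity; the genuinely essential point is the strict positivity of the $\gamma_j$, for if some price-aversion coefficient vanished, $\Pi$ would grow linearly in $\theta$, the supremum would be infinite, and the proposition would fail. I would therefore flag $\gamma_j>0$ explicitly as the hypothesis doing the real work in the price direction.
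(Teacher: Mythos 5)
Your proposal is correct and follows essentially the same route as the paper: both arguments rest on continuity of the cash-flow density together with its decay to zero as $\|\boldsymbol{z}\|\to\infty$ on the feasible set and as $\theta\to 0^+$ or $\theta\to\infty$ (the latter via $\theta e^{-\gamma_j\theta}\le \theta\phi(C_j)e^{-\gamma_j\theta}\to 0$), the only structural difference being that the paper maximizes sequentially (first over $\boldsymbol{z}$ for fixed $\theta$, then over $\theta$) while you compactify jointly and invoke Weierstrass once. Your explicit derivation of $\phi(u)\to 0$ as $u\to-\infty$ from positivity, monotonicity, and integrability is a welcome addition, since the paper uses this decay without justifying it.
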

\begin{proof}
    In \ref{append2}.
\end{proof}
Due to the presence of different types of consumers, even stores open in the same time and with the same initial investment are likely to be dispersed in terms of style and price, which is consistent with the reality. However, among these styles there is generally a mainstream, usually designed to target sales volume to specific types of consumers. To capture this, and also for simplicity, we assume that the store owner makes decisions based on the importance of the purchase probability for each type of consumer.
\begin{assumption}[probability ordering rule]
	\label{ass1}
	The store that opens at time $t$ with initial investment $I$ will choose style
	\begin{align}
		{\boldsymbol{z}}_{t}^{(s)}(I) \in S_m,\  \text{where}&\  S_j = \underset{{\boldsymbol{z}}\in S_{j-1}}{\arg\max} \ \mathbf{P}^{(j)}\rho_{jt}\left({\boldsymbol{z}}|\theta\left({\boldsymbol{z}}\right)\right),\ \text{with}\  S_0 = \mathscr{Z}_t(I).
	\end{align}
	Note that $\#(S_j)$ may not be $1.$
\end{assumption}

The implication of Assumption \ref{ass1} is that with homogeneous cash flow, the store will make decisions based on the purchase probability of each type of consumers. From the first to the $m$-th type of consumers, their purchase probability is of decreasing importance to the store. That is, stores are more inclined to cater to the preference of the first few types of consumers, thereby greatly maximizing sales volume on them, rather than the latter types. Proposition \ref{unique} shows that ${\boldsymbol{z}}_{t}^{(s)}(I)$ is unique for a broad class of purchase probability density forms that we concerned.
\begin{proposition}
	\label{unique}
	If there exists $j_0\in\{1,2\ldots,m\}$ s.t. given $\theta$ the maximum point of $\rho_{j_0t}({\boldsymbol{z}}|\theta)$ on $D=\left\{{\boldsymbol{z}}\in\mathbb{R}^{p+q}:G_I(\boldsymbol{z})\leq I\right\}$ is unique, then $\#(S_m)\leq1.$ 
\end{proposition}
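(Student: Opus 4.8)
I would argue by contradiction: suppose $\#(S_m)\ge 2$ and fix two distinct styles $\boldsymbol{z}_1,\boldsymbol{z}_2\in S_m$. Because the selection sets are nested, $S_m\subseteq S_{j_0}\subseteq\cdots\subseteq S_0=\mathscr{Z}_t(I)$, both styles are cash-flow-optimal and both survive every step of the probability-ordering rule up to the $j_0$-th. Reading off the definition of $S_{j_0}$ as the maximizer set inside $S_{j_0-1}$, the two styles must \emph{tie} at the $j_0$-th criterion, i.e. $\rho_{j_0 t}(\boldsymbol{z}_1|\theta(\boldsymbol{z}_1))=\rho_{j_0 t}(\boldsymbol{z}_2|\theta(\boldsymbol{z}_2))$, and they likewise tie at every earlier step. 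The whole plan is to convert this tie into a violation of the hypothesised uniqueness of the maximizer of $\rho_{j_0 t}(\cdot|\theta)$ on $D$.

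\textbf{Reduction to a fixed price.} Since the hypothesis is phrased for a fixed $\theta$, I would first argue that all optimal styles carry a common price $\theta^*$, so that the step-$j_0$ tie can be read at a single price as $\rho_{j_0 t}(\boldsymbol{z}_1|\theta^*)=\rho_{j_0 t}(\boldsymbol{z}_2|\theta^*)$. This is immediate in the leading non-dispersed case $\rho_{jt}\in\mathscr{P}_{jt}$ with common price sensitivity $\gamma_j\equiv\gamma$: the cash-flow density then factors as $(\theta e^{-\gamma\theta})\sum_{j}\mathbf{P}^{(j)}K_{jt}\phi(C_j-\tfrac{\lambda_j}{2}\Vert\boldsymbol{z}-\hat{\boldsymbol{z}}_{j,t}\Vert^{2})$, whose price part $\theta e^{-\gamma\theta}$ is maximised at $\theta^*=1/\gamma$ independently of $\boldsymbol{z}$.

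\textbf{Structural facts and the improving midpoint.} With the price fixed, $\boldsymbol{z}\mapsto\rho_{j_0 t}(\boldsymbol{z}|\theta^*)=K\phi(C_{j_0}-\tfrac{\lambda_{j_0}}{2}\Vert\boldsymbol{z}-\hat{\boldsymbol{z}}_{j_0,t}\Vert^{2})e^{-\gamma_{j_0}\theta^*}$ is \emph{strictly quasiconcave}, because $\phi$ is strictly increasing and the quadratic $C_{j_0}-\tfrac{\lambda_{j_0}}{2}\Vert\cdot-\hat{\boldsymbol{z}}_{j_0,t}\Vert^{2}$ is strictly concave ($\lambda_{j_0}>0$); the feasible set $D$ is convex because $g_I$ is a convex cost, and the unique maximizer supplied by the hypothesis is the metric projection of $\hat{\boldsymbol{z}}_{j_0,t}$ onto $D$. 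Applying strict quasiconcavity to the two distinct tying styles, the midpoint $\bar{\boldsymbol{z}}=\tfrac12(\boldsymbol{z}_1+\boldsymbol{z}_2)\in D$ satisfies $\rho_{j_0 t}(\bar{\boldsymbol{z}}|\theta^*)>\rho_{j_0 t}(\boldsymbol{z}_1|\theta^*)=\rho_{j_0 t}(\boldsymbol{z}_2|\theta^*)$, and the same strict gain holds simultaneously for each earlier type $j\le j_0$ on which $\boldsymbol{z}_1$ and $\boldsymbol{z}_2$ also tied. Thus $\bar{\boldsymbol{z}}$ strictly dominates $\boldsymbol{z}_1,\boldsymbol{z}_2$ at the $j_0$-th selection criterion, which would contradict their membership in $S_{j_0}$ --- provided $\bar{\boldsymbol{z}}$ is itself an admissible competitor, i.e. provided $\bar{\boldsymbol{z}}\in S_{j_0-1}\subseteq\mathscr{Z}_t(I)$.

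\textbf{Main obstacle.} This admissibility is exactly where I expect the difficulty to concentrate, because the aggregate cash-flow density $\sum_{j}\mathbf{P}^{(j)}\rho_{jt}(\cdot|\theta^*)$ is in general \emph{not} quasiconcave --- Example~\ref{ex1} already exhibits a genuinely bimodal aggregate --- so the midpoint of two cash-flow optima need not be cash-flow optimal. To close this I would split the aggregate into the block $j\le j_0$ and the remainder $j>j_0$: the first block strictly increases at $\bar{\boldsymbol{z}}$ (by the ties plus strict quasiconcavity), while $\boldsymbol{z}_1,\boldsymbol{z}_2$ share a common aggregate value and hence a common value on the remaining block, and I would then have to rule out that the strict gain on the priority block is cancelled by a loss on the lower-priority block. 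Establishing that no such cancellation can occur at an optimum --- equivalently, that the optimal projection maximizer actually lies in $S_{j_0-1}$ --- is the pivotal step, and I would attack it using first-order optimality of $\boldsymbol{z}_1,\boldsymbol{z}_2$ for the aggregate together with the strict concavity of the quadratic exponents shared by all types.
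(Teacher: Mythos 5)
Your proposal does not close the argument, and the route it takes is substantially harder than the one the paper uses. Three concrete problems. First, you never actually use the hypothesis of the proposition --- that the maximizer of $\rho_{j_0t}(\cdot|\theta)$ on $D$ is \emph{unique} --- but instead substitute strict quasiconcavity of $\rho_{j_0t}(\cdot|\theta)$, which is a structural assumption about the $\mathscr{P}_{jt}$ family and is not part of the statement; for a general density with a unique constrained maximizer your midpoint inequality need not hold at all. Second, your reduction to a common price is proved only in a special case ($\rho_{jt}\in\mathscr{P}_{jt}$ with a common $\gamma$), whereas the proposition makes no such assumption. Third, and most importantly, the step you yourself flag as ``the main obstacle'' --- showing the midpoint $\bar{\boldsymbol{z}}$ lies in $S_{j_0-1}\subseteq\mathscr{Z}_t(I)$ --- is a genuine obstruction, not a technicality: $S_{j_0-1}$ sits inside the set of global maximizers of a generally multimodal cash-flow density (your own citation of Example 3.1 makes this point), so there is no reason the midpoint of two such maximizers is again one, and the contradiction you want never materializes. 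A proof that ends by announcing its pivotal step as unresolved is not a proof.

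The paper's argument is much shorter and avoids the midpoint entirely. Suppose $\boldsymbol{z}_1,\boldsymbol{z}_2\in S_m$. The nested selection rule forces ties at \emph{every} step, $\rho_{jt}(\boldsymbol{z}_1|\theta(\boldsymbol{z}_1))=\rho_{jt}(\boldsymbol{z}_2|\theta(\boldsymbol{z}_2))$ for all $1\le j\le m$; summing with the weights $\mathbf{P}^{(j)}$ shows the aggregate densities agree at the two candidates. Since both also lie in $S_0=\mathscr{Z}_t(I)$, their cash flows $\theta(\boldsymbol{z}_i)\sum_j\mathbf{P}^{(j)}\rho_{jt}(\boldsymbol{z}_i|\theta(\boldsymbol{z}_i))$ agree as well, and dividing out the common (positive) aggregate yields $\theta(\boldsymbol{z}_1)=\theta(\boldsymbol{z}_2)$ with no structural assumptions on $\rho$ whatsoever. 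With the price pinned down to a single $\theta$, the tie at step $j_0$ reads $\rho_{j_0t}(\boldsymbol{z}_1|\theta)=\rho_{j_0t}(\boldsymbol{z}_2|\theta)$, and the uniqueness hypothesis on the maximizer of $\rho_{j_0t}(\cdot|\theta)$ over $D$ is then invoked to conclude $\boldsymbol{z}_1=\boldsymbol{z}_2$. Note that the common price is a \emph{consequence} of cash-flow optimality plus the ties --- you do not need to engineer it from a factorization of the price dependence --- and that once you have it, no convexity of the feasible set, no concavity of the objective, and no analysis of the lower-priority block is required.
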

\begin{proof}
    Suppose ${\boldsymbol{z}_1},{\boldsymbol{z}_2}\in S_m,$ we have 
    \begin{equation}
        \rho_{jt}({\boldsymbol{z}}_1|\theta({\boldsymbol{z}}_1)) = \rho_{jt}({\boldsymbol{z}}_2|\theta({\boldsymbol{z}}_2)),\ 1\leq j \leq m.
    \end{equation}
   plus since ${\boldsymbol{z}_1},{\boldsymbol{z}_2}\in \mathscr{Z}_t(I),$ it leads to
    \begin{equation}
        \theta({\boldsymbol{z}}_1) = \theta({\boldsymbol{z}}_2).
    \end{equation}
    Then due to 
    \begin{equation}
\rho_{j_0t}({\boldsymbol{z}}_1|\theta({\boldsymbol{z}}_1)) = \rho_{j_0t}({\boldsymbol{z}}_2|\theta({\boldsymbol{z}}_1)),
   \end{equation}
   we know that ${\boldsymbol{z}_1}={\boldsymbol{z}_2}.$
\end{proof}
\begin{corollaryP}
    If  $\rho_{jt}(\boldsymbol{z}|\theta)\in \mathscr{P}_{jt},$ then $\#(S_m)=1.$
\end{corollaryP}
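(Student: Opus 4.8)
The plan is to establish the corollary by sandwiching $\#(S_m)$ between the two bounds $\#(S_m)\ge 1$ and $\#(S_m)\le 1$. The upper bound is the substantive part: I would obtain it by checking that the hypothesis of Proposition \ref{unique} is automatically satisfied once $\rho_{jt}(\cdot\mid\theta)\in\mathscr{P}_{jt}$. The lower bound will then come from Proposition \ref{pro32} together with a compactness argument that prevents the iterated $\arg\max$ sets from collapsing to the empty set.

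For the upper bound, fix an arbitrary type $j_0$ and price $\theta$, and analyze the maximization of $\rho_{j_0 t}(\boldsymbol{z}\mid\theta)$ over the feasible set $D=\{\boldsymbol{z}\in\mathbb{R}^{p+q}:G_I(\boldsymbol{z})\le I\}$. Writing the member of $\mathscr{P}_{j_0 t}$ explicitly,
\[
\rho_{j_0 t}(\boldsymbol{z}\mid\theta)=K\,\phi\!\left(C_{j_0}-\tfrac{\lambda_{j_0}}{2}\|\boldsymbol{z}-\hat{\boldsymbol{z}}_{j_0,t}\|^2\right)e^{-\gamma_{j_0}\theta},
\]
the factors $K>0$ and $e^{-\gamma_{j_0}\theta}>0$ are constants in $\boldsymbol{z}$, so they do not affect the location of the maximizer. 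Using property (2) of the transform function ($\phi'>0$, so $\phi$ is strictly increasing on $\mathbb{U}$) and $\lambda_{j_0}>0$, maximizing $\phi(C_{j_0}-\tfrac{\lambda_{j_0}}{2}\|\boldsymbol{z}-\hat{\boldsymbol{z}}_{j_0,t}\|^2)$ over $\boldsymbol{z}\in D$ is equivalent to minimizing $\|\boldsymbol{z}-\hat{\boldsymbol{z}}_{j_0,t}\|^2$ over $\boldsymbol{z}\in D$; that is, the maximizer is exactly the Euclidean metric projection of $\hat{\boldsymbol{z}}_{j_0,t}$ onto $D$. Since $g_I$ is a smooth convex cost function and $G_I(\boldsymbol{z})=g_I(\boldsymbol{\xi})$, the set $D$ is the product $\mathbb{R}^p\times\{\boldsymbol{\xi}:g_I(\boldsymbol{\xi})\le I\}$, a nonempty closed convex set, and the projection onto such a set is unique. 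This is precisely the hypothesis of Proposition \ref{unique} (for this, indeed any, $j_0$), whence $\#(S_m)\le 1$.

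For the lower bound, Proposition \ref{pro32} gives $\#(\mathscr{Z}_t(I))=\#(S_0)\ge 1$, so $S_0\neq\emptyset$. To pass from $S_{j-1}\neq\emptyset$ to $S_j\neq\emptyset$, I would use that each $\rho_{jt}(\cdot\mid\theta(\cdot))$ is continuous and that $\mathscr{Z}_t(I)$ is bounded: monotonicity together with integrability of $\phi$ forces $\phi(u)\to 0$ as $u\to-\infty$, so the cash-flow density tends to $0$ as $\|\boldsymbol{z}\|\to\infty$ (and as $\theta\to 0^+$ or $\theta\to\infty$), confining all optimizers to a compact region. As $S_0$ is closed (a level set of a continuous objective inside closed $D$) and bounded, it is compact, and each successive $\arg\max$ of a continuous function over the compact set $S_{j-1}$ is attained; hence $S_m\neq\emptyset$. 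Combined with the upper bound this yields $\#(S_m)=1$.

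The hard part will not be the projection uniqueness, which is standard, but cleanly securing the reduction and its supporting facts: that the argument $C_{j_0}-\tfrac{\lambda_{j_0}}{2}\|\boldsymbol{z}-\hat{\boldsymbol{z}}_{j_0,t}\|^2$ stays inside the domain $\mathbb{U}=(-\infty,C]$ of $\phi$ (which holds provided $C$ is taken large enough that $C_{j_0}\le C$, as the definition of $\phi$ permits), that $D$ is genuinely nonempty and convex under the stated assumptions on $g_I$, and that the iterated $\arg\max$ in the probability-ordering rule never becomes empty. Once these are in place, the corollary follows immediately from Propositions \ref{pro32} and \ref{unique}.
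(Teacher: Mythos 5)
Your proposal is correct and follows essentially the same route as the paper: the paper's proof likewise reduces the maximization of $\rho_{jt}(\cdot|\theta)$ over $\{G_I(\boldsymbol{z})\le I\}$ to the strictly convex problem $\min\Vert\boldsymbol{z}-\hat{\boldsymbol{z}}_{j,t}\Vert^2$ over that convex set, whose unique solution verifies the hypothesis of Proposition 3.3 and hence $\#(S_m)\le 1$. Your additional care in securing $\#(S_m)\ge 1$ via Proposition 3.2 and a compactness argument for the iterated $\arg\max$ is a reasonable tightening of a step the paper leaves implicit, but it does not change the substance of the argument.
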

\begin{proof}
  When $\rho_{jt}(\boldsymbol{z}|\theta)\in \mathscr{P}_{jt},$ given $\theta,$
\begin{equation}
    \max_{\boldsymbol{z}:G(\boldsymbol{z})\leq I} \phi\left(C_j - \frac{\lambda_j}{2}\left\Vert\boldsymbol{z}-\hat{\boldsymbol{z}}_{j,t} \right\Vert^2\right)\cdot e^{ -\gamma_j \theta}. 
\end{equation}
is equivalent to
\begin{equation}
	\label{exa}
\min_{\boldsymbol{z}:G(\boldsymbol{z})\leq I} \left\Vert\boldsymbol{z}-\hat{\boldsymbol{z}}_{j,t} \right\Vert^2,
\end{equation}
which is a convex optimization with a strict convex objective and its solution is unique; see \cite{boyd2004convex}.
\end{proof}
Therefore, under the rule of Assumption \ref{ass1}, each store can always determine unique style and price. We denote the optimization based on this rule as $\widetilde{\max}.$ Now we are going to introduce some propositions of $\widetilde{\max}.$

\begin{proposition}
	If $m=1$ and $\rho_{1t}({\boldsymbol{z}}|\theta) \in \mathscr{P}_{1t},$ then
	\begin{equation}
		\label{case1}
		\underset{\left(\boldsymbol{z},\theta\right):G_I(\boldsymbol{z}) \leq I}{\arg\widetilde{\max}}\quad\theta\rho_{1t}({\boldsymbol{z}}|\theta)=
		\underset{\left(\boldsymbol{z},\theta\right):G_I(\boldsymbol{z}) \leq I}{\arg{\max}}\quad\theta \rho_{1t}({\boldsymbol{z}}|\theta).
	\end{equation}
\end{proposition}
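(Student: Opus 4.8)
The plan is to exploit the multiplicative separation of the cash-flow density that holds when $\rho_{1t}\in\mathscr{P}_{1t}$, and then to argue that in this special case the plain maximization on the right-hand side already admits a unique solution, so that the two tie-breaking conventions built into $\widetilde{\max}$ (the infimum-of-optimal-prices rule and the probability-ordering rule of Assumption \ref{ass1}) have nothing to act on. First I would write the objective explicitly: since $m=1$ (so $\mathbf{P}^{(1)}=1$) and $\rho_{1t}(\boldsymbol{z}|\theta)=K\,\phi\!\left(C_1-\frac{\lambda_1}{2}\|\boldsymbol{z}-\hat{\boldsymbol{z}}_{1,t}\|^2\right)e^{-\gamma_1\theta}$, the cash-flow density factorizes as
\[
\theta\rho_{1t}(\boldsymbol{z}|\theta)=K\,\phi\!\left(C_1-\frac{\lambda_1}{2}\left\|\boldsymbol{z}-\hat{\boldsymbol{z}}_{1,t}\right\|^2\right)\cdot\theta e^{-\gamma_1\theta},
\]
a product of a function of $\boldsymbol{z}$ alone and a function of $\theta$ alone. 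Because the feasible region $\{(\boldsymbol{z},\theta):G_I(\boldsymbol{z})\leq I,\ \theta>0\}$ is a product set (the constraint couples neither the two blocks nor $\theta$, and $C_t=\mathbb{R}^{p+q}\times\mathbb{R}^+$ here), the joint maximization splits into two independent one-variable problems.

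Second, I would solve the two pieces. The price factor $\theta e^{-\gamma_1\theta}$ is strictly unimodal on $(0,\infty)$ with a unique maximizer $\theta^*=1/\gamma_1$, independent of $\boldsymbol{z}$. The style factor is handled exactly as in the corollary to Proposition \ref{unique}: since $\phi$ is strictly increasing, maximizing $\phi\!\left(C_1-\frac{\lambda_1}{2}\|\boldsymbol{z}-\hat{\boldsymbol{z}}_{1,t}\|^2\right)$ over $\{G_I(\boldsymbol{z})\leq I\}$ is equivalent to minimizing $\|\boldsymbol{z}-\hat{\boldsymbol{z}}_{1,t}\|^2$ over that convex set, a strictly convex program with a unique minimizer $\boldsymbol{z}^*$. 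Consequently the plain $\arg\max$ on the right-hand side is the single pair $\{(\boldsymbol{z}^*,\theta^*)\}$.

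Third, I would verify that $\widetilde{\max}$ returns the same singleton. Since the optimal price is the unique value $\theta^*$ attached to the unique optimal style, the infimum-of-optimal-prices convention discards nothing, so the optimal style set is $\mathscr{Z}_t(I)=\{\boldsymbol{z}^*\}$. With $m=1$ the ordering rule of Assumption \ref{ass1} is the single step $S_1=\arg\max_{\boldsymbol{z}\in S_0}\mathbf{P}^{(1)}\rho_{1t}(\boldsymbol{z}|\theta(\boldsymbol{z}))$ started from $S_0=\mathscr{Z}_t(I)=\{\boldsymbol{z}^*\}$, which trivially yields $\{\boldsymbol{z}^*\}$ (the corollary to Proposition \ref{unique} already guarantees $\#(S_1)=1$). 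Hence both sides of \eqref{case1} equal $\{(\boldsymbol{z}^*,\theta^*)\}$.

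The main obstacle is definitional rather than computational: the substance of the claim is precisely that the two conventions encoded in $\widetilde{\max}$ are vacuous within $\mathscr{P}_{1t}$, so the real work is to confirm there are genuinely no ties to break. The factorization forces the optimal price to be unique (neutralizing the price convention) and strict convexity forces the optimal style to be unique (neutralizing the ordering rule). I would take care to note that $G_I(\boldsymbol{z})=g_I(\boldsymbol{\xi})\leq I$ constrains only the $\boldsymbol{\xi}$-block, so that $\boldsymbol{z}^*$ equals the unconstrained $\hat{\boldsymbol{x}}_{1,t}$ in its $\boldsymbol{x}$-block and the projection of $\hat{\boldsymbol{\xi}}_{1,t}$ onto $\{g_I(\boldsymbol{\xi})\leq I\}$ in its $\boldsymbol{\xi}$-block; the tightness assumption $g_I(\hat{\boldsymbol{\xi}}_{1,t})>I$ merely forces $\boldsymbol{z}^*$ onto the boundary without disturbing uniqueness.
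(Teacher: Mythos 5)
Your proposal is correct and follows essentially the same route as the paper's own proof: factor the objective into a style part and a price part, note that $\theta e^{-\gamma_1\theta}$ is uniquely maximized at $\theta^*=1/\gamma_1$, and reduce the style maximization to the strictly convex problem \eqref{exa} with a unique solution, so the plain $\arg\max$ is a singleton and $\widetilde{\max}$ has no ties to break. Your write-up is somewhat more explicit than the paper's (which leaves the vacuousness of the two tie-breaking conventions implicit), but the substance is identical.
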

\begin{proof}
	It is sufficient to show that the right-hand side of (\ref{case1}) is unique. For all given $\boldsymbol{z}=\{\boldsymbol{z}\in\mathbb{R}^{p+q}:G(\boldsymbol{z}) \leq I\},$ $\rho_{1t}$ reaches its maximum at $\theta^*=\frac{1}{\gamma_1}.$ Then \begin{equation}
		\underset{\left(\boldsymbol{z},\theta\right):G(\boldsymbol{z})}{\max} \rho_{1t}\left(\boldsymbol{z}|\theta^*\right)
	\end{equation}
	is equivalent to \eqref{exa}, of which the solution is unique; see \cite{boyd2004convex}. 
\end{proof}
\begin{proposition}
	\label{pro5}
	If $\rho_{jt}({\boldsymbol{z}}|\theta) \in \mathscr{P}_{jt},$ denote that 
	\begin{equation}
		\rho_{jt}({\boldsymbol{z}}|\theta) = K_{jt} \phi\left(C_j - \frac{\lambda_j}{2}\left\Vert\boldsymbol{z}-\hat{\boldsymbol{z}}_{j,t} \right\Vert^2\right)\cdot e^{ -\gamma_j \theta}.
	\end{equation} There exists bounded positive function $B_t(\cdot)$ such that on
	\begin{equation}
		\Psi_{B_t} =\left\{(K_{1t},\ldots,K_{mt}):K_{1t}>0,0<K_{jt}<B_t(K_{1t})\right\},
	\end{equation}
	the decisions of store owners
	\begin{equation}
		\label{p5}
		\underset{\left(\boldsymbol{z},\theta\right):G_I(\boldsymbol{z}) \leq I}{\arg\widetilde{\max}}\quad\theta\sum_{j=1}^m\mathbf{P}^{(j)}\rho_{jt}({\boldsymbol{z}}|\theta) = \underset{\left(\boldsymbol{z},\theta\right):G_I(\boldsymbol{z}) \leq I}{\arg{\max}}\quad\theta\sum_{j=1}^m\mathbf{P}^{(j)}\rho_{jt}({\boldsymbol{z}}|\theta) 
	\end{equation}
	and are smooth w.r.t. $(K_{1t},\ldots,K_{mt}).$
\end{proposition}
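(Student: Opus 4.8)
The plan is to exploit the scale invariance of the decision together with a perturbation argument anchored at the single-type problem. First I would note that the objective $\theta\sum_{j=1}^m\mathbf{P}^{(j)}\rho_{jt}(\boldsymbol{z}|\theta)$ is linear and positively homogeneous in $(K_{1t},\dots,K_{mt})$, so its $\arg\max$ is invariant under the rescaling $(K_{1t},\dots,K_{mt})\mapsto(cK_{1t},\dots,cK_{mt})$ and hence depends only on the ratios $\kappa_j=K_{jt}/K_{1t}$. I would therefore take $B_t(K_{1t})=c\min(K_{1t},M)$ for constants $c,M>0$ to be fixed: this is positive and bounded (by $cM$), and on $\Psi_{B_t}$ it forces $\kappa_j<B_t(K_{1t})/K_{1t}=c\min(1,M/K_{1t})\le c$ for every $j\ge2$. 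The claim then reduces to showing that, for $c$ small enough, the normalized objective $\Pi(\boldsymbol{z},\theta;\kappa)=\theta e^{-\gamma_1\theta}\mathbf{P}^{(1)}\phi_1(\boldsymbol{z})+\sum_{j\ge2}\kappa_j\,\theta e^{-\gamma_j\theta}\mathbf{P}^{(j)}\phi_j(\boldsymbol{z})$, with $\phi_j(\boldsymbol{z})=\phi(C_j-\tfrac{\lambda_j}{2}\Vert\boldsymbol{z}-\hat{\boldsymbol{z}}_{j,t}\Vert^2)$, has a unique maximizer over $\{G_I(\boldsymbol{z})\le I\}\times(0,\infty)$ that is smooth in $\kappa=(\kappa_2,\dots,\kappa_m)$ on $(0,c)^{m-1}$.

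Next I would treat $\kappa=0$ as the base case, where $\Pi(\cdot;0)$ is exactly the single-type objective and separates multiplicatively into the price factor $\theta e^{-\gamma_1\theta}$, maximized uniquely at $\theta^*=1/\gamma_1$ with strictly negative second derivative, and the style factor $\mathbf{P}^{(1)}\phi_1$, whose maximization is the strictly convex projection $\min_{G_I(\boldsymbol{z})\le I}\Vert\boldsymbol{z}-\hat{\boldsymbol{z}}_{1,t}\Vert^2$ of the Corollary above. Since $g_I(\hat{\boldsymbol{\xi}}_{1,t})>I$, the constraint is active at the unique minimizer $\boldsymbol{z}^*_0$, with $\nabla G_I(\boldsymbol{z}^*_0)\neq0$ and a strictly positive multiplier $\mu_0$. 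I would then verify nondegeneracy of the KKT point $(\boldsymbol{z}^*_0,\theta^*_0,\mu_0)$: on the tangent space to the active constraint the only surviving curvature of $\phi_1$ is $-\phi'\lambda_1\Vert d\Vert^2$, because the indefinite rank-one term points along $\boldsymbol{z}^*_0-\hat{\boldsymbol{z}}_{1,t}$, which at a KKT point is parallel to $\nabla G_I$ and hence normal to the tangent space, while $-\mu_0\nabla^2 g_I\preceq0$; together with the $\theta$-block this makes the reduced Hessian negative definite and the KKT Jacobian nonsingular. The implicit function theorem then produces a constant $c>0$ and a $C^\infty$ branch $\kappa\mapsto(\boldsymbol{z}^*(\kappa),\theta^*(\kappa),\mu^*(\kappa))$ solving the KKT system, with the constraint still active, $\mu^*(\kappa)>0$, and the reduced Hessian still negative definite for $\Vert\kappa\Vert<c$, so that $(\boldsymbol{z}^*(\kappa),\theta^*(\kappa))$ is a smooth family of strict local maximizers.

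The remaining and most delicate step, which I expect to be the main obstacle, is upgrading this local statement to a unique global maximizer, so that Assumption \ref{ass1} has no ties to resolve and $\arg\widetilde{\max}=\arg\max$. For this I would use the integrability condition $\int_{\mathbb{U}}\phi<\infty$: combined with $\phi$ positive and increasing it forces $\phi(u)\to0$ as $u\to-\infty$, hence $\phi_j(\boldsymbol{z})\to0$ as $\Vert\boldsymbol{z}\Vert\to\infty$, while $\theta e^{-\gamma_j\theta}\to0$ as $\theta\to0^+$ or $\theta\to\infty$. Because the $\kappa_j$ are uniformly bounded by $c$, this makes $\Pi(\cdot;\kappa)\to0$ at the boundary of $\{G_I\le I\}\times(0,\infty)$ uniformly in $\kappa$, while the maximal value stays bounded below by a positive constant; thus all global maximizers lie in a fixed compact set $S$. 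On $S$ the perturbation $\Pi(\cdot;\kappa)-\Pi(\cdot;0)$ is $O(c)$ uniformly, so by the stability of $\arg\max$ under uniform convergence every global maximizer of $\Pi(\cdot;\kappa)$ converges, as $c\to0$, to the unique global maximizer $(\boldsymbol{z}^*_0,\theta^*_0)$ of $\Pi(\cdot;0)$. Shrinking $c$, every global maximizer is then trapped in the neighborhood on which the nondegenerate second-order conditions above force the KKT point to be unique, so it must coincide with the smooth branch $(\boldsymbol{z}^*(\kappa),\theta^*(\kappa))$. This yields global uniqueness and smoothness in $\kappa$ simultaneously; since $\kappa(\boldsymbol{K})$ is smooth for $K_{1t}>0$ and the maximizer depends on $\boldsymbol{K}$ only through $\kappa$, the store's decision is smooth in $(K_{1t},\dots,K_{mt})$ on $\Psi_{B_t}$, and uniqueness gives $\arg\widetilde{\max}=\arg\max$, completing the argument.
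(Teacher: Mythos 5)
Your proposal is correct and follows essentially the same route as the paper's own proof: anchor the problem at the single-type ($\kappa=0$) objective, whose maximizer is the unique solution of the strictly convex projection at the price $\theta^*=1/\gamma_1$, establish second-order nondegeneracy of that KKT point, and invoke the implicit function theorem together with uniform convergence of the perturbed objective to obtain a unique, smooth maximizer for small $K_{2t},\ldots,K_{mt}$, which forces $\arg\widetilde{\max}=\arg\max$. If anything your write-up is more careful than the paper at two points --- you check negative definiteness only of the \emph{reduced} Hessian on the tangent space of the active constraint (the paper asserts it for the full Lagrangian Hessian $H_{L_1}$, which need not hold since $\nabla^2\phi_1$ has an indefinite rank-one part along $\boldsymbol{z}-\hat{\boldsymbol{z}}_{1,t}$), and you exploit positive homogeneity in $\boldsymbol{K}$ to write down an explicit bounded $B_t$ --- but these are refinements within the same argument rather than a different one.
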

\begin{proof}
    In \ref{append2}.
\end{proof}
In this case, Stores will cater to the preference of the first type of consumers, due to their relatively higher level of purchase probability. We say that the first type of consumers dominate the demand market, or that the first type of consumers are the dominant consumers.

\begin{example}
	We still consider illustrative example in Example \ref{ex1}. Under Assumption \ref{ass1}, we show below the plot of the optimal amount of salt in the store for different $\frac{K_{10}}{K_{20}},$ which is unique.
\begin{figure}[h]

	\centering
	\begin{minipage}[b]{0.3\textwidth}
		\centering
		\includegraphics[width=\textwidth]{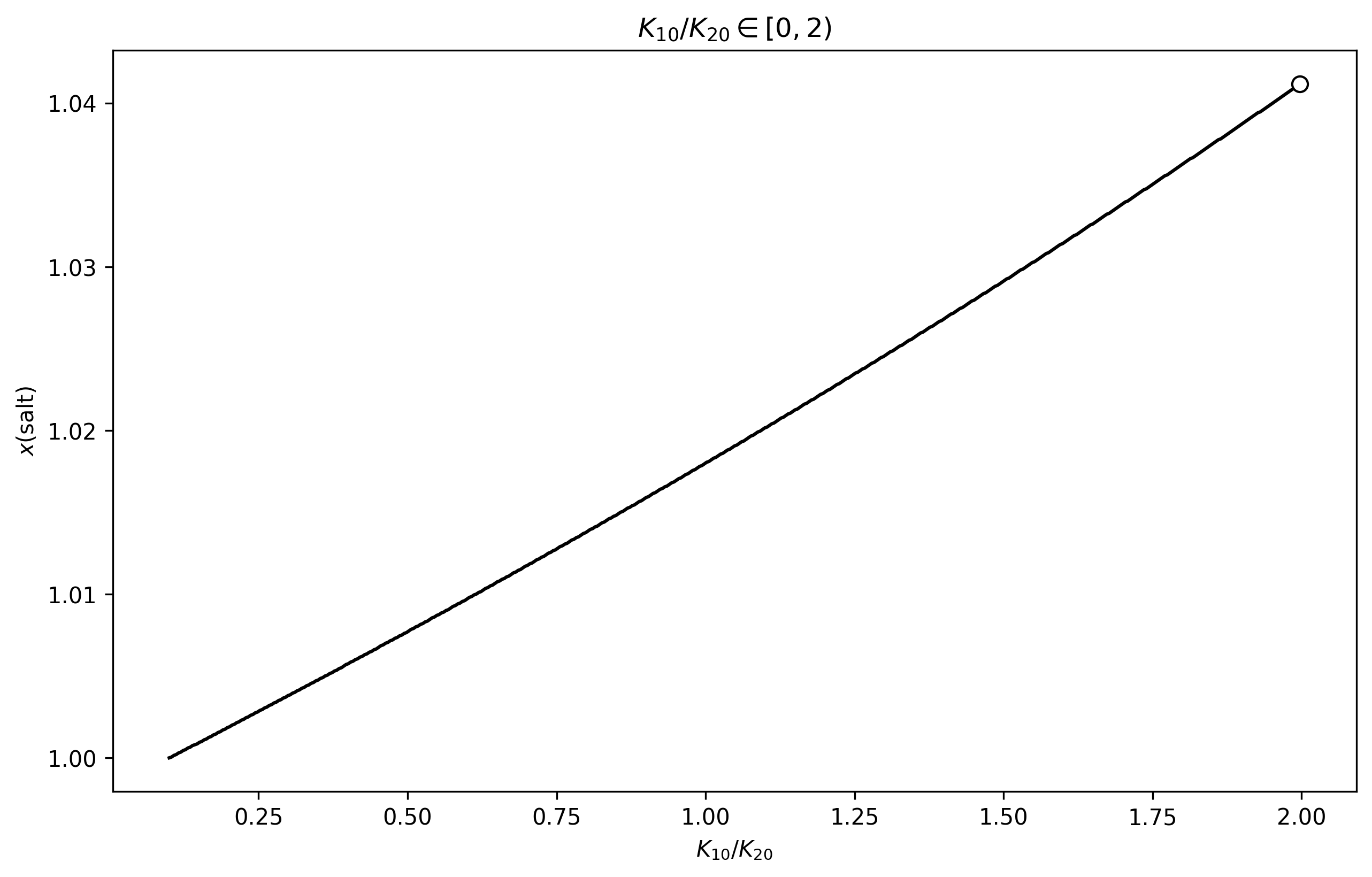}
		\raisebox{1.5ex}{(1)}
		\label{fig:image1}
	\end{minipage}
	\hfill 
	\begin{minipage}[b]{0.3\textwidth}
		\centering
		\includegraphics[width=\textwidth]{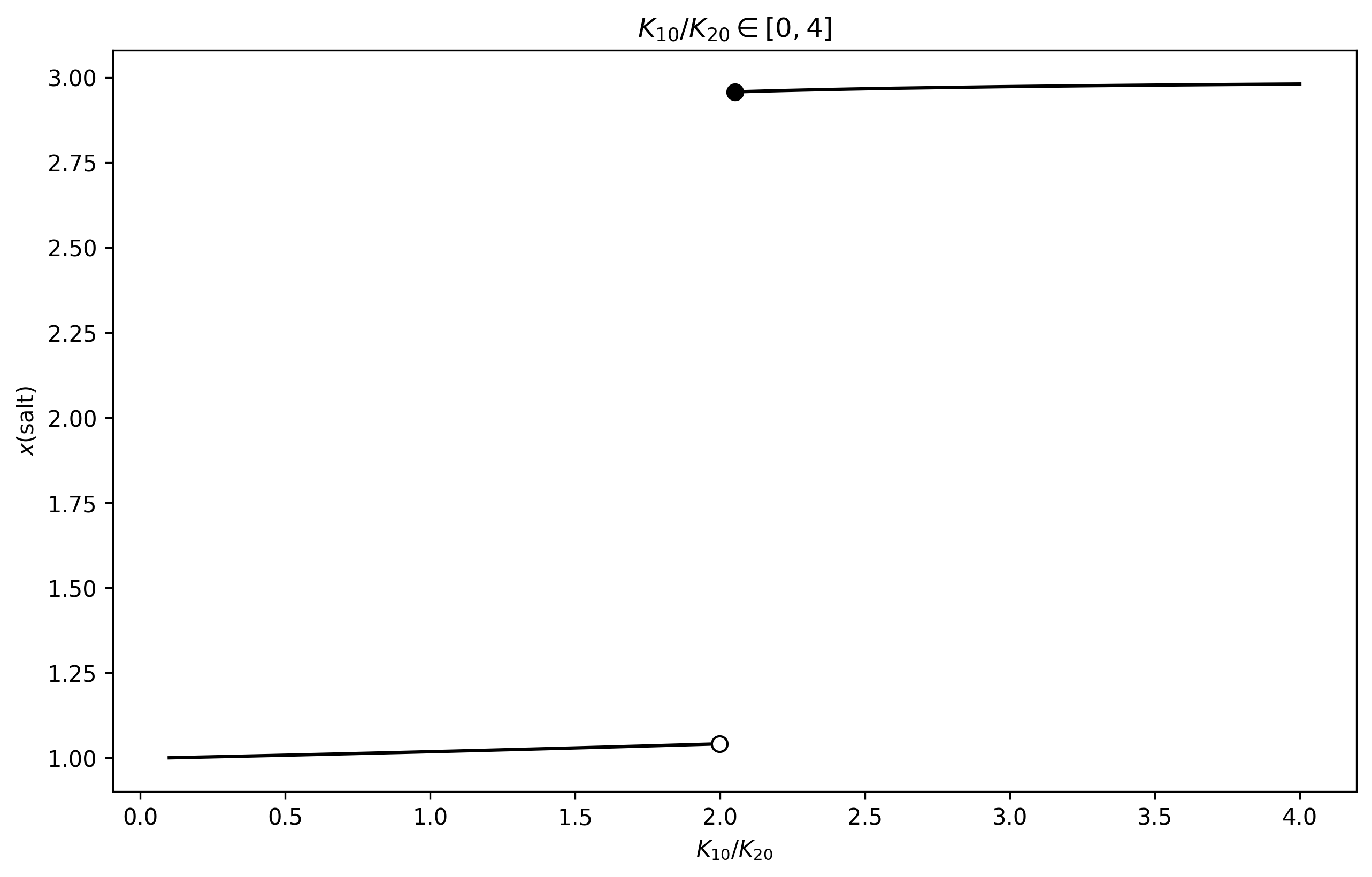}
		\raisebox{1.5ex}{(2)}
		\label{fig:image2}
	\end{minipage}
	\hfill 
	\begin{minipage}[b]{0.3\textwidth}
		\centering
		\includegraphics[width=\textwidth]{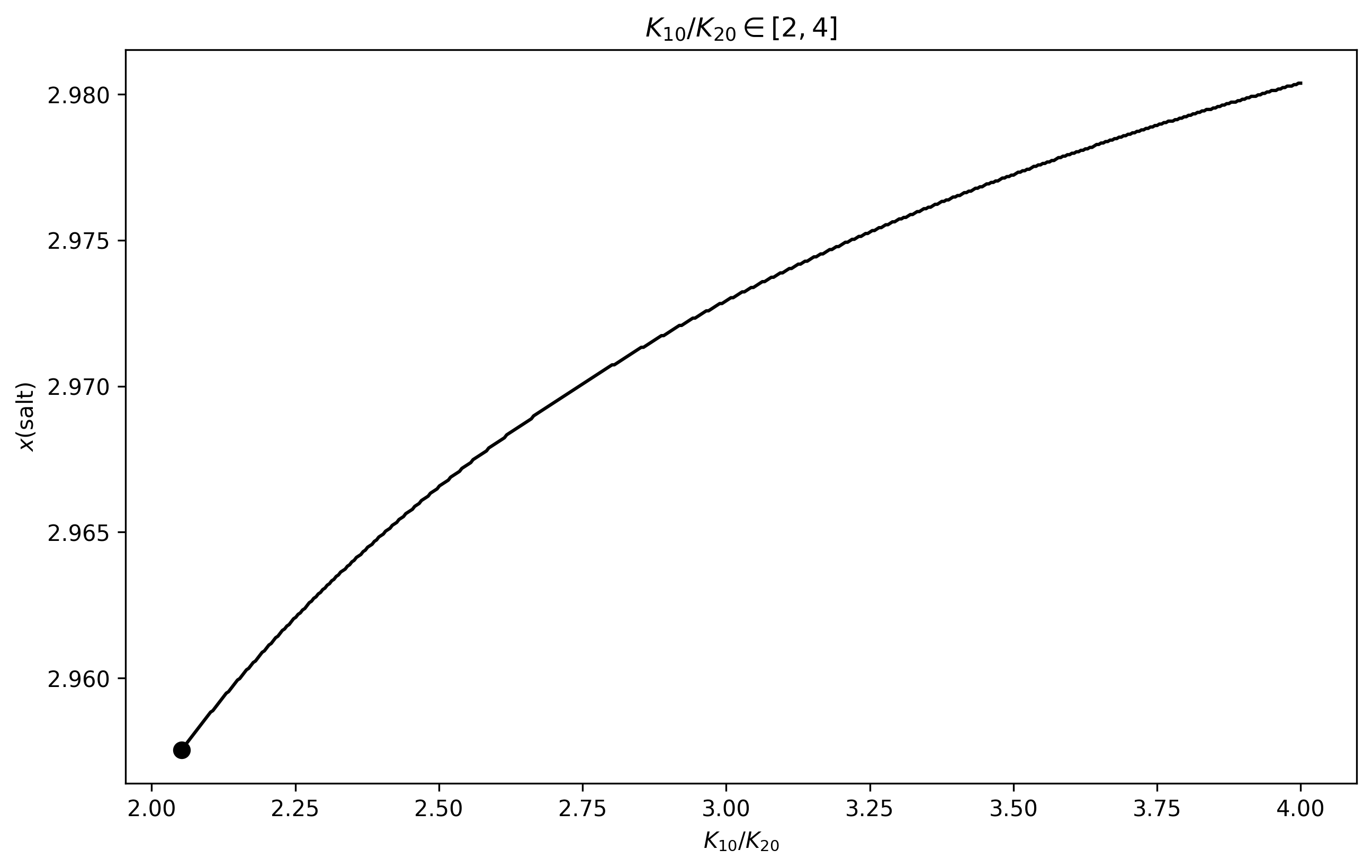}
		\raisebox{1.5ex}{(3)}
		\label{fig:image3}
	\end{minipage}
	\caption{Optimal Strategy for Store with Different $\frac{K_{10}}{K_{20}}$}
	\label{x_react}
\end{figure}
There is discontinuous variation in store decisions when the ratio of $K_{10}$ and $K_{20}$ spans $2$, but decisions are continuously smooth when the first type of consumer dominates, i.e., when the ratio is greater than $2$. Notice that when the ratio is exactly $2,$ by Assumption \ref{ass1}, the store owners will choose a higher amount of salt because they are more focused on sales to the young people.
 
\end{example}

\section{Part II: Equilibrium}
\label{Part II: Equilibrium}
\subsection{Preference shifting}
After introducing the strategies of store owners and consumers, we naturally wish to find an equilibrium, i.e. the demand-supply matching situation. Before providing an equilibrium, we need to add an important assumption.
\begin{assumption}[preference shifting]
	\label{a1}
	The product traditional preference style is linearly shifting 
	\begin{equation}
		\bar{\boldsymbol{x}}_t = \bar{\boldsymbol{x}}_0 +t\frac{\partial\bar{\boldsymbol{x}}_s}{\partial s}\bigg|_{s=0} = \bar{\boldsymbol{x}}_0 +t {\boldsymbol{c}},
	\end{equation}
	while the storefront traditional preference style does not change over time. Thus
	\begin{equation}
		\left(\begin{array}{c}
			\bar{\boldsymbol{x}}_t  \\
			\bar{\boldsymbol{\xi}}_t 
		\end{array}\right) = \left(\begin{array}{c}
			\bar{\boldsymbol{x}}_0  \\
			\bar{\boldsymbol{\xi}}_0 
		\end{array}\right)+t {\boldsymbol{d}},
	\end{equation}
	where $\boldsymbol{d}=(\boldsymbol{c}',\boldsymbol{0}')'.$
\end{assumption}

\cite{simonson1992choice} illustrates that consumers' preferences for products are influenced by information, advertising, and other market factors. Although the shift in product preferences may be non-linear, it is slower compared to the lifespan of consumer stores. Our assumption uses a first-order approximation of the shift. The following proposition states that under this Assumption \ref{a1}, the scoring function for each type of consumers has time translation invariance.

\begin{proposition}
	\label{pro6}
	Given type $j,$ the optimal preference style is also linearly shifting:
	\begin{equation}
		\hat{\boldsymbol{z}}_{j,t} = \hat{\boldsymbol{z}}_{j,0}+t{\boldsymbol{d}}.
	\end{equation}
	Thus the scoring function satisfies
	\begin{equation}
		F_{jt}({\boldsymbol{z}},\theta) = F_{j0}({\boldsymbol{z}}-t{\boldsymbol{d}},\theta).
	\end{equation}
\end{proposition}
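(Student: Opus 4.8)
The plan is to prove both claims by direct substitution, exploiting the fact that the only time-dependence in the scoring function enters through the traditional preference style, which shifts linearly by Assumption \ref{a1}. The first (earlier) proposition already gives us the explicit formula for the optimal preference style $\hat{\boldsymbol{z}}_{j,t}$ together with the compact representation of $F_{jt}$ in terms of $\hat{\boldsymbol{z}}_{j,t}$, $C_j$, $\lambda_j$ and $\gamma_j$; crucially, that proposition asserts $C_j$ is a constant in $t$. Since $\boldsymbol{a}_j,\boldsymbol{b}_j,\lambda_j,\gamma_j$ are all fixed type-specific parameters with no $t$-dependence, every coefficient in the representation of $F_{jt}$ is time-independent except the center $\hat{\boldsymbol{z}}_{j,t}$, and that is exactly what makes the whole statement a clean translation result.

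First I would establish the linear shift of $\hat{\boldsymbol{z}}_{j,t}$. Plugging Assumption \ref{a1}, namely $\bar{\boldsymbol{x}}_t=\bar{\boldsymbol{x}}_0+t\boldsymbol{c}$ and $\bar{\boldsymbol{\xi}}_t=\bar{\boldsymbol{\xi}}_0$, into the explicit formula (\ref{z}) for $\hat{\boldsymbol{z}}_{j,t}$ gives
\begin{equation}
\hat{\boldsymbol{z}}_{j,t}=\left(\begin{array}{c}\frac{1}{\lambda_j}\boldsymbol{a}_j+\bar{\boldsymbol{x}}_0+t\boldsymbol{c}\\[2pt]\frac{1}{\lambda_j}\boldsymbol{b}_j+\bar{\boldsymbol{\xi}}_0\end{array}\right)=\hat{\boldsymbol{z}}_{j,0}+t\left(\begin{array}{c}\boldsymbol{c}\\\boldsymbol{0}\end{array}\right)=\hat{\boldsymbol{z}}_{j,0}+t\boldsymbol{d},
\end{equation}
using the definition $\boldsymbol{d}=(\boldsymbol{c}',\boldsymbol{0}')'$. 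This establishes the first equation of the proposition with no further work.

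Next I would substitute this shift into the simplified scoring form (\ref{Score_simple}). Because $C_j$, $\lambda_j$ and $\gamma_j$ do not depend on $t$, I get
\begin{equation}
F_{jt}(\boldsymbol{z},\theta)=\phi\!\left(C_j-\frac{\lambda_j}{2}\left\Vert\boldsymbol{z}-\hat{\boldsymbol{z}}_{j,0}-t\boldsymbol{d}\right\Vert^2\right)e^{-\gamma_j\theta}=\phi\!\left(C_j-\frac{\lambda_j}{2}\left\Vert(\boldsymbol{z}-t\boldsymbol{d})-\hat{\boldsymbol{z}}_{j,0}\right\Vert^2\right)e^{-\gamma_j\theta}.
\end{equation}
The right-hand side is, by definition, exactly $F_{j0}$ evaluated at the shifted argument $\boldsymbol{z}-t\boldsymbol{d}$, which yields $F_{jt}(\boldsymbol{z},\theta)=F_{j0}(\boldsymbol{z}-t\boldsymbol{d},\theta)$ and completes the proof.

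I do not anticipate a genuine obstacle here: the result is essentially a bookkeeping consequence of the earlier proposition combined with Assumption \ref{a1}. The only point deserving care is the verification that the constant $C_j=\frac{1}{2\lambda_j}\boldsymbol{a}_j'\boldsymbol{a}_j+\frac{1}{2\lambda_j}\boldsymbol{b}_j'\boldsymbol{b}_j$, as well as $\lambda_j$ and $\gamma_j$, carry no implicit dependence on $t$, so that the entire $t$-dependence is concentrated in the center $\hat{\boldsymbol{z}}_{j,t}$ and can therefore be absorbed as a rigid translation of the style argument $\boldsymbol{z}$.
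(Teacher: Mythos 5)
Your proposal is correct and follows essentially the same route as the paper's own proof: substitute Assumption \ref{a1} into the explicit formula (\ref{z}) to get $\hat{\boldsymbol{z}}_{j,t}=\hat{\boldsymbol{z}}_{j,0}+t\boldsymbol{d}$, then plug this into the simplified form (\ref{Score_simple}) and regroup the norm argument as $(\boldsymbol{z}-t\boldsymbol{d})-\hat{\boldsymbol{z}}_{j,0}$. Your added remark about checking that $C_j$, $\lambda_j$, $\gamma_j$ carry no hidden $t$-dependence is a sensible (if implicit in the paper) point of care.
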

\begin{proof}
    From (\ref{z}) we have
    \begin{align}
        \hat{\boldsymbol{z}}_{j,t}  &=\left(\begin{array}{c}\frac{1}{\lambda_j} \boldsymbol{a}_j + \bar{\boldsymbol{x}}_t \\
			\frac{1}{\lambda_j} \boldsymbol{b}_j + \bar{\boldsymbol{\xi}}_t
		\end{array}\right) \\
  &=\left(\begin{array}{c}\frac{1}{\lambda_j} \boldsymbol{a}_j + \bar{\boldsymbol{x}}_0 \\
			\frac{1}{\lambda_j} \boldsymbol{b}_j + \bar{\boldsymbol{\xi}}_0
		\end{array}\right) + t{\boldsymbol{d}}\\
  &= \hat{\boldsymbol{z}}_{j,0}+ t{\boldsymbol{d}}.
    \end{align}
    Furthermore, from (\ref{Score_simple}) we know that
    \begin{align}
        F_{jt}(\boldsymbol{z},\theta) &= \phi\left(C_j - \frac{\lambda_j}{2}\left\Vert\boldsymbol{z}-\hat{\boldsymbol{z}}_{j,t} \right\Vert^2\right) e^{- \gamma_j \theta} \\
        &=\phi\left(C_j - \frac{\lambda_j}{2}\left\Vert(\boldsymbol{z}-t{\boldsymbol{d}})-\hat{\boldsymbol{z}}_{j,0} \right\Vert^2\right)e^{ - \gamma_j \theta} \\
        &=F_{j0}(\boldsymbol{z}-t{\boldsymbol{d}},\theta).
    \end{align}
\end{proof}
The following proposition is an extension of Proposition \ref{pro5}. It demonstrates that there may be even more interesting phenomena in the presence of dominant consumers.
\begin{proposition}
	\label{pro7}
	Let $\rho_{jt}({\boldsymbol{z}}|\theta) = K_{jt} \phi\left(C_j - \frac{\lambda_j}{2}\left\Vert\boldsymbol{z}-\hat{\boldsymbol{z}}_{j,t} \right\Vert^2\right)\cdot e^{ -\gamma_j \theta}\in \mathcal{P}_{jt}.$ There exists bounded positive function $\tilde{B}_t(\cdot)\leq B_t(\cdot)$ such that on
	\begin{equation}
		\Psi_{\tilde{B}_t} =\left\{(K_{1t},\ldots,K_{mt}):K_{1t}>0,0<K_{jt}<\tilde{B}_t(K_{1t})\right\}\subseteq \Psi_{B_t},
	\end{equation}
	there are 
	\begin{equation}
		\label{eq62}
		\frac{\partial}{\partial K_{jt}}\int_t^{t'}K_{jt}F_{j\tau}\left({\boldsymbol{z}}_t^{(s)}(I),\theta\left({\boldsymbol{z}}_t^{(s)}(I)\right)\right) \mathrm{d}\tau
		> \sum_{2\leq i\leq m,i\neq j}\left|\frac{\partial}{\partial K_{it}}\int_t^{t'}K_{jt} F_{j\tau}\left({\boldsymbol{z}}_t^{(s)}(I),\theta\left({\boldsymbol{z}}_t^{(s)}(I)\right)\right) \mathrm{d}\tau\right|>0.
	\end{equation}
	for $j=2,3,\ldots,m,$  $t' \in (t+t_{\epsilon},\infty)$, some $t_{\epsilon}>0$ and $I\in\mathcal{I}$
\end{proposition}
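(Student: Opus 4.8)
The plan is to regard the store's chosen style and price as smooth functions of the level vector $(K_{1t},\ldots,K_{mt})$ and, in each partial derivative, to separate a ``direct'' scaling term from an ``indirect'' term routed through the store's optimal response. Write
$$
G_j(K_{1t},\ldots,K_{mt}) = \int_t^{t'} F_{j\tau}\!\left(\boldsymbol{z}_t^{(s)}(I),\theta\!\left(\boldsymbol{z}_t^{(s)}(I)\right)\right)\mathrm{d}\tau,
$$
so the quantity in the statement is $K_{jt}G_j$. By Proposition \ref{pro5}, on $\Psi_{B_t}$ the decision $\left(\boldsymbol{z}_t^{(s)}(I),\theta(\boldsymbol{z}_t^{(s)}(I))\right)$ coincides with the ordinary argmax and is smooth in $(K_{1t},\ldots,K_{mt})$, which legitimizes the chain rule and yields
$$
\frac{\partial}{\partial K_{jt}}\left(K_{jt}G_j\right) = G_j + K_{jt}\frac{\partial G_j}{\partial K_{jt}}, \qquad \frac{\partial}{\partial K_{it}}\left(K_{jt}G_j\right) = K_{jt}\frac{\partial G_j}{\partial K_{it}}\ \ (i\neq j).
$$
The whole mechanism is that the diagonal derivative carries the order-one term $G_j>0$, whereas every off-diagonal derivative, as well as the correction to the diagonal one, is proportional to $K_{jt}$.

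Next I would control three ingredients uniformly in $t'$. First, a positive lower bound for $G_j$: since $F_{j\tau}>0$ is continuous and, on the small-$K$ region, $\boldsymbol{z}_t^{(s)}(I)$ stays in a compact neighborhood of type $1$'s constrained optimum (Proposition \ref{pro5} and the dominance discussion following it), $\int_t^{t+t_\epsilon}F_{j\tau}\,\mathrm{d}\tau\geq c_j>0$, and adjoining the remaining nonnegative mass only increases $G_j$, so $G_j\geq c_j$ for every $t'\geq t+t_\epsilon$. Second, upper bounds for the sensitivities $\partial G_j/\partial K_{it}=\int_t^{t'}\nabla F_{j\tau}\cdot\partial_{K_{it}}\!\left(\boldsymbol{z}_t^{(s)},\theta\right)\mathrm{d}\tau$: the decision-sensitivities are bounded by smoothness on a compact parameter set, and the gradient factor is integrable in $\tau$ because, by the time-translation invariance of Proposition \ref{pro6}, $F_{j\tau}(\boldsymbol{z}_t^{(s)},\cdot)=F_{j0}(\boldsymbol{z}_t^{(s)}-\tau\boldsymbol{d},\cdot)$ drifts out to where $\phi$ and its derivative are integrable (using $\int_{\mathbb{U}}\phi<\infty$ and $\boldsymbol{d}\neq\boldsymbol{0}$), so these derivatives are bounded uniformly in $t'$.

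With these bounds the first inequality is immediate: $G_j + K_{jt}\partial_{K_{jt}}G_j\geq c_j - K_{jt}M$ while $\sum_{i\neq j}K_{jt}\left|\partial_{K_{it}}G_j\right|\leq K_{jt}M'$ for constants $M,M'$ depending on $K_{1t}$ only through the compact neighborhood. Choosing $\tilde{B}_t(K_{1t})$ to be the minimum of $B_t(K_{1t})$ and a threshold forcing $K_{jt}$ small enough that $c_j-K_{jt}M>K_{jt}M'$ gives diagonal dominance, and taking the infimum of these thresholds over $j=2,\ldots,m$ produces a single bounded positive $\tilde{B}_t(\cdot)\leq B_t(\cdot)$.

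The remaining, and genuinely delicate, point is strict positivity of the off-diagonal sum, i.e.\ that $G_j$ really moves with some $K_{it}$, $i\neq j$, $i\geq 2$. Here I would appeal to the first-order conditions characterizing $\boldsymbol{z}_t^{(s)}(I)$ to show that the optimum responds nontrivially to every type's level, so $\partial_{K_{it}}\!\left(\boldsymbol{z}_t^{(s)},\theta\right)\neq\boldsymbol{0}$, and then argue that over a horizon exceeding $t_\epsilon$ the accumulated inner product $\int_t^{t'}\nabla F_{j\tau}\cdot\partial_{K_{it}}\!\left(\boldsymbol{z}_t^{(s)},\theta\right)\mathrm{d}\tau$ cannot vanish, since the shifting optimal style $\hat{\boldsymbol{z}}_{j,\tau}$ sweeps $\nabla F_{j\tau}$ through a range of directions; $t_\epsilon$ is precisely the window length ruling out accidental cancellation. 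This non-vanishing of the accumulated indirect effect is the main obstacle, whereas the diagonal-dominance estimate is a clean consequence of the $O(1)$-versus-$O(K_{jt})$ separation.
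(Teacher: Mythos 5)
Your argument for the first inequality is essentially the paper's own proof: the same decomposition of $\frac{\partial}{\partial K_{it}}\left(K_{jt}G_j\right)$ into a direct $O(1)$ term $G_j$ appearing only on the diagonal plus indirect terms of order $K_{jt}$ routed through the chain rule applied to the smooth optimal response $\left(\boldsymbol{z}_t^{(s)}(I),\theta\left(\boldsymbol{z}_t^{(s)}(I)\right)\right)$ supplied by Proposition \ref{pro5}; the same uniform-in-$t'$ positive lower bound on $G_j$ and uniform upper bound on the sensitivities (relying, as you note, on the drift $\boldsymbol{d}$ and the integrability of $\phi$); and the same shrinking of $\tilde{B}_t$ below $B_t$ to force diagonal dominance. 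The paper is merely more explicit about the indirect term, writing $\frac{\partial}{\partial K_{it}}\left(\boldsymbol{z}_t^{(s)},\theta\right)=H^{-1}_{L_m}\cdot \mathbf{P}^{(i)}\boldsymbol{\alpha}^{(i)}_2$ and bounding the inner product via the eigenvalues of the Hessian of the Lagrangian, which is the same estimate you describe in words. Where you diverge is in isolating the final strict inequality $\sum_{2\leq i\leq m,\,i\neq j}\left|\cdot\right|>0$ as a separate non-degeneracy claim: you are right that diagonal dominance alone does not yield it, and your sketch (non-vanishing of the decision sensitivities plus non-cancellation of the accumulated inner product over a window of length $t_\epsilon$) is left uncarried-out. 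For what it is worth, the paper's proof does not address that inequality either --- it stops at the $o(1)$ estimate and concludes that ``the existence of $\tilde{B}_0$ is clearly'' --- so on this point your proposal is more candid about what remains to be shown rather than less complete than the published argument.
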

\begin{proof}
    In \ref{append2}.
\end{proof}
Proposition \ref{pro7} illustrates that when the level of purchase probability of dominant consumers is relatively high to a certain point, the interaction effect between other different types of consumers becomes low.
\begin{example}
    We still consider illustrative example in Example \ref{ex1}. But now half of the old people are younger (called middle-aged) with heavier tastes, and they prefer $5$g of salt. We envision that the tastes of these three types of consumers change over time and that their most preferred amount of salt increased by $1$g per year. Their scoring functions at time $t$ are
    \begin{equation}
        F_{1t}(x,\xi,\theta) = \exp\left(1-(x-3-t)^2-\left(\xi-5\right)^2-\theta\right),
	\end{equation}\begin{equation}
	    F_{2t}(x,\xi,\theta) = \exp\left(1-(x-1-t)^2-\left(\xi-5\right)^2-\theta\right),
	\end{equation}\begin{equation}
	    F_{2t}(x,\xi,\theta) = \exp\left(1-(x-5-t)^2-\left(\xi-5\right)^2-\theta\right),
	\end{equation} respectively. We consider its integral with respect to time
 \begin{equation}
     \check{F}_{j}(x,\xi,\theta)=\int_0^{\infty} F_{jt}(x,\xi,\theta)\mathrm{d}t,\  j=1,2,3.
 \end{equation}
Consider $K_{j0}\check{F}_{j}\left(x_0^{(s)}(20),\xi_0^{(s)}(20),\theta\left(\boldsymbol{z}_0^{(s)}(20)\right)\right)$ as an approximate long-run cumulative value of the generalized purchase probability density. When $K_{10}$ is relatively large compared to $K_{20}$ and $K_{30}$, the first type of consumers become the dominant consumers, and we are going to look at the interaction between the other two types of consumers in this case. Figure \ref{ex4} shows the case where we fix $K_{10} = 1,$ and $K_{20}$ and $K_{30}$ vary in a range less than $1$, respectively. We can see that the purchase probability levels of the two have little cross-influence on each other.
 \begin{figure}[h]
	\centering
	\begin{minipage}[b]{0.4\textwidth}
		\centering
		\includegraphics[width=\textwidth]{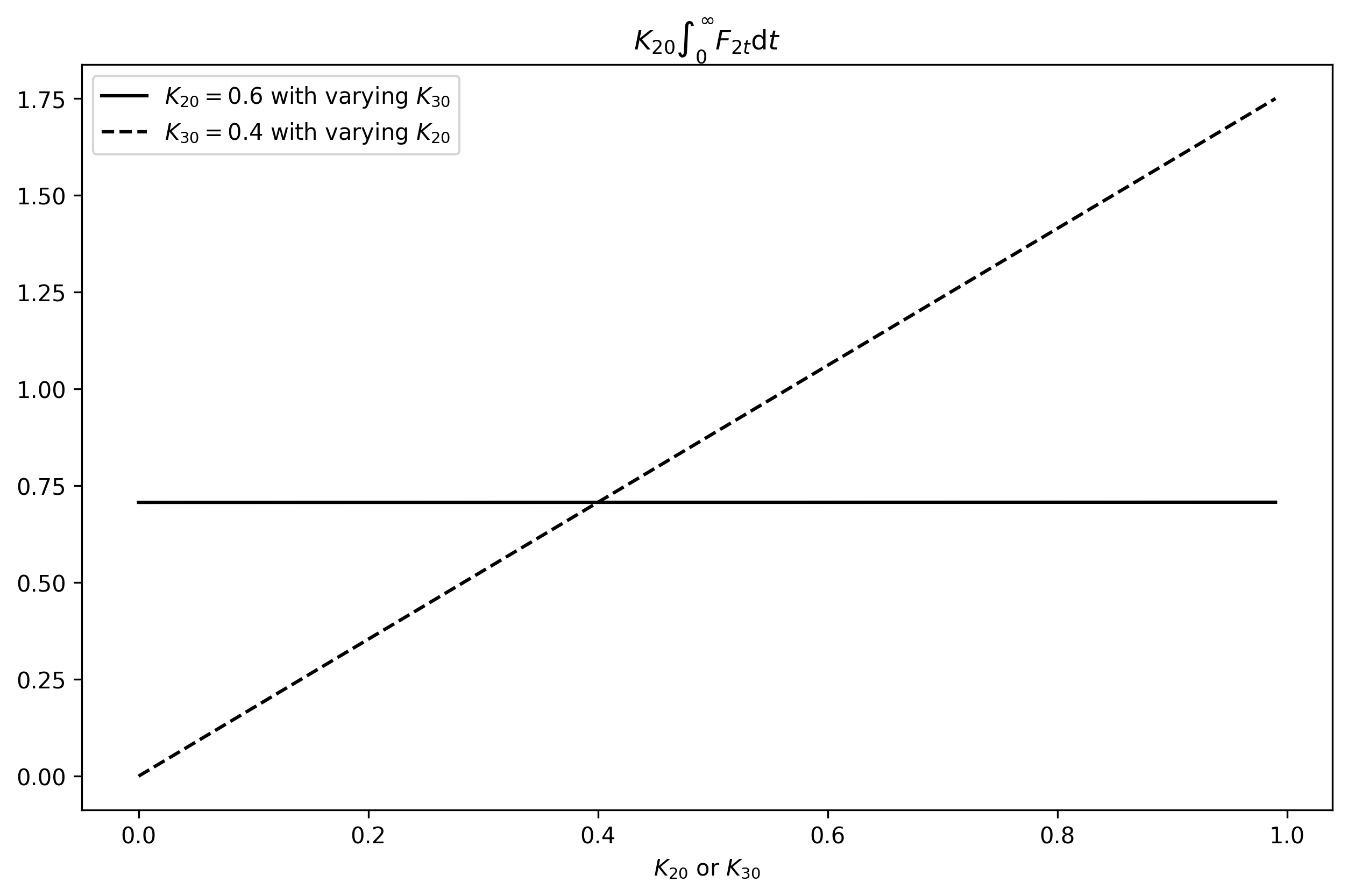}
		\raisebox{1.5ex}{(1)}
		\label{fig:image1}
	\end{minipage}
	\hspace{0.1\textwidth}
	\begin{minipage}[b]{0.4\textwidth}
		\centering
		\includegraphics[width=\textwidth]{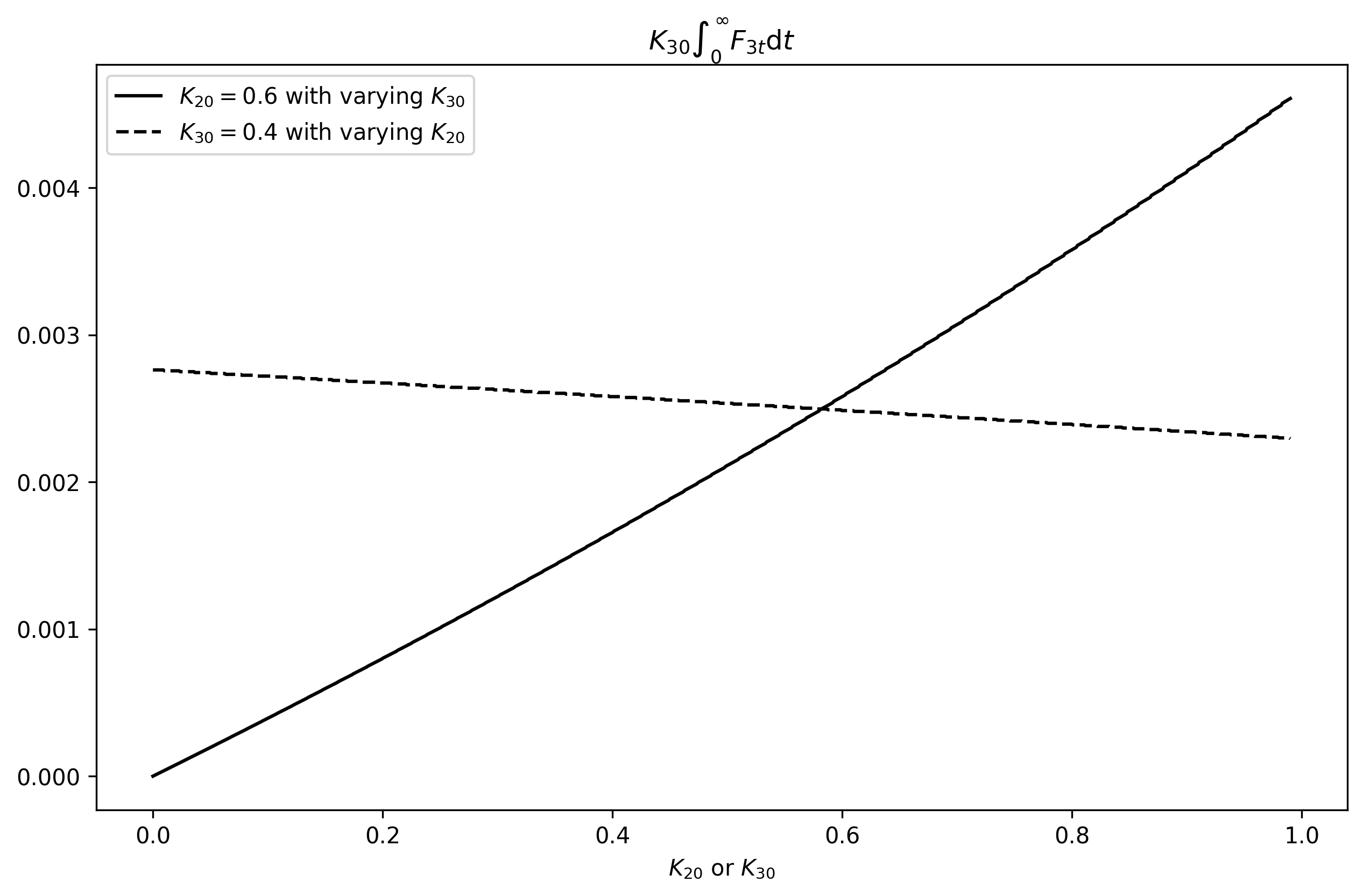}
		\raisebox{1.5ex}{(2)}
		\label{fig:image2}
	\end{minipage}
	\caption{Interaction between $2$th and $3$th consumers}
	\label{ex4}
\end{figure}
\end{example}

For the story told in Propositions \ref{pro5} and \ref{pro7} to happen, it is not in fact necessary that the level of the generalized purchase probability of the dominant consumers is really high. As Proposition {\ref{pro8}} shows, when a type of consumers is a large proportion of the population, even a low level of generalized purchase probability can make stores cater to their preference. 

\begin{proposition}
	\label{pro8}
	For any $b>0$ and $M_0>0,$ when the dominant consumers are a sufficiently large proportion, i.e. $\mathbf{P}^{(1)}$ is sufficiently large,
	$\tilde{B}_t(\cdot)$ in Proposition \ref{pro7} can be chosen such that
	\begin{equation}
		\inf_{K_{1t}\geq b }\tilde{B}_t(K_{1t}) \geq M_0
	\end{equation}
	holds.
\end{proposition}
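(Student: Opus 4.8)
The plan is to show that the qualitative ``dominance'' behavior underlying Propositions \ref{pro5} and \ref{pro7} is governed not by the absolute levels $K_{jt}$ but by the scale-invariant ratios
\begin{equation}
    r_j = \frac{\mathbf{P}^{(j)} K_{jt}}{\mathbf{P}^{(1)} K_{1t}},\qquad j = 2,\ldots,m,
\end{equation}
and that pushing $\mathbf{P}^{(1)}$ toward $1$ forces these ratios to be uniformly small on the target region $R = \{K_{1t}\geq b,\ 0<K_{jt}<M_0,\ j\geq 2\}$. Once $R$ lies inside the region where Proposition \ref{pro7} holds, one simply sets $\tilde{B}_t(K_{1t}) = M_0$ for $K_{1t}\geq b$ (keeping the original construction for $K_{1t}<b$), which yields $\inf_{K_{1t}\geq b}\tilde{B}_t(K_{1t})\geq M_0$.

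First I would record the factorization of the store's objective. Writing $\rho_{jt}(\boldsymbol{z}|\theta)=K_{jt}\phi(C_j-\tfrac{\lambda_j}{2}\Vert\boldsymbol{z}-\hat{\boldsymbol{z}}_{j,t}\Vert^2)e^{-\gamma_j\theta}$ and pulling out the common factor $\mathbf{P}^{(1)}K_{1t}$ gives
\begin{equation}
    \theta\sum_{j=1}^m\mathbf{P}^{(j)}\rho_{jt}(\boldsymbol{z}|\theta) = \mathbf{P}^{(1)}K_{1t}\,\theta\Big(\phi_1(\boldsymbol{z})e^{-\gamma_1\theta}+\sum_{j=2}^m r_j\,\phi_j(\boldsymbol{z})e^{-\gamma_j\theta}\Big),
\end{equation}
where $\phi_j(\boldsymbol{z})=\phi(C_j-\tfrac{\lambda_j}{2}\Vert\boldsymbol{z}-\hat{\boldsymbol{z}}_{j,t}\Vert^2)$. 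The leading positive factor does not affect the $\arg\max$, so the optimizer $\boldsymbol{z}_t^{(s)}(I)$ and its price depend on the non-dominant types only through $(r_2,\ldots,r_m)$, and not on the scale $K_{1t}$ or on $\mathbf{P}$ separately. At $r=0$ the bracket reduces to the single dominant type, whose maximizer is unique and nondegenerate by the Corollary to Proposition \ref{unique} (the strictly convex projection \eqref{exa} together with $\theta^*=1/\gamma_1$); hence uniqueness and the smooth dependence asserted in Proposition \ref{pro5} persist on a neighborhood $\{0\le r_j<\eta\}$, with $\eta>0$ depending only on $t$, $I$ and the fixed shape parameters, which secures the containment $R\subseteq\Psi_{B_t}$.

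Next I would transfer this to the derivative inequality \eqref{eq62} of Proposition \ref{pro7}. Since $K_{it}$ (for $i\neq j$) enters the integral only through the optimizer, the chain rule gives
\begin{equation}
    \frac{\partial}{\partial K_{it}}\int_t^{t'}K_{jt}F_{j\tau}\big(\boldsymbol{z}_t^{(s)}(I),\theta(\boldsymbol{z}_t^{(s)}(I))\big)\,\mathrm d\tau = \frac{\mathbf{P}^{(i)}K_{jt}}{\mathbf{P}^{(1)}K_{1t}}\,A_{ji}(r),\qquad i\neq j,
\end{equation}
where $A_{ji}(r)=\int_t^{t'}\nabla_{(\boldsymbol{z},\theta)}F_{j\tau}\cdot\partial_{r_i}(\boldsymbol{z}_t^{(s)},\theta)\,\mathrm d\tau$ is bounded for $r$ in a compact range, whereas the diagonal derivative contributes the explicit term $\int_t^{t'}F_{j\tau}\,\mathrm d\tau\ge c_0>0$ (a positive lower bound, since $F_{j\tau}>0$ is continuous, the optimizer stays in a compact set, and $t'-t\ge t_\epsilon$) plus an $O(r_j)$ correction. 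On $R$ one has the uniform estimate $\frac{\mathbf{P}^{(i)}K_{jt}}{\mathbf{P}^{(1)}K_{1t}}\le \frac{(1-\mathbf{P}^{(1)})M_0}{\mathbf{P}^{(1)}b}$, so as $\mathbf{P}^{(1)}\to1$ the right-hand side of \eqref{eq62} tends to $0$ uniformly on $R$ while the left-hand side stays at least $c_0/2$; the middle term stays strictly positive because the optimizer still responds to each $r_i$. Consequently, for $\mathbf{P}^{(1)}$ close enough to $1$ the whole chain \eqref{eq62} holds throughout $R$, which is exactly what is needed.

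The main obstacle I anticipate is justifying that the controlling threshold $\eta$ (and the constant $c_0$ and the bounds on $A_{ji}$) can be taken independent of $\mathbf{P}$ and of $K_{1t}$. This rests on the factorization making the optimizer a function of the ratios $r$ alone, and on the single-type problem at $r=0$ being nondegenerate; the delicate point is that the factors $e^{-\gamma_j\theta}$ with distinct $\gamma_j$ prevent a fully clean factorization of the price, so the persistence of uniqueness, of smoothness, and of the strict positivity of the middle term in \eqref{eq62} must be argued by a continuity and compactness argument on the reduced problem parametrized by $(r_2,\ldots,r_m,\theta)$ over the relevant compact range, rather than by a single closed-form estimate.
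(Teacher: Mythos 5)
Your proposal is correct and follows essentially the same route as the paper's proof: both rest on the observation that the smallness conditions behind Propositions \ref{pro5} and \ref{pro7} involve the non-dominant types only through the products $\mathbf{P}^{(j)}K_{jt}$ (equivalently your ratios $r_j$, since the bounds are uniform over $K_{1t}\ge b$), so that a bound $\mathbf{P}^{(j)}K_{jt}\le \tilde{B}'$ translates into $K_{jt}\le \tilde{B}'/\mathbf{P}^{(j)}\ge M_0$ once $\mathbf{P}^{(1)}$ is large. The only cosmetic difference is that you normalize explicitly by the dominant scale $\mathbf{P}^{(1)}K_{1t}$, whereas the paper keeps the products and argues uniformity of the perturbation estimates over $K_{10}\in[b,\infty)$ before dividing by $\mathbf{P}^{(j)}$.
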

\begin{proof}
    In \ref{append2}.
\end{proof}
	
	\subsection{Equilibrium under preference shifting}
From the perspective of game theory, we characterize this economy as a complete information game. There are two parts of players in the game. There are infinitely many (uncountable) store owners denoted as $\mathcal{T}\times\mathcal{I}$. For consumers, since the decisions at time $t_1$ and $t_2\ (t_1 \neq t_2)$ are independent (maximizing the utility at time $t_1$ and $t_2,$ respectively), consumers are also set to be infinitely many (uncountable) denoted as $\mathcal{T}\times\{1,2,\ldots,m\}.$ The action set of the store owner $(t,I)$ is $\left\{(\boldsymbol{z},\theta)\in\mathbb{R}^{p+q}\times\mathbb{R}^+:G_I(\boldsymbol{z})\leq I\right\}$, and the action set of the consumer $(t,j)$ is $\left\{\rho:\int_{\mathbb{Z}_t}\rho\mathrm{d}\boldsymbol{z}=1\right\}.$ The payoff for the store owner $(t,I)$ is $\theta\sum_{j=1}^m \mathbf{P}^{(j)} \rho_{jt}({\boldsymbol{z}}|\theta)\cdot \mathds{1}_{\{(\boldsymbol{z},\theta)\in C_t\}},$ and the payoff for the consumer $(t,j)$ is $U_{jt}(\rho).$

This game has an uncountable number of players and the action set of consumers at time $t$ depends on the strategies of the store owners in the history prior to $t.$ Also, a store owner's optimal strategy exists but is discontinuous with respect to the consumers' action (see Example \ref{ex1}). These lead us not to assert the existence of an equilibrium. However, Theorem \ref{t1} provides a Nash equilibrium (\cite{nash1951non}) for this model.

 We denote the style, price, provided style set, and consumer purchase probability density under equilibrium as ${\boldsymbol{z}}_t^*(I), \theta({\boldsymbol{z}}_t^*(I)), \mathbb{Z}^*_t, \rho^*_{jt},$ respectively.

	\begin{theorem} \label{t1}
		When $\mathbf{P}^{(1)}$ is sufficiently large, there exists $\bar{r}>0$ such that as long as $0<r(I)\leq \bar{r}(I\in\mathcal{I})$ holds, the system reaches a Nash equilibrium when
		\begin{equation}
			{\boldsymbol{z}}_t^*(I) = {\boldsymbol{z}}_0^*(I)+ t{\boldsymbol{d}}, 
		\end{equation}
		\begin{equation}
			\theta({\boldsymbol{z}}_t^*(I)) = \theta({\boldsymbol{z}}_0^*(I)), 
		\end{equation}
            \begin{equation}
		\mathbb{Z}^*_t = \mathbb{Z}^*_0 + t{\boldsymbol{d}}=\{\boldsymbol{z}+t\boldsymbol{d}:\boldsymbol{z}\in\mathbb{Z}^*_0\},
            \end{equation}
		\begin{equation}
			\rho^*_{jt}({\boldsymbol{z}}) = K_j \cdot F_{jt}({\boldsymbol{z}},\theta({\boldsymbol{z}}))\cdot\mathds{1}_{\{\boldsymbol{z}\in \mathbb{Z}^*_t\}},
		\end{equation}
		where ${\boldsymbol{z}}_0^*(I),$ $\theta({\boldsymbol{z}}_0^*(I)), \mathbb{Z}^*_0$ and $K_j(j=1,2,\ldots,m)$ satisfy
		\begin{equation}
			\label{xx1}
			\left({\boldsymbol{z}}_0^*(I),\theta({\boldsymbol{z}}_0^*(I))\right) = \underset{\left(\boldsymbol{z},\theta\right):G_I(\boldsymbol{z}) \leq I}{\arg\widetilde{\max}}\quad\theta\sum_{j=1}^m \mathbf{P}^{(j)}K_j \cdot F_{j0}({\boldsymbol{z}},\theta),
   \end{equation}
   \begin{equation}
   \label{xx7}
			\mathbb{Z}^*_0 = \left\{\boldsymbol{z}_{\tau}^{(s)}(I)| I\in \mathcal{I}, -T(I)\leq\tau\leq 0 \right\},
   \end{equation}
   \begin{equation}
			\label{xx}
			K_j^{-1} = \int_{\mathcal{I}} \int_{0}^{T(I)}F_{j0}({\boldsymbol{z}}_{0}^*(I)-\tau{\boldsymbol{d}},\theta({\boldsymbol{z}}_{0}^*(I)))\mathrm{d}\tau\mathrm{d}I,
		\end{equation}
		where
		\begin{equation}
			\label{xx2}
			T(I) = \sup\left\{\tau>0\bigg| \theta({\boldsymbol{z}}_{0}^*(I))\sum_{j=1}^m \mathbf{P}^{(j)}K_j \cdot F_{j0}({\boldsymbol{z}}_{0}^*(I)-\tau{\boldsymbol{d}},\theta({\boldsymbol{z}}_{0}^*(I)))\geq r(I)\right\}.
		\end{equation}
	\end{theorem}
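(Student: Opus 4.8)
The plan is to verify the two Nash best-response conditions separately and then close the loop with a fixed-point argument that produces self-consistent normalizing constants $K_j$ and lifespans $T(I)$. First I would check consumer optimality: fixing the provided style set $\mathbb{Z}^*_t$ together with the price schedule $\theta(\cdot)$, the consumer $(t,j)$ faces exactly Problem \eqref{opt1}, whose unique maximizer was already shown to be proportional to $F_{jt}(\boldsymbol{z},\theta(\boldsymbol{z}))$ on $\mathbb{Z}^*_t$. Hence the stated $\rho^*_{jt}(\boldsymbol{z})=K_jF_{jt}(\boldsymbol{z},\theta(\boldsymbol{z}))\mathds{1}_{\{\boldsymbol{z}\in\mathbb{Z}^*_t\}}$ is the unique best response provided $K_j$ is the normalizing constant, i.e. provided $\int_{\mathbb{Z}^*_t}\rho^*_{jt}=1$; rewriting this integral over the $(I,\tau)$-parametrization of $\mathbb{Z}^*_t$ (an alive store of investment $I$ opened at time $\tau\in[t-T(I),t]$) and applying the translation identity $F_{jt}(\boldsymbol{z}_\tau^*(I),\theta)=F_{j0}(\boldsymbol{z}_0^*(I)-(t-\tau)\boldsymbol{d},\theta)$ from Proposition \ref{pro6} reduces the normalization to the time-invariant condition \eqref{xx}.

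Next I would check store-owner optimality. Because a single store is a measure-zero player in the uncountable set $\mathcal{T}\times\mathcal{I}$, a unilateral deviation leaves both $\mathbb{Z}^*_t$ and the constants $K_j$ unchanged, so the deviating owner evaluates demand against the generalized density $K_jF_{jt}(\cdot,\cdot)\in\mathscr{P}_{jt}$ extended to the whole style space; in this regime the constraint $(\boldsymbol{z},\theta)\in C_t$ is vacuous and the owner solves Problem \eqref{opt0} under the $\widetilde{\max}$ rule of Assumption \ref{ass1}. Substituting the time-translation identity of Proposition \ref{pro6} and the change of variable $\boldsymbol{z}\mapsto\boldsymbol{z}-t\boldsymbol{d}$ — which leaves the constraint $G_I(\boldsymbol{z})\le I$ invariant because $G_I$ depends only on the storefront coordinates while $\boldsymbol{d}=(\boldsymbol{c}',\boldsymbol{0}')'$ acts only on the product coordinates — maps the time-$t$ objective onto the time-$0$ objective. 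Since $\mathbf{P}^{(1)}$ is taken sufficiently large, Proposition \ref{pro8} places us on the admissible region $\Psi_{\tilde{B}_t}$ of Proposition \ref{pro7}, where $\widetilde{\max}=\max$ and the optimizer is unique and smooth; the resulting identification gives $\boldsymbol{z}_t^*(I)=\boldsymbol{z}_0^*(I)+t\boldsymbol{d}$ and $\theta(\boldsymbol{z}_t^*(I))=\theta(\boldsymbol{z}_0^*(I))$, which in turn yields $\mathbb{Z}^*_t=\mathbb{Z}^*_0+t\boldsymbol{d}$ through \eqref{xx7}. Applying the shutdown threshold to the age-$s$ cash-flow density, which by the same identity equals $\theta(\boldsymbol{z}_0^*(I))\sum_j\mathbf{P}^{(j)}K_jF_{j0}(\boldsymbol{z}_0^*(I)-s\boldsymbol{d},\theta(\boldsymbol{z}_0^*(I)))$ and depends only on age, reproduces \eqref{xx2}.

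The crux is the existence of constants $(K_1,\dots,K_m)$ that close the system \eqref{xx1}--\eqref{xx2} simultaneously. I would view this as a fixed point of the map $\Xi$ sending a candidate $(K_j)$ to the new constants computed from \eqref{xx}: the $(K_j)$ enter the store owner's weighted objective \eqref{xx1}, fixing $\boldsymbol{z}_0^*(I)$ and $\theta(\boldsymbol{z}_0^*(I))$ and hence, through \eqref{xx2}, the lifespans $T(I)$; these then determine the normalization integrals in \eqref{xx}. Taking $r(I)\le\bar{r}$ small guarantees each $T(I)$ is finite with the defining integrals convergent (by positivity and integrability of $\phi$ and boundedness of $\mathcal{I}$), so $\Xi$ maps a suitable compact box of constants into itself and is continuous by the smoothness established in Propositions \ref{pro5} and \ref{pro7}; Brouwer's theorem then furnishes a fixed point. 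I expect the main obstacle to be exactly this step: one must keep every candidate $(K_j)$ inside the region $\Psi_{\tilde{B}_t}$ where the dominant-consumer structure (and hence $\widetilde{\max}=\max$, uniqueness, and smoothness) holds, and control the feedback of $K_j$ through $T(I)$ back into the normalization. The diagonal-dominance estimate \eqref{eq62} of Proposition \ref{pro7} is the tool that prevents the cross-dependence of the non-dominant constants from destroying the self-map property (and, if one prefers a contraction argument, from breaking uniqueness of the fixed point), while Proposition \ref{pro8} supplies the room in $K_1$ needed for the box to be nonempty once $\mathbf{P}^{(1)}$ is large.
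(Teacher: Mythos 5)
Your decomposition into the two best-response checks plus a closing fixed-point step mirrors the paper's three-step proof, and your first two steps are essentially the paper's Steps 1 and 2 (consumer optimality via Problem \eqref{opt1} and the $(I,\tau)$-reparametrization of the normalization integral; owner optimality via the translation identity of Proposition \ref{pro6} and the invariance of $G_I$ under shifts by $\boldsymbol{d}$). Where you genuinely diverge is the existence of $(K_1,\ldots,K_m)$: you propose a Brouwer fixed point of the map $\boldsymbol{K}\mapsto (V_j(\boldsymbol{K})^{-1})_j$ on a compact box, whereas the paper runs a backward induction (its Lemma in \ref{append2}): for $n=m-1,\ldots,0$ it solves $K_jV_j(\boldsymbol{K})=1$ for the last $m-n$ coordinates given the first $n$, using the intermediate value theorem for existence, the diagonal-dominance estimate \eqref{eq62} to get strict monotonicity of $K_jV_j$ in $K_j$ (hence uniqueness at each stage), and the implicit function theorem with the derivative bounds $|\partial K_j^*/\partial \bar{K}_i|<1$ to control how each solved coordinate feeds back into the next stage. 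The paper's route buys uniqueness of the constructed $\boldsymbol{K}$ within $\check\Psi$ and sidesteps compactness issues; your route is conceptually cleaner but leaves two things unresolved that the paper handles explicitly: (i) the admissible region is $[b_0,\infty)\times[0,\tilde B_0(b_0)]^{m-1}$, unbounded in the $K_1$ direction, so you would still need an a priori upper bound on $V_1(\boldsymbol{K})^{-1}$ (the paper instead uses the boundary behavior $K_1V_1\to\infty$ as $K_1\to\infty$ in its final IVT step); and (ii) continuity of $T(\boldsymbol{K},I)$ in $\boldsymbol{K}$, which the paper secures by taking $r(I)$ small enough that $T$ is the unique zero of the cash-flow density on its eventually-decreasing tail before invoking the implicit function theorem. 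With those two points supplied, your Brouwer argument would go through; as written they are acknowledged but not closed.
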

    \begin{proof} The proof is divided into three steps. The first two steps respectively demonstrate the optimality of the supply and demand sides. The last step proves the existence of the equilibrium.
        
        \textbf{Step1.} When $\rho^*_{jt}({\boldsymbol{z}}) = K_j \cdot F_{jt}({\boldsymbol{z}},\theta({\boldsymbol{z}}))\cdot\mathds{1}_{\{\boldsymbol{z}\in \mathbb{Z}^*_t\}}.$ The store owner's optimization goal at time $0$ is \begin{align}
        \label{opt_local}
		\underset{{{\boldsymbol{z}},\theta}}{\widetilde{\max}} \quad &\theta\sum_{j=1}^m \mathbf{P}^{(j)}K_j \cdot F_{j0}({\boldsymbol{z}},\theta)\cdot\mathds{1}_{\{\boldsymbol{z}\in \mathbb{Z}^*_0\}}\\
		&\text{s.t.\quad} G_I(\boldsymbol{z}) \leq I,\notag\\
     & \qquad \ ({{\boldsymbol{z}},\theta}) \in C_0, \notag
	\end{align}
where
\begin{equation}
    C_0 = \left\{(\boldsymbol{z},\theta):\exists\delta_1>0,\sum_{j=1}^m \mathbf{P}^{(j)}K_j \int_{\{\Vert\boldsymbol{z}_1-\boldsymbol{z}\Vert\leq \delta_1\}\cap \{\boldsymbol{z}_1\in \mathbb{Z}^*_0\}}  F_{j0}({\boldsymbol{z}},\theta)\mathrm{d}\boldsymbol{z}_1>0\right\}.
\end{equation}
Note that $C_0= \mathbb{Z}^*_0$ and $\left({\boldsymbol{z}}_0^*(I),\theta({\boldsymbol{z}}_0^*(I))\right)\in \mathbb{Z}^*_0$. Thus, $\left({\boldsymbol{z}}_0^*(I),\theta({\boldsymbol{z}}_0^*(I))\right)$ defined as (\ref{xx1}) is also the unique solution of (\ref{opt_local}), i.e. the optimal strategy for a store that opens at time $0$ with initial investment $I.$ 
Noting that $F_{jt}({\boldsymbol{z}},\theta)= F_{j0}({\boldsymbol{z}}-t\boldsymbol{d},\theta),\mathbb{Z}^*_t = \mathbb{Z}^*_0 + t\boldsymbol{d}$ and $C_t = C_0 + t\boldsymbol{d},$ The store owner's optimization goal at time $t$ is equivalent to \begin{align}
        \label{opt_local}
		\underset{{{\boldsymbol{z}},\theta}}{\widetilde{\max}} \quad &\theta\sum_{j=1}^m \mathbf{P}^{(j)}K_j \cdot F_{j0}({\boldsymbol{z}}-t\boldsymbol{d},\theta)\cdot\mathds{1}_{\{\boldsymbol{z}-t\boldsymbol{d}\in \mathbb{Z}^*_0\}}\\
		&\text{s.t.\quad} G_I(\boldsymbol{z}-t\boldsymbol{d}) \leq I,\notag\\
     & \qquad \ ({{\boldsymbol{z}-t\boldsymbol{d}},\theta}) \in C_0. \notag
	\end{align}
We know that $\boldsymbol{z}^*_t(I) = \boldsymbol{z}^*_0(I)+t\boldsymbol{d},$ and $\theta\left(\boldsymbol{z}^*_t(I)\right)=\theta\left(\boldsymbol{z}^*_0(I)\right).$ After some algebra, we get \eqref{xx7} and \eqref{xx2}.  

        \textbf{Step2.} For type $j$-th consumers, the optimal purchase probability density satisfies 
	\begin{equation}
		\rho^*_{jt}({\boldsymbol{z}}) = K_{jt} \cdot F_{jt}({\boldsymbol{z}},\theta({\boldsymbol{z}}))\cdot\mathds{1}_{\{\boldsymbol{z}\in \mathbb{Z}^*_t\}},
	\end{equation}
where 
	\begin{align}
		K_{jt}^{-1} &= \int_{\mathbb{Z}^*_t} F_{jt}({\boldsymbol{z}},\theta({\boldsymbol{z}}))\mathrm{d}{\boldsymbol{z}}\\
		& = \int_{\mathcal{I}} \int_{t-T(I)}^{t}F_{jt}({\boldsymbol{z}}_{\tau}^*(I),\theta({\boldsymbol{z}}_{\tau}^*(I)))\mathrm{d}\tau\mathrm{d}I\\
		&=\int_{\mathcal{I}} \int_{t-T(I)}^{t}F_{j0}({\boldsymbol{z}}_{0}^*(I)-(t-\tau){\boldsymbol{d}},\theta({\boldsymbol{z}}_{0}^*(I)))\mathrm{d}\tau\mathrm{d}I\\
		&= \int_{\mathcal{I}} \int_{0}^{T(I)}F_{j0}({\boldsymbol{z}}_{0}^*(I)-\tau{\boldsymbol{d}},\theta({\boldsymbol{z}}_{0}^*(I)))\mathrm{d}\tau\mathrm{d}I.
	\end{align}
	is a constant of $t$, which is exactly $K_j^{-1}$ in \eqref{xx}. So we have 
	\begin{equation}
		\rho^*_{jt}({\boldsymbol{z}}) = K_{j} \cdot F_{jt}({\boldsymbol{z}},\theta({\boldsymbol{z}}))\cdot\mathds{1}_{\{\boldsymbol{z}\in \mathbb{Z}^*_t\}}.
	\end{equation}
        
        \textbf{Step3. } For given $\boldsymbol{K}=(K_1,\ldots,K_m),$ solve \eqref{xx1} to obtain
\begin{equation}
	{\boldsymbol{z}}_{0}^*(\boldsymbol{K},I),\ \theta({\boldsymbol{z}}_{0}^*(\boldsymbol{K},I)).
\end{equation}
Denote that
\begin{equation}
	\psi_j(\boldsymbol{K},t,I) = F_{j0}({\boldsymbol{z}}_{0}^*(\boldsymbol{K},I)-\tau{\boldsymbol{d}},\theta({\boldsymbol{z}}_{0}^*(\boldsymbol{K},I))),
\end{equation}
\begin{equation}
	T(\boldsymbol{K},I) = \sup\left\{\tau>0\bigg| \theta({\boldsymbol{z}}_{0}^*(\boldsymbol{K},I))\sum_{j=1}^m \mathbf{P}^{(j)}K_j \cdot F_{j0}({\boldsymbol{z}}_{0}^*(\boldsymbol{K},I)-\tau{\boldsymbol{d}},\theta({\boldsymbol{z}}_{0}^*(\boldsymbol{K},I)))\geq r(I)\right\}.
\end{equation}
and 
\begin{equation}
	V_j(\boldsymbol{K}) = \int_{\mathcal{I}} \int_{0}^{T(\boldsymbol{K},I)}\psi_j(\boldsymbol{K},t,I)\mathrm{d}\tau\mathrm{d}I.
\end{equation} 
Then we use backward induction to show that there exist  $\boldsymbol{K}=(K_1,\ldots,K_m),$ such that
\begin{equation}
    K_j V_j(\boldsymbol{K}) = 1, j=1,2,\ldots,m.
\end{equation}
Details are in the \ref{append2}.
    \end{proof}
    \begin{corollary}
The equilibrium in Theorem \ref{t1} is Pareto-efficient.
	\end{corollary}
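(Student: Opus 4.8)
The plan is to show that no feasible strategy profile weakly Pareto-dominates the equilibrium of Theorem~\ref{t1}. Recall the game has two families of players: the store owners $\mathcal{T}\times\mathcal{I}$, with payoff equal to the cash-flow density $\theta\sum_{j}\mathbf{P}^{(j)}\rho_{jt}(\boldsymbol{z}|\theta)$, and the consumers $\mathcal{T}\times\{1,\ldots,m\}$, with payoff $U_{jt}(\rho)$. These two sides are coupled only through the endogenous style set $\mathbb{Z}^*_t$ and the normalizing levels $K_j=\big(\int_{\mathbb{Z}_t}F_{jt}\big)^{-1}$: the owners' actions fix the set and the price schedule that enter the consumers' problem, while the consumers' densities enter the owners' payoffs. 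I would assume, for contradiction, that some profile makes one player strictly better off and no player worse off, and then localize where the change must occur.

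First I would dispose of the consumer side. For a fixed style set and price schedule, Assumption~\ref{uf} makes each consumer's problem \eqref{opt1} strictly concave in $\rho$ under the linear normalization constraint, so its maximizer $\rho^{(d)}_{jt}$ is unique---this is precisely the density $\rho^*_{jt}$ used to build the equilibrium. Hence, if the alternative profile leaves $\mathbb{Z}^*_t$ and the prices unchanged, any consumer whose density differs from $\rho^*_{jt}$ is strictly worse off; to hurt no consumer one must keep every $\rho_{jt}=\rho^*_{jt}$, but then every owner's payoff is unchanged, contradicting the existence of a strictly better-off player. A Pareto improvement must therefore alter $\mathbb{Z}^*_t$ or the price schedule, which only the store owners can do.

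Next I would treat deviations on the store side, starting with the easy cases. A single owner has measure zero, so a unilateral relocation leaves $\mathbb{Z}^*_t$ and the $K_j$ intact; by \eqref{xx1} the equilibrium style is then the \emph{unique} cash-flow maximizer against $\rho^*_{jt}$, so the deviator is strictly worse off. For a positive-measure reshaping, I would invoke the break-even definition \eqref{xx2} together with the style-set relation \eqref{xx7}: a style newly placed in the set either fails to be cash-flow-optimal for its cohort $(\tau,I)$, earning strictly less than at $\boldsymbol{z}_\tau^*(I)$, or sits below the threshold $r(I)$, so that owner runs at a loss; deleting a viable style lowers that owner's positive payoff while shrinking every consumer's feasible set. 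Each direction strictly hurts some player.

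The hard part is the genuinely coupled deviation, where a positive measure of owners relocate, the consumers re-optimize, and the normalizers $K_j$ shift, so that a relocated owner's style could become newly optimal against the re-optimized densities---here the two sides cannot be analyzed separately. To close this case I would track each consumer's indirect utility, which on substituting $\rho^{(d)}_{jt}=K_{jt}F_{jt}$ collapses to $U_{jt}=-S_{jt}\ln S_{jt}+\int_{\mathbb{Z}_t}F_{jt}\ln F_{jt}$ with $S_{jt}=\int_{\mathbb{Z}_t}F_{jt}$, against the owners' density payoff, which at a fixed style moves inversely to $S_{jt}$ through $K_{jt}$. The obstacle is that $U_{jt}$ is \emph{not} monotone in the style set---enlarging it by low-score regions can lower utility---so the direction of the owner/consumer trade-off is not free; I would pin it down using the equilibrium fixed-point system \eqref{xx1}--\eqref{xx} and the dominant-consumer structure of Propositions~\ref{pro5}, \ref{pro7} and \ref{pro8}, and I expect making this rigorous for arbitrary positive-measure deviations, rather than infinitesimal ones, to be where the real work lies.
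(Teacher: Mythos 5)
The paper states this corollary without any proof, so there is nothing of the authors' to compare your argument against; it has to stand on its own, and it does not yet do so. The case you yourself flag as ``where the real work lies'' --- a positive-measure coalition of store owners relocating their styles, with consumers re-optimizing and the normalizers $K_j$ shifting --- is precisely the case that separates Pareto efficiency from the Nash property established in Theorem \ref{t1}, and you leave it open. Your first two steps are sound but only re-derive best-response facts: for a fixed style set and price schedule the consumer problem \eqref{opt1} is strictly concave, so any change in $\rho_{jt}$ strictly hurts that consumer; and a single owner is atomless, so a unilateral move cannot beat the value of \eqref{xx1}. Your intermediate claim for positive-measure reshaping --- that a newly placed style ``fails to be cash-flow-optimal for its cohort, earning strictly less than at $\boldsymbol{z}_\tau^{*}(I)$'' --- is evaluated against the \emph{equilibrium} densities, whereas after a positive-measure deviation the relevant comparison is against the re-optimized densities with new $K_j$; that is exactly the coupling you then defer, so this step does not dispose of anything the last step did not already have to handle.

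Beyond incompleteness, you have not signed the trade-off on which the corollary turns. Your own computation of the indirect utility, $U_{jt}=-S_{jt}\ln S_{jt}+\int_{\mathbb{Z}_t}F_{jt}\ln F_{jt}$ with $S_{jt}=\int_{\mathbb{Z}_t}F_{jt}$, shows consumer welfare is a non-monotone functional of the style set, while an owner's payoff at a fixed style moves inversely to $S_{jt}$ through $K_{jt}$; nothing you write excludes a coordinated relocation that lowers some $S_{jt}$ (raising the corresponding owners' payoffs) while leaving every $U_{jt}$ weakly higher. Appealing to the fixed-point system \eqref{xx1}--\eqref{xx} and Propositions \ref{pro5}, \ref{pro7} and \ref{pro8} is a plan, not a proof: those results give smoothness and dominance of the first consumer type near the equilibrium, which at best controls infinitesimal deviations, whereas Pareto efficiency must be checked against arbitrary feasible profiles. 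Until that global comparison is carried out --- or the notion of efficiency is restricted to something your local argument can reach --- the corollary remains unproved.
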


	\begin{corollary}
		Even if each store is second-order rational and consumers are first-order rational, the equilibrium in Theorem \ref{t1} holds.
	\end{corollary}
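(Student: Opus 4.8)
The plan is to translate the epistemic content of the corollary into the two best-response verifications already carried out in the proof of Theorem~\ref{t1}, and then to argue that each verification consumes only the claimed depth of reasoning. I read ``first-order rational'' in the sense that a player is rational (best-responds to its belief about the environment) without needing any belief that other players are rational, and ``second-order rational'' in the sense that a player is rational and in addition believes its opponents are first-order rational, but need not iterate further. Under this reading the goal is to show that the profile of Theorem~\ref{t1} remains a mutual best response once the usual common-knowledge-of-rationality justification of Nash play is replaced by this finite hierarchy.

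First I would treat the consumer side. A consumer $(t,j)$ has payoff $U_{jt}(\rho)$ which, under Assumption~\ref{uf}, depends only on $\rho$, the fixed scoring function $F_{jt}$, and the style set $\mathbb{Z}^*_t$ taken as the environment. Solving Problem~\eqref{opt1} with $\mathbb{Z}^*_t$ held fixed reproduces Step~2 of the proof of Theorem~\ref{t1} and yields $\rho^*_{jt}=K_j F_{jt}(\cdot,\theta(\cdot))\mathds{1}_{\{\cdot\in\mathbb{Z}^*_t\}}$. The crucial observation is that this optimization never references the rationality of the store owners: the consumer simply best-responds to the given set of available styles. Hence first-order rationality of consumers suffices to pin down $\rho^*_{jt}$.

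Next I would treat the store side. A store $(t,I)$ maximizes the cash-flow density $\theta\sum_{j=1}^m\mathbf{P}^{(j)}\rho_{jt}(\boldsymbol{z}|\theta)$, so to compute its best response it must predict the consumers' densities $\rho_{jt}$. Invoking the belief that consumers are first-order rational, the store substitutes $\rho_{jt}=K_j F_{jt}$ and solves Problem~\eqref{opt0} through $\widetilde{\max}$, which is exactly Step~1 and delivers $\boldsymbol{z}^*_t(I)$ together with $\theta(\boldsymbol{z}^*_t(I))$. This is genuinely second-order reasoning (rational, plus a belief that consumers are rational), but I would emphasize that it is no deeper: because the set of store owners is a continuum, an individual store is of measure zero, so a unilateral deviation leaves the aggregate style set $\mathbb{Z}^*_t$ and the normalizing constants $K_j$ of \eqref{xx} unchanged. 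The store therefore treats each $K_j$ as exogenous market data rather than as an object to be derived from the rationality of competing stores, so no belief about other stores' rationality---and \emph{a fortiori} no third- or higher-order reasoning---enters its optimization.

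The main obstacle is precisely this last decoupling argument: one must make rigorous that ``taking $K_j$ as given'' is legitimate under bounded rationality and does not smuggle in common knowledge through the back door. I would formalize it by noting that the fixed-point system \eqref{xx}--\eqref{xx2} determining the $K_j$ is a market-clearing identity holding at the aggregate level, and that each player's optimality in Steps~1 and~2 is verified against these constants \emph{as data}; since neither the consumer's nor the store's first-order condition changes when one drops the assumption that the other side iterates its reasoning beyond the stated order, the profile of Theorem~\ref{t1} is still a Nash equilibrium. Combining the two sides then gives the corollary: consumers at order one and stores at order two suffice, so common knowledge of rationality is not needed.
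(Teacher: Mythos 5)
The paper states this corollary without any proof, so there is no argument of the authors' to compare yours against; what follows is an assessment of your reconstruction on its own terms. Your decomposition matches the only plausible reading of the claim: consumers best-respond to the realized style set $\mathbb{Z}^*_t$, which enters their problem \eqref{opt1} as part of the game structure rather than as a consequence of anyone's rationality, so plain (first-order) rationality pins down $\rho^*_{jt}$ exactly as in Step~2 of Theorem~\ref{t1}; stores must forecast $\rho_{jt}$, which requires one layer of belief in consumer rationality, and the continuum of store owners makes any single store's deviation irrelevant to the aggregates, so no reasoning about rival stores' rationality is needed. That is the right skeleton, and you correctly identify the one genuinely delicate point: the store's objective in \eqref{xx1} involves the normalizing constants $K_j$, which are fixed-point objects determined by the whole population's supply via \eqref{xx7}--\eqref{xx2}. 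Your resolution---that each store takes $K_j$ as observable market data rather than deriving it from beliefs about other stores' rationality---is a defensible reading, but it is an extra informational assumption (correct beliefs about the aggregate, not merely second-order rationality in the Aumann--Brandenburger sense), and the same caveat applies to the consumer's knowledge of $\mathbb{Z}^*_t$. Since the corollary itself is stated informally and the paper supplies no epistemic model in which "$n$-th order rational" is defined, your proof is about as rigorous as the statement permits; if you wanted to tighten it, you would need to state explicitly that the profile and the aggregates $(\mathbb{Z}^*_t, K_1,\ldots,K_m)$ are mutually known, and then show that under that knowledge the two best-response verifications of Steps~1 and~2 consume exactly the claimed orders of rationality.
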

	
	\begin{corollary}
		 		Even if there is cooperation between stores opening at the same time, the equilibrium in Theorem \ref{t1} holds.
	\end{corollary}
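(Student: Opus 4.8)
The plan is to exploit the continuum structure of the player set together with the separability of store payoffs once the market aggregates are frozen. The coalition in question is the set $\{(t,I):I\in\mathcal{I}\}$ of all stores sharing a fixed opening time $t$, which is a single time-slice inside the player space $\mathcal{T}\times\mathcal{I}$ with $\mathcal{T}=\mathbb{R}$. First I would record the structural fact that this slice is a null set: it carries zero mass in the double integral over $\mathcal{I}$ and the lifespan window that defines the consumer normalization constants $K_j$ in \eqref{xx}.

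Next I would argue that any joint (coalitional) deviation by the time-$t$ stores leaves the consumers' equilibrium response unchanged. A store opening at $t$ remains open over $[t,t+T(I)]$, so such a deviation can in principle alter the provided style sets $\mathbb{Z}^*_\tau$ for $\tau\geq t$; however, at each such $\tau$ the time-$t$ stores again form a measure-zero slice, so $\mathbb{Z}^*_\tau$ is unchanged up to a null set and the integral $K_j^{-1}$ in \eqref{xx} is unaffected. Hence the consumer purchase densities $\rho^*_{jt}(\boldsymbol{z})=K_j F_{jt}(\boldsymbol{z},\theta(\boldsymbol{z}))\mathds{1}_{\{\boldsymbol{z}\in\mathbb{Z}^*_t\}}$ — and with them the constants $K_j$ and the non-isolation set $C_t$ — stay at their equilibrium values no matter how the coalition coordinates.

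With the $K_j$ frozen, the payoff of each member $(t,I)$ reduces to $\theta\sum_{j=1}^m \mathbf{P}^{(j)} K_j F_{jt}(\boldsymbol{z},\theta)\mathds{1}_{\{(\boldsymbol{z},\theta)\in C_t\}}$, a function of that store's own action $(\boldsymbol{z},\theta)$ alone. The coalition's payoff profile is therefore separable: member $(t,I)$'s payoff does not depend on the actions chosen by the other members. Consequently maximizing the coalition's aggregate payoff decomposes into the independent single-store problems, and any jointly optimal profile must maximize each member's payoff separately — which is precisely the problem \eqref{xx1} solved in Step 1 of the proof of Theorem \ref{t1}, whose unique maximizer is $(\boldsymbol{z}_t^*(I),\theta(\boldsymbol{z}_t^*(I)))$. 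Thus cooperation can neither raise the coalition's total cash flow nor make any member strictly better off, and the equilibrium of Theorem \ref{t1} survives cooperation among same-time stores.

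The step I expect to be the main obstacle is the first one: rigorously confirming that the coalition deviation is genuinely irrelevant to the aggregates. One must check that changing the styles of a measure-zero set of stores leaves $\mathbb{Z}^*_\tau$ unchanged up to null sets at every future time, so that the normalization integrals in \eqref{xx} are untouched — and, relatedly, that no coalition can manufacture a positive cross-store externality (for instance through the scale condition defining $C_t$) that is absent from the separable individual problem. Once this separability is secured, the optimality conclusion is immediate from the uniqueness already established in Step 1.
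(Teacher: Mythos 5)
The paper states this corollary without any proof at all --- it appears, together with the Pareto-efficiency and higher-order-rationality corollaries, immediately after the proof of Theorem \ref{t1} and is left as an unargued remark. Your proposal therefore cannot be compared line-by-line with an official argument, but it supplies exactly the reasoning the authors appear to be relying on, and it is sound. The two load-bearing observations are (i) the coalition $\{t\}\times\mathcal{I}$ contributes only the slice $\{(I,s):s=\tau-t\}$ to the two-dimensional domain of the normalization integral $K_j^{-1}=\int_{\mathcal{I}}\int_0^{T(I)}F_{j0}(\cdot)\,\mathrm{d}\tau\,\mathrm{d}I$ in \eqref{xx} at every later observation time $\tau$, so the consumers' best responses $\rho^*_{j\tau}$ and the constants $K_j$ are unchanged by any joint deviation; and (ii) once the $K_j$ are frozen, the payoff $\theta\sum_j\mathbf{P}^{(j)}K_jF_{j\tau}(\boldsymbol{z},\theta)\mathds{1}_{\{(\boldsymbol{z},\theta)\in C_\tau\}}$ of store $(t,I)$ depends on its own action only, so joint optimization decomposes into the individual problems already solved in Step 1. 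You are also right to flag the $C_\tau$ scale condition as the one place where a cross-store externality could in principle arise; it is worth making explicit why it cannot: the deviators' styles are the image of the one-dimensional set $\mathcal{I}$ under the style map at a single instant, hence carry no mass with respect to the measure supporting the consumers' densities, so a coalitional migration to a fresh region of style space leaves every deviator outside $C_\tau$ with zero payoff, while a deviation inside the support reduces to the separable individual problem. This tacitly uses $p+q\ge 2$ (or, more precisely, that a one-parameter curve is null for the measure on $\mathbb{Z}^*_\tau$), which is consistent with the paper's setup but deserves a sentence. With that caveat recorded, your argument is complete and is, as far as one can tell, the intended one.
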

	
	From Theorem \ref{t1} we can see some characteristics of the equilibrium state. First, the styles provided by newly entered stores also shift linearly in the direction of preference shifting over time, while prices are only related to the initial investment of the stores. Second, the purchase probability density for each type of consumers can be expressed as a monotonic decreasing function of the distance of the style from the optimal preference style, which indicates that the probability of a consumer purchasing from a fixed store will decrease to zero over time. Finally, this equilibrium is time stable and the lifespan of a store is only related to its initial investment and not to the time when it opens.

	\subsection{Exponential transform function and theoretical conversion rate in equilibrium}
	In this Nash equilibrium, we focus on the life of a fixed store. Assume that this store opens at time $0$. Under the constraint of $g_I(\boldsymbol{\boldsymbol{\xi}})\leq I,$ it chooses the style of ${\boldsymbol{z}}^*_0(I)$ and the price of $\theta({\boldsymbol{z}}^*_0(I)).$ Then at time $t,$ its aggregated purchase probability density is
	\begin{align}
		\rho_t^*({\boldsymbol{z}}^*_0(I)) &=\sum_{j=1}^m \mathbf{P}^{(j)}\rho^*_{jt}({\boldsymbol{z}}^*_0(I))\\
		&= \sum_{j=1}^m \mathbf{P}^{(j)}K_j \cdot F_{j0}({\boldsymbol{z}}_{0}^*(I)-t{\boldsymbol{d}},\theta({\boldsymbol{z}}^*_0(I))) \\
		&=\sum_{j=1}^m \mathbf{P}^{(j)}K_j \cdot \phi\left(C_j-\frac{\lambda_j}{2}\Vert{\boldsymbol{z}}^*_0(I)-\hat{\boldsymbol{z}}_{j0}\Vert^2+\lambda_j	\langle{{\boldsymbol{z}}^*_0(I)-\hat{\boldsymbol{z}}_{j0},\boldsymbol{d}}\rangle t-\frac{\lambda_j}{2}\Vert{\boldsymbol{d}}\Vert^2t^2-\gamma_j\theta({\boldsymbol{z}}^*_0(I))\right).
	\end{align}
	It can be seen that as time increases, the purchase probability density of type $j$ consumers decreases with the order of $\phi(-t^2),$ which results in a decrease over time in the aggregated purchase probability density. 
	
	The purchase probability of the type $j$ consumers should be calculated as an integral of the density near the style
	\begin{equation}
		\tilde{\beta}_{jt}({\boldsymbol{z}}^*_0(I)) = \int_{\Delta(\boldsymbol{z}_0(I),\epsilon)} \rho^*_{jt}({\boldsymbol{z}}_1)\mathrm{d}{\boldsymbol{z}_1}
	\end{equation}
	for some unobservable $\epsilon$ which is not easy to estimate in reality. But we can boil it down to the initial purchase probability
	\begin{equation}\label{eq:ini_pur_prob}
		\begin{aligned}
			\tilde{\beta}_{jt}({\boldsymbol{z}}^*_0(I)) &= \tilde{\beta}_{j0}({\boldsymbol{z}}^*_0(I)) \frac{\int_{\Delta_t(\boldsymbol{z}_0(I),\epsilon)} \rho^*_{jt}({\boldsymbol{z}}_1)\mathrm{d}{\boldsymbol{z}}_1}{\int_{\Delta_0(\boldsymbol{z}_0(I),\epsilon)} \rho^*_{j0}({\boldsymbol{z}}_1)\mathrm{d}{\boldsymbol{z}}_1}.
		\end{aligned}
	\end{equation}
	By simply assuming that $\Gamma_{\epsilon}(\mathbb{Z}^*_t) = \Gamma_{\epsilon}(\mathbb{Z}^*_0)+t\boldsymbol{d},$ which implies that $|\Delta_t(\boldsymbol{z}_0(I),\epsilon)|=|\Delta_0(\boldsymbol{z}_0(I),\epsilon)|,$ we further get
 \begin{equation}
     \tilde{\beta}_{jt}({\boldsymbol{z}}^*_0(I)) = \tilde{\beta}_{j0}({\boldsymbol{z}}^*_0(I)) \frac{\rho^*_{jt}({\boldsymbol{z}}^*_0(I))}{\rho^*_{j0}({\boldsymbol{z}}^*_0(I))} +o(1)\ (\epsilon \rightarrow 0).
 \end{equation}
 Furthermore, if we choose the transform function as $\phi(\cdot) = \exp(\cdot)$, we have
	$$
	\rho_{jt}^*({\boldsymbol{z}}^*_0(I)) = K_j \exp\left\{ c_j - \frac{1}{2} \lambda_j || {\boldsymbol{z}}^*_0(I) - \hat{\boldsymbol{z}}_{j0} ||^2 \right\} \exp\left\{ \lambda_j	\langle{{\boldsymbol{z}}^*_0(I)-\hat{\boldsymbol{z}}_{j0},\boldsymbol{d}}\rangle t-\frac{\lambda_j}{2}\Vert{\boldsymbol{d}}\Vert^2t^2 \right\}.
	$$
	We further omit $o(1)$ in \eqref{eq:ini_pur_prob} and have
	$$
	\tilde{\beta}_{jt}({\boldsymbol{z}}^*_0(I)) = \tilde{\beta}_{j0}({\boldsymbol{z}}^*_0(I)) \frac{\rho^*_{jt}({\boldsymbol{z}}^*_0(I))}{\rho^*_{j0}({\boldsymbol{z}}^*_0(I))} = \tilde{\beta}_{j0}({\boldsymbol{z}}^*_0(I)) \exp\left\{ \lambda_j	\langle{{\boldsymbol{z}}^*_0(I)-\hat{\boldsymbol{z}}_{j0},\boldsymbol{d}}\rangle t-\frac{\lambda_j}{2}\Vert{\boldsymbol{d}}\Vert^2t^2 \right\}.
	$$
	Hence, we simplify $\tilde{\beta}_{jt}({\boldsymbol{z}}^*_0(I))$ as 
	\begin{equation}
		\tilde{\beta}_{t}(I) = \sum_{j=1}^{m}\mathbf{P}^{(j)} \tilde{\beta}_{j0}(I)\exp\left(\mu_j t- \nu_j t^2\right),
	\end{equation}
	where $\mu_j=\lambda_j	\langle{{\boldsymbol{z}}^*_0(I)-\hat{\boldsymbol{z}}_{j0},\boldsymbol{d}}\rangle$ and $\nu_j=\frac{\lambda_j}{2}\Vert{\boldsymbol{d}}\Vert^2.$
	
	The initial purchase probability represents the probability of consumers who are completely unfamiliar with the store making purchases. This value is determined by the unique nature of the store itself, which depends on the initial investment of the store.
	
	When $t$ is sufficiently large, $\tilde{\beta}_t$ monotonically decreases with respect to $t$ and approaches $0$. This is mainly due to the product style being eliminated by the market due to the style-shifting. However, due to the dispersion of preferences among different types of people, $\tilde{\beta}_t$ may experience some fluctuations before decreasing. Figure \ref{Conversion_Rate_Curves} shows the conversion rate curves of several different shapes. The following theorem demonstrates that the $\tilde{\beta}_t$ undergoes at most one growth before decreasing to $0$ when the product preferences of different types are not dispersed. That is to say, only (1) and (2) in Figure \ref{Conversion_Rate_Curves} will appear when people's product preferences are sufficiently concentrated. In other words, the $\tilde{\beta}_t$ curve has the shape of a rainbow or is a part of it.
	
	\begin{figure}[h]
		\centering
		\begin{minipage}[b]{0.3\textwidth}
			\centering
			\includegraphics[width=\textwidth]{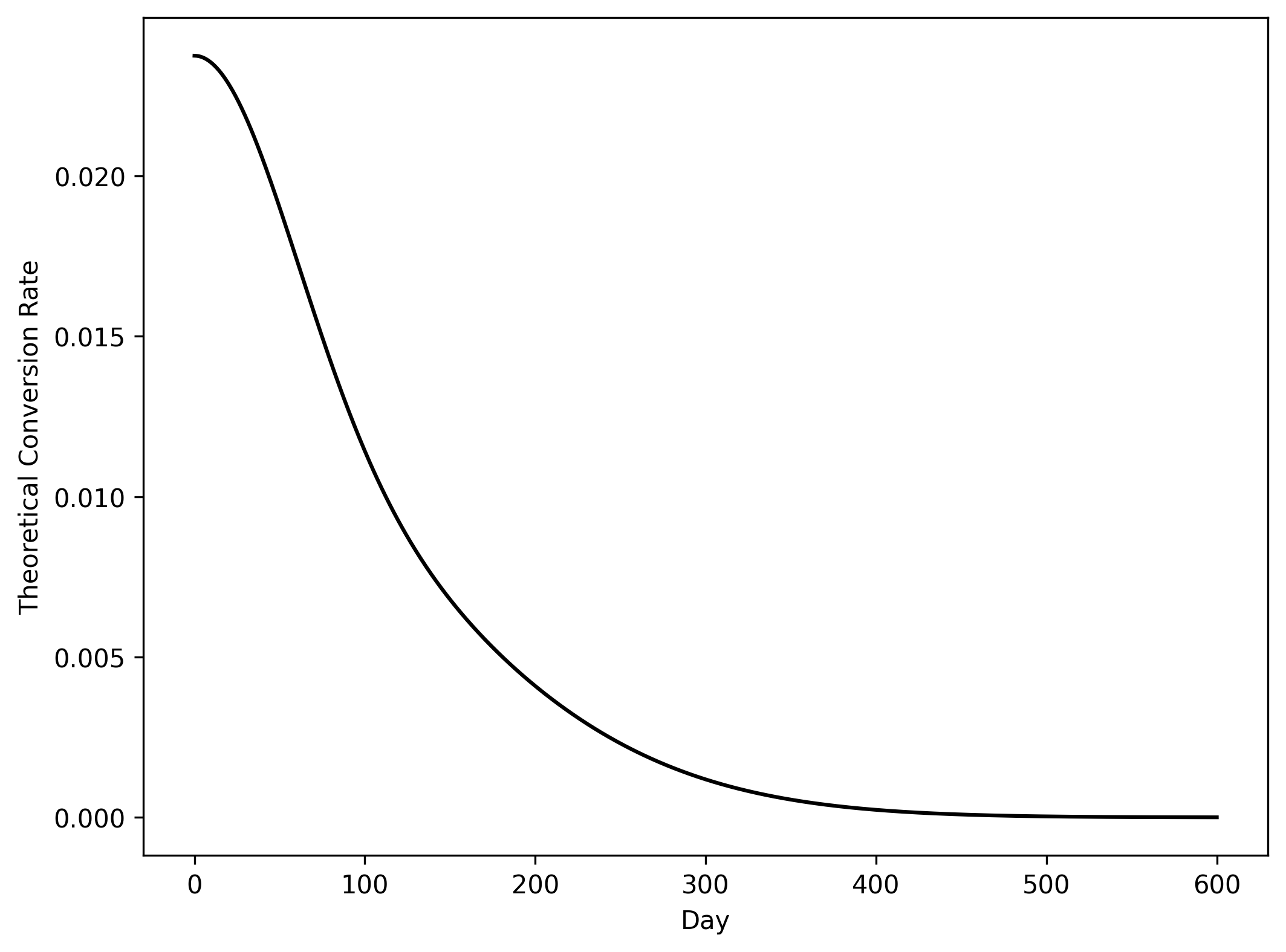}
			\raisebox{1.5ex}{(1)}
			\label{fig:image1}
		\end{minipage}
		\hfill 
		\begin{minipage}[b]{0.3\textwidth}
			\centering
			\includegraphics[width=\textwidth]{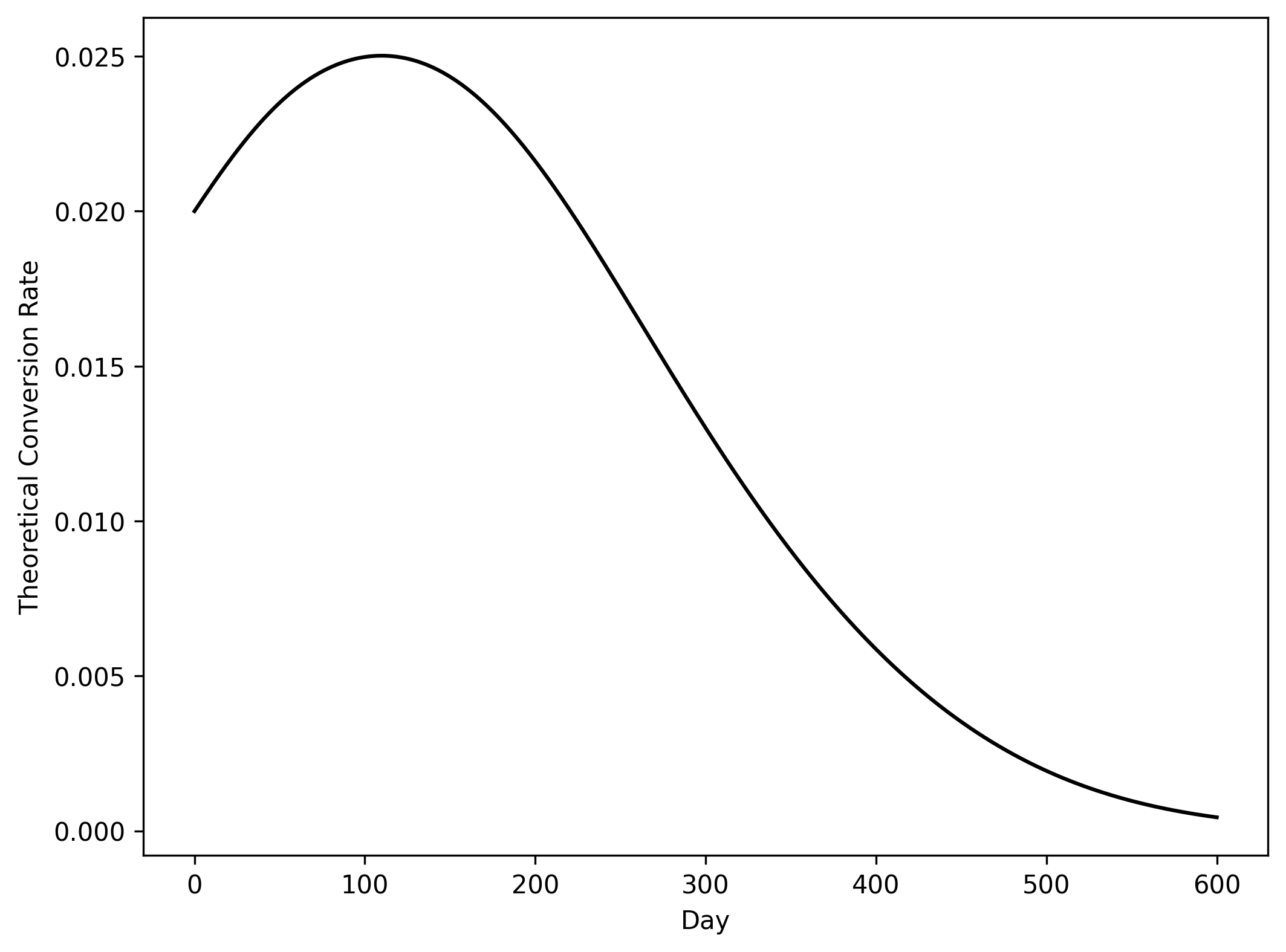}
			\raisebox{1.5ex}{(2)}
			\label{fig:image2}
		\end{minipage}
		\hfill 
		\begin{minipage}[b]{0.3\textwidth}
			\centering
			\includegraphics[width=\textwidth]{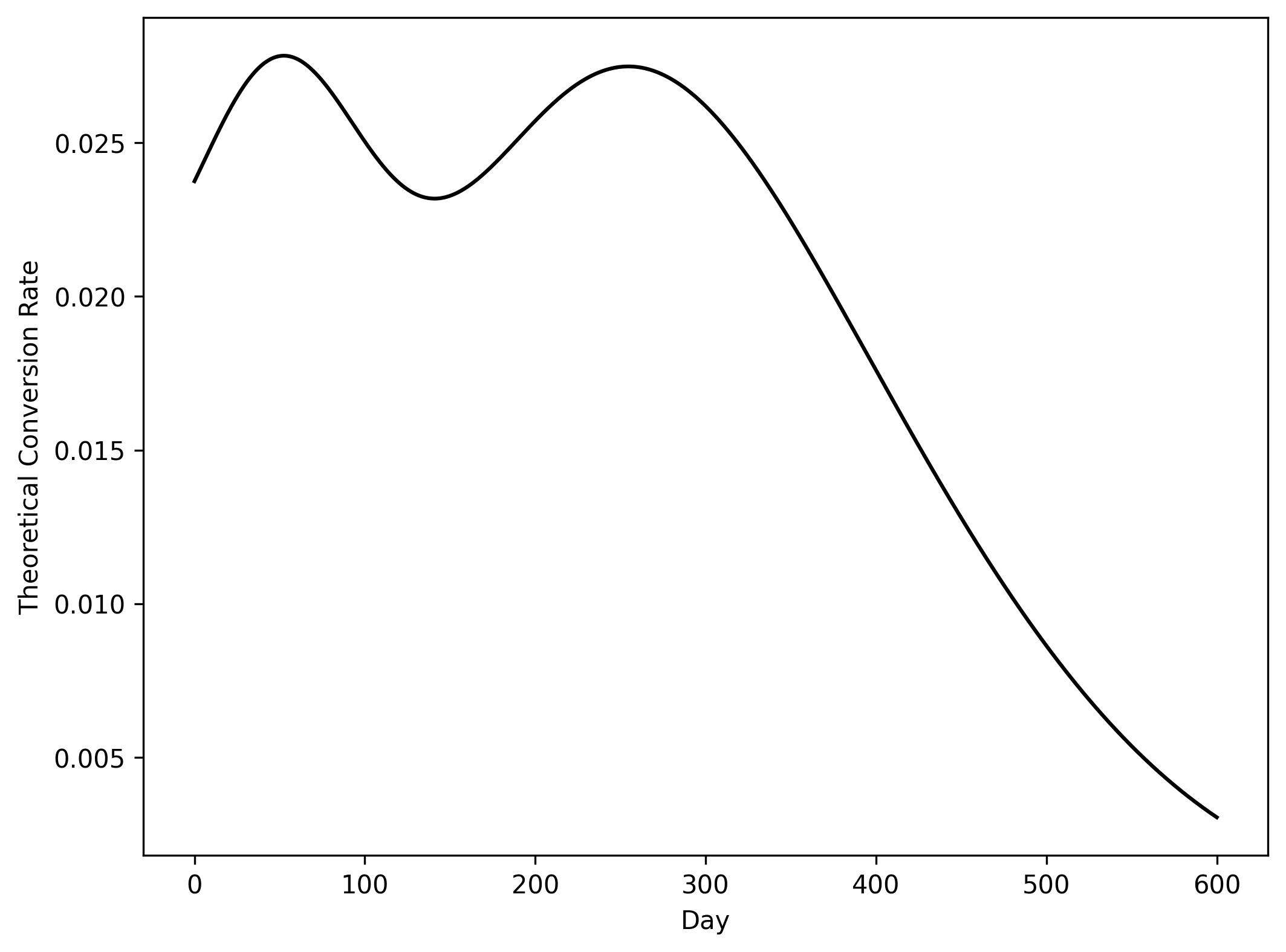}
			\raisebox{1.5ex}{(3)}
			\label{fig:image3}
		\end{minipage}
		\caption{Different Types of Conversion Rate Curves}
		\label{Conversion_Rate_Curves}
	\end{figure}
	
	\begin{proposition} \label{pro9} 
 Fix $I$ and assume that $\lambda_j \leq M$ for some $M>0.$ For any $\boldsymbol{d} \in \mathcal{R}^{p+1},$ there exists $\epsilon(\boldsymbol{d}) >0$ s.t. when $\sup_{1\leq i,j\leq m }\left\Vert \frac{\boldsymbol{a}_i}{\lambda_i}-\frac{\boldsymbol{a}_j}{\lambda_j}\right\Vert<\epsilon,$ $\tilde{\beta}_{t}(I)$ has at most one maximum point on $[0,\infty).$
	\end{proposition}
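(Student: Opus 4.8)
The plan is to pass to the logarithm $L(t)=\log\tilde\beta_t(I)$ and show that it is strictly concave precisely on the small region where its derivative can vanish; this will force $\tilde\beta_t$ to have at most one critical point on $[0,\infty)$. Writing $g_j(t)=\mathbf{P}^{(j)}\tilde\beta_{j0}(I)\exp(\mu_jt-\nu_jt^2)$ so that $\tilde\beta_t=\sum_{j=1}^m g_j$ with every $g_j>0$, each $g_j$ is (after completing the square) a Gaussian bump peaking at $t_j^*=\mu_j/(2\nu_j)=\langle\boldsymbol{z}^*_0(I)-\hat{\boldsymbol{z}}_{j0},\boldsymbol{d}\rangle/\Vert\boldsymbol{d}\Vert^2$. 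Because $\boldsymbol{d}=(\boldsymbol{c}',\boldsymbol{0}')'$, a direct computation gives $t_i^*-t_j^*=\langle \tfrac{\boldsymbol{a}_j}{\lambda_j}-\tfrac{\boldsymbol{a}_i}{\lambda_i},\boldsymbol{c}\rangle/\Vert\boldsymbol{c}\Vert^2$, so by Cauchy--Schwarz the hypothesis yields $|t_i^*-t_j^*|<\epsilon/\Vert\boldsymbol{c}\Vert$. Hence all peaks lie in an interval $J=[\min_j t_j^*,\max_j t_j^*]$ whose length $\delta$ satisfies $\delta\le \epsilon/\Vert\boldsymbol{c}\Vert$.

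The main computation I would carry out is a weighted-variance formula for $L''$. With posterior weights $w_j(t)=g_j(t)/\tilde\beta_t$ and log-derivatives $\ell_j(t)=(\log g_j)'(t)=\mu_j-2\nu_jt=2\nu_j(t_j^*-t)$, one has $L'=\sum_j w_j\ell_j$; using $w_j'=w_j(\ell_j-L')$ and $\ell_j'=-2\nu_j$ I would obtain
$$L''(t)=\var_{w(t)}(\ell)-2\bar\nu(t),\qquad \bar\nu(t)=\sum_j w_j(t)\nu_j.$$
From this I would extract two facts. First, a critical point $L'(t_0)=0$ forces $t_0=\big(\sum_j w_j\nu_jt_j^*\big)/\big(\sum_j w_j\nu_j\big)$, a weighted average of the $t_j^*$, so every critical point lies in $J$. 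Second, on $J$ we have $|t_j^*-t|\le\delta$, hence $\ell_j^2\le4\nu_j^2\delta^2$ and $\var_w(\ell)\le\E_w[\ell^2]\le4\delta^2\nu_{\max}\bar\nu$, which gives $L''\le 2\bar\nu(2\delta^2\nu_{\max}-1)$. Since $\nu_{\max}\le\tfrac{M}{2}\Vert\boldsymbol{c}\Vert^2$ and $\delta^2\le\epsilon^2/\Vert\boldsymbol{c}\Vert^2$, we get $2\delta^2\nu_{\max}\le M\epsilon^2$, so choosing $\epsilon(\boldsymbol{d})<1/\sqrt{M}$ makes $L''<0$ throughout $J$ (here $\bar\nu>0$ because at least one $\lambda_j>0$; the fully degenerate case $\boldsymbol{d}=0$ or all $\lambda_j=0$ makes $\tilde\beta_t$ constant and is trivial).

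I would then conclude as follows: on $J$ the derivative $L'$ is strictly decreasing, so it has at most one zero in $J$, and combined with the localization of critical points it has at most one zero on all of $\R$. At that zero $L'$ passes from positive to negative, i.e.\ it is a maximum; consequently $\tilde\beta_t$ is either monotone decreasing or increases then decreases on $[0,\infty)$, and has at most one maximum point, yielding the claimed rainbow shape.

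The main obstacle is that global log-concavity is simply false---two well-separated bumps produce a bimodal $\tilde\beta_t$, which is exactly the dispersed-preference case the proposition must exclude. The whole argument therefore hinges on first showing that critical points cannot escape the tiny interval $J$, and then that the (nonnegative) variance term in $L''$, which scales like $\delta^2$, is dominated by the ever-present concave drift $-2\bar\nu$ once the peaks are $\epsilon$-close. The fact that the curvatures $\nu_j\propto\lambda_j$ carry only an upper bound $M$ and no lower bound is handled by the factorization $\E_w[\ell^2]\le4\delta^2\nu_{\max}\bar\nu$, which keeps $\bar\nu$ as a common factor so that no lower bound on the $\lambda_j$ is needed.
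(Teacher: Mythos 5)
Your proof is correct, but it takes a genuinely different route from the paper's. The paper works directly with $\tilde{\beta}_t$ and splits $[0,\infty)$ into two pieces: on $[0,\delta_0]$ it shows $\partial^2\tilde{\beta}_t/\partial t^2<0$ termwise, and on $[\delta_0,\infty)$ it shows $\partial\tilde{\beta}_t/\partial t<0$ termwise, both of which require $|\mu_j'|=|\langle \boldsymbol{z}_0^*(I)-\hat{\boldsymbol{z}}_{j0},\boldsymbol{d}\rangle|$ to be small; that in turn rests on an auxiliary localization step, $\Vert \boldsymbol{x}_0^*(I)-\hat{\boldsymbol{x}}_{j0}\Vert\leq\sup_{i,j}\Vert\hat{\boldsymbol{x}}_{i0}-\hat{\boldsymbol{x}}_{j0}\Vert$, i.e.\ a claim about where the optimal style sits relative to the consumers' ideal points. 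Your argument sidesteps that entirely: the peak separation $t_i^*-t_j^*=\langle \frac{\boldsymbol{a}_j}{\lambda_j}-\frac{\boldsymbol{a}_i}{\lambda_i},\boldsymbol{c}\rangle/\Vert\boldsymbol{c}\Vert^2$ depends only on differences of ideal points, so the interval $J$ is short by hypothesis alone, critical points are trapped in $J$ as a $w_j\nu_j$-weighted average of the $t_j^*$, and the identity $(\log\tilde{\beta}_t)''=\mathrm{Var}_{w}(\ell)-2\bar{\nu}$ with the factorization $\E_w[\ell^2]\leq 4\delta^2\nu_{\max}\bar{\nu}$ closes the argument with an explicit threshold $\epsilon<1/\sqrt{M}$ that is uniform in $I$ and in the location of $\boldsymbol{z}_0^*(I)$. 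Both proofs cope with the absence of a lower bound on the $\lambda_j$ (the paper termwise, you by keeping $\bar{\nu}$ as a common factor). One shared blind spot: for $\boldsymbol{d}=\boldsymbol{0}$ the function $\tilde{\beta}_t$ is constant, so ``at most one maximum point'' fails unless maxima are read as strict; the paper's proof implicitly assumes $\nu'>0$ as well, so this is not a gap relative to the paper, but your dismissal of that case as ``trivial'' is worth a more careful sentence.
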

\begin{proof}
    In \ref{append2}.
\end{proof}
\subsection{Style updating}
\label{chasing}
In reality, although most stores will close down and go out of business for a limited period of time, there are still a handful of centuries-old stores. Their cash flow decreases significantly slower than other stores in the same industry or their cash flow remains stable for a long period of time. 

Let us consider that in the equilibrium in Theorem \ref{t1} a store which opens at time $t$ with initial investment $I$ has the ability to update the style. There is a cost to update style, and the store budgets $S>0$ for the cost of style updating. From time $0$ to time $t > 0$, the store can change its style from $\boldsymbol{z}^*_{0}(I)$ to $\boldsymbol{z}^*_{0\rightarrow t}(I)$ smoothly. However, it needs to satisfy constraint \begin{equation}
    \left\Vert\frac{\partial}{\partial t} \boldsymbol{z}^*_{0\rightarrow t}(I) \right\Vert \leq g_S(S),
\end{equation} where $g_S(\cdot)$ is the updating efficiency of the style-chasing budget, which is a positive increasing function.
In order to have a longer lifespan, the store's optimization goal is 
\begin{align}
    \max_{{z}^*_{0\rightarrow t}(I)} \quad&\inf\left\{t>0\bigg| \theta({z}^*_{0\rightarrow t}(I))\sum_{j=1}^m \mathbf{P}^{(j)}K_j F_{jt}({z}^*_{0\rightarrow t}(I),\theta({z}^*_{0\rightarrow t}(I)))\cdot\mathds{1}_{\{{z}^*_{0\rightarrow t}(I)\in \mathbb{Z}^*_t\}}< r(I)\right\} \\
    &\text{s.t.}\ \left\Vert\frac{\partial}{\partial t} \boldsymbol{z}^*_{0\rightarrow t}(I) \right\Vert \leq g_S(S), \notag\\ 
    &\qquad \ G_I(\boldsymbol{z}^*_{0\rightarrow t}(I)) \leq I,\notag\\
    &\qquad \left.\boldsymbol{z}^*_{0\rightarrow t}(I)\right|_{t=0} = \boldsymbol{z}^*_{0}(I).\notag
\end{align}
The existence of a finite number of stores with this ability does not affect the equilibrium in Theorem \ref{t1}.

\textbf{When the updating budget is adequate.} When $g_S(S)\geq \Vert\boldsymbol{d}\Vert,$ the optimal strategy of the store is 
\begin{equation}
    {z}^*_{0\rightarrow t}(I) = {z}^*_{t}(I) = {z}^*_{0}(I) + t \boldsymbol{d}.
\end{equation}
In this case, the purchase probability density for this store at time $t$ is
\begin{align}
    \sum_{j=1}^m \mathbf{P}^{(j)}K_j F_{jt}({z}^*_t(I),\theta({z}^*_t(I))) &= \sum_{j=1}^m \mathbf{P}^{(j)}K_j F_{j0}({z}^*_t(I)-t\boldsymbol{d},\theta({z}^*_t(I))) \\
    &=\sum_{j=1}^m \mathbf{P}^{(j)}K_j F_{j0}({z}^*_0(I),\theta({z}^*_0(I))).
\end{align}
Consumers' interest in this store remains stable over time, and thus the store's cash flow does not decrease.

\textbf{When the updating budget is inadequate.} When $g_S(S)< \Vert\boldsymbol{d}\Vert,$ the optimal strategy of the store is 
\begin{equation}
    {z}^*_{0\rightarrow t}(I) ={z}^*_{\check{t}(t,S)}(I),
\end{equation}
where 
$\check{t}(t,S) = \frac{g_S(S)}{\Vert\boldsymbol{d}\Vert}\cdot t.$ 
In this case, the purchase probability density for this store at time $t$ is
\begin{align}
    \sum_{j=1}^m \mathbf{P}^{(j)}K_j F_{jt}\left({z}^*_{\check{t}(t,S)},\theta\left({z}^*_{\check{t}(t,S)}(I)\right)\right) &= \sum_{j=1}^m \mathbf{P}^{(j)}K_j F_{j0}\left({z}^*_{\check{t}(t,S)}-t\boldsymbol{d},\theta\left({z}^*_{\check{t}(t,S)}(I)\right)\right) \\
    &=\sum_{j=1}^m \mathbf{P}^{(j)}K_j F_{j0}({z}^*_0(I)-t\omega(S)\boldsymbol{d},\theta({z}^*_0(I))),
\end{align}
where $\omega(S) = \left(1-\frac{g_S(S)}{\Vert\boldsymbol{d}\Vert}\right).$ After the same operations as in the previous subsection, we get
    \begin{equation}
		\tilde{\beta}_{t}(I) = \sum_{j=1}^{m}\mathbf{P}^{(j)} \tilde{\beta}_{j0}(I)\exp\left(\omega(S)\mu_j t- \omega^2(S)\nu_j t^2\right),
	\end{equation}
where $\mu_j=\lambda_j	\langle{{\boldsymbol{z}}^*_0(I)-\hat{\boldsymbol{z}}_{j0},\boldsymbol{d}}\rangle$ and $\nu_j=\frac{\lambda_j}{2}\Vert{\boldsymbol{d}}\Vert^2.$ The coefficient of $t^2$ is $\omega^2(S)\nu_j<\nu_j$, indicating that in the case of style updating, the consumers' purchase probability decreases more slowly and the store has a longer lifespan. Figure \ref{stylechasing} shows the impact that style updating can have on the store life cycle, where the length of the red arrows represents $\Vert\boldsymbol{d}\Vert$, and the length of the blue arrows represents $g_S(S).$
 \begin{figure}[h]
	\centering
	\includegraphics[width = 0.6\textwidth]{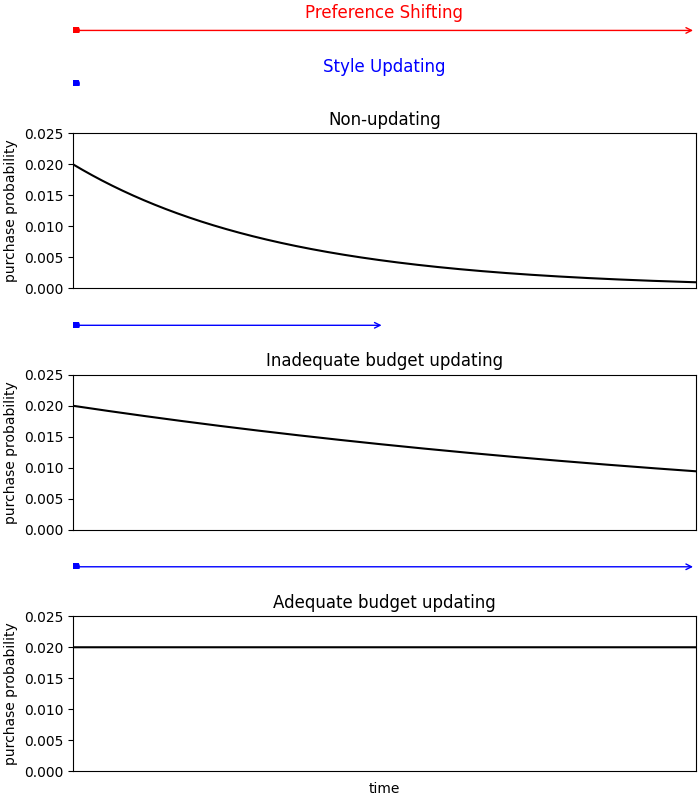}
	\caption{Style updating}   
	\label{stylechasing}
\end{figure}

\section{Part III: Potential customers} 
\label{Part III: Potential customers}

In real life, oftentimes, we are unwilling to travel long distances to purchase from a store. This naturally suggests that willingness to purchase is negatively correlated with distance. We set the proportion of people willing to enter the store to buy to decrease exponentially with distance. The store attracts people in the coordinate $\mathbf{x}$ at most in proportion to $e^{-\delta\Vert\mathbf{x}\Vert},$ where $\delta$ is the distance attenuation coefficient. 

Moreover, at the beginning period of the store opening, there are few people who know of its existence, so it takes time to broaden its visibility. From the opening date, the store's visibility broadens around at a speed of $k.$ Therefore, we indicate that the store will interact with a circular range less than $kt$ from the store at time $t,$ which is naturally expanding due to the broadening of the store's visibility.
Define the proportion of uncompetitive attraction as
\begin{equation}
	q_t(\mathbf{x}) = \mathds{1}_{\{\Vert\mathbf{x}\Vert\leq kt\}} e^{-\delta\Vert\mathbf{x}\Vert}.
\end{equation}


Furthermore, within a real economy, populations tend to cluster unevenly, and an area around landmarks typically has a higher concentration of people. Despite the intricate and usually non-analytical nature of density functions (depicted as population heatmaps), for the sake of simplicity, we assume uniform foot traffic distribution near store locations, represented by a probability density $u$. This assumption is reasonable as individuals further away from a store location are often less likely to become an active customer, thereby minimizing the influence of density on our results. Additionally, based on the mean value theorem, there exists an equivalent theoretical uniform density function that parallels the unknown and nonlinear actual distribution.


Overall, the potential customers around the store are expressed as
\begin{align}
	N_t &= \iint_{\mathbb{R}^2} q_t(\mathbf{x})u \mathrm{d}\mathbf{x}\\
	&= \iint_{\mathbb{R}^2} \mathds{1}_{\{\Vert\mathbf{x}\Vert\leq kt\}} e^{-\delta\Vert\mathbf{x}\Vert}u \mathrm{d}\mathbf{x}.\label{ll1}
\end{align}
After simple calculations in terms of polar coordinates, we obtain
\begin{align}
	N_t &= \int_0^{2\pi} \int_0^{\infty} \mathds{1}_{\{r\leq kt\}} e^{-\delta r} u r \mathrm{d} r\mathrm{d}\theta\\
	&= 2\pi \frac{u}{\delta^2}(1-(\delta kt+1)e^{-\delta kt}).
\end{align}
This equation states that due to the broadening of visibility, the flow of potential customers increases in the initial stage. After a period of time, the flow of potential customers increases to the upper limit of $2\pi\frac{u}{\delta^2},$ which mainly depends on the foot traffic around the location.

In reality, competition exists not only in terms of style but also in terms of other stores in spatial proximity. An area will usually have some homogeneous or similarly styled stores whose stylistic difference is less than the indistinguishable difference $\epsilon$. \cite{singh2006market} uses a natural experiment to demonstrate spatial competition between stores. In our model, this type of competition is reflected in the dilution of potential customers rather than a decrease in the conversion rate. 
The following provides a more granular modeling of the flow of the potential customers. 

Consider a store that opens at time $t_0=0$ in $\mathbf{x}_0=\mathbf{0}$ coordinates with visibility broadening speed $k_0.$ At moment $t,$ let it be surrounded by $m$ stores with similar styles, with location coordinates $\mathbf{x}_1,\mathbf{x}_2,\ldots,\mathbf{x}_m,$ opening times $t_1,t_2,\ldots,t_m$ and visibility broadening speed $k_1,k_2,\ldots,k_m$ respectively. Similar to (\ref{ll1}), define the proportion of uncompetitive attraction of the $j$-th store to the coordinate $\mathbf{x}$ at time $t$ as
\begin{equation}
	q_{jt}(\mathbf{x}) = \mathds{1}_{\{\Vert\mathbf{x}-\mathbf{x}_j\Vert\leq k_j(t-t_j)\}} e^{-\delta\Vert\mathbf{x}-\mathbf{x}_j\Vert},\  j=0,1,2,\ldots,m.
\end{equation}
Our model assumes that foot traffic in $\mathbf{x}$ coordinates becomes potential customers at the highest attraction proportion of all stores, which is then distributed as potential customers for each store according to the attraction proportion of each store. The proportion of competitive attraction is 
\begin{equation}
	\tilde{q}_{jt}(\mathbf{x}) = \max_{0\leq i \leq m }{q}_{it}(\mathbf{x}) \cdot\frac{{q}_{jt}(\mathbf{x})}{\sum_{i=0}^m {q}_{it}(\mathbf{x})}.
\end{equation}
The following two propositions of $\tilde{q}_{jt}$ show that the presence of spatial competition makes the attraction proportion lower and its continuity with respect to space, respectively
\begin{proposition}
	For any $0\leq j \leq m,t\geq 0$ and $\mathbf{x}\in\mathbb{R}^2,$ $\tilde{q}_{jt}(\mathbf{x})\leq {q}_{jt} (\mathbf{x}).$
\end{proposition}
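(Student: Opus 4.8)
The plan is to reduce the inequality to the elementary fact that the maximum of finitely many non-negative numbers never exceeds their sum. First I would observe that each factor $q_{it}(\mathbf{x})$ is non-negative: by its definition $q_{it}(\mathbf{x}) = \mathds{1}_{\{\Vert\mathbf{x}-\mathbf{x}_i\Vert\leq k_i(t-t_i)\}}\, e^{-\delta\Vert\mathbf{x}-\mathbf{x}_i\Vert}$, it is a product of an indicator (which is $0$ or $1$) and a strictly positive exponential, so $q_{it}(\mathbf{x})\geq 0$ for every $i$.

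Next, since all summands are non-negative, the single largest term is dominated by the full sum, i.e.
\begin{equation}
	\max_{0\leq i \leq m} q_{it}(\mathbf{x}) \;\leq\; \sum_{i=0}^m q_{it}(\mathbf{x}).
\end{equation}
Assuming first that $\sum_{i=0}^m q_{it}(\mathbf{x})>0$, dividing both sides by this strictly positive denominator gives
\begin{equation}
	\frac{\max_{0\leq i \leq m} q_{it}(\mathbf{x})}{\sum_{i=0}^m q_{it}(\mathbf{x})} \;\leq\; 1.
\end{equation}
Multiplying this inequality through by $q_{jt}(\mathbf{x})\geq 0$ yields exactly $\tilde{q}_{jt}(\mathbf{x})\leq q_{jt}(\mathbf{x})$, which is the claim.

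The only point requiring a little care, and the nearest thing to an obstacle, is the degenerate case $\sum_{i=0}^m q_{it}(\mathbf{x})=0$: here the ratio defining $\tilde{q}_{jt}$ is not literally well-defined. But non-negativity forces every term to vanish, so in particular $q_{jt}(\mathbf{x})=0$; adopting the natural convention $\tilde{q}_{jt}(\mathbf{x})=0$ in this case (consistent with there being no attraction at $\mathbf{x}$), the asserted inequality reads $0\leq 0$ and holds trivially. I would dispatch this boundary case with a single sentence and present the main argument as the chain of inequalities above; no further machinery is needed.
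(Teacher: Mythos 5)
Your proof is correct and follows essentially the same route as the paper: both reduce the claim to the observation that $\max_{0\leq i\leq m} q_{it}(\mathbf{x}) \leq \sum_{i=0}^m q_{it}(\mathbf{x})$, so the ratio is at most $1$, and then multiply by the non-negative factor $q_{jt}(\mathbf{x})$. Your explicit handling of the degenerate case $\sum_{i=0}^m q_{it}(\mathbf{x})=0$ is a small point of extra care that the paper's one-line proof omits, but it does not change the argument.
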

\begin{proof}
    Note that
    \begin{equation}
    \frac{ \max_{0\leq i \leq m }{q}_{it}(\mathbf{x})}{\sum_{i=0}^m {q}_{it}(\mathbf{x})} \leq 1,
\end{equation}and it naturally follows that the proposition is proven.
\end{proof}
\begin{proposition}
	\label{continu}
	For any $0\leq j \leq m,t\geq 0,$ $\tilde{q}_{jt}$ is continuous on $\cap_{i=0}^{m}\{\mathbf{x}\in\mathbb{R}^2:\Vert \mathbf{x} -\mathbf{x}_i\Vert\leq k_i (t-t_i)\}$ with respect to $\mathbf{x}.$
\end{proposition}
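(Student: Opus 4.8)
The plan is to exploit the fact that the set on which continuity is claimed is precisely the set where every indicator appearing in the definition of $q_{it}$ is active. First I would observe that for any $\mathbf{x}$ in the intersection $\cap_{i=0}^{m}\{\mathbf{x}\in\mathbb{R}^2:\Vert \mathbf{x} -\mathbf{x}_i\Vert\leq k_i (t-t_i)\}$, the defining inequality $\Vert\mathbf{x}-\mathbf{x}_i\Vert\leq k_i(t-t_i)$ holds simultaneously for every $i=0,1,\ldots,m$, so that $\mathds{1}_{\{\Vert\mathbf{x}-\mathbf{x}_i\Vert\leq k_i(t-t_i)\}}=1$ throughout this set. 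Consequently, on this region each attraction proportion simplifies to $q_{it}(\mathbf{x}) = e^{-\delta\Vert\mathbf{x}-\mathbf{x}_i\Vert}$, which eliminates the only possible source of discontinuity (the indicators) and leaves a composition of the continuous norm with the continuous exponential.

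With the indicators removed, I would then verify continuity piece by piece. The map $\mathbf{x}\mapsto e^{-\delta\Vert\mathbf{x}-\mathbf{x}_i\Vert}$ is continuous for each $i$, so $\max_{0\leq i\leq m} q_{it}$ is continuous as a finite maximum of continuous functions, and the numerator $q_{jt}$ is likewise continuous. For the denominator the key point is strict positivity: since $e^{-\delta\Vert\mathbf{x}-\mathbf{x}_i\Vert}>0$ for every $\mathbf{x}$, the sum $\sum_{i=0}^m q_{it}(\mathbf{x})$ is continuous and bounded away from zero, so the quotient $q_{jt}/\sum_i q_{it}$ is continuous. The product of this quotient with the continuous maximum is then continuous, which yields the claim.

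Since the region is closed and all the defining inequalities are non-strict, I should note that the argument applies at the boundary of the intersection as well as in its interior: even where some $\Vert\mathbf{x}-\mathbf{x}_i\Vert=k_i(t-t_i)$, the corresponding indicator still equals $1$, so no jump occurs. In this sense there is no genuine obstacle; the entire content of the proposition lies in the reduction of the first step, after which everything follows from elementary facts about continuity of maxima, sums, and quotients of continuous functions with nonvanishing denominator. The only care needed is to confirm that the set over which we assert continuity is exactly the set on which the simplification $q_{it}(\mathbf{x})=e^{-\delta\Vert\mathbf{x}-\mathbf{x}_i\Vert}$ is valid, so that continuity is understood in the subspace topology of this region.
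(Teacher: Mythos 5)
Your proof is correct and follows essentially the same route as the paper's: restrict to the intersection so that every indicator equals $1$, reduce each $q_{it}$ to the continuous function $e^{-\delta\Vert\mathbf{x}-\mathbf{x}_i\Vert}$, and conclude by continuity of maxima, sums, and quotients. If anything, you are slightly more careful than the paper, which asserts that the map $(x_0,\ldots,x_m)\mapsto \max_i x_i\cdot x_j/\sum_i x_i$ is continuous on all of $\mathbb{R}^{m+1}$ (false where the sum vanishes), whereas you explicitly note that the denominator is bounded away from zero on the relevant set.
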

\begin{proof}
Let $c_j$ be function from $\mathbb{R}^{m+1}$ to $\mathbb{R},$ of the form  
    \begin{equation}
        c_j(x_0,x_1,\ldots,x_m) = \max_{0\leq i \leq m }x_i \cdot\frac{x_j}{\sum_{i=0}^m x_i}.
    \end{equation}
Obviously $c_j$ is continuous on $\mathbb{R}^{m+1}.$
Note that for any $0\leq j \leq m,t\geq 0,$ ${q}_{jt}$ is continuous on $\{\mathbf{x}\in\mathbb{R}^2:\Vert \mathbf{x} -\mathbf{x}_j\Vert\leq k_j (t-t_j)\}$ with respect to $\mathbf{x}.$ Thus,
\begin{equation}
    \tilde{q}_{jt}(\mathbf{x}) = c_j({q}_{j0}(\mathbf{x}),{q}_{j1}(\mathbf{x}),\ldots,{q}_{jm}(\mathbf{x}))
\end{equation}
is continuous on $\cap_{i=0}^{m}\{\mathbf{x}\in\mathbb{R}^2:\Vert \mathbf{x} -\mathbf{x}_i\Vert\leq k_i (t-t_i)\}$ with respect to $\mathbf{x}.$
\end{proof}
It can be seen from the proof that the continuity demonstrated by proposition \ref{continu} only exists in area where all stores are visible. That is to say, newly opened stores will cause an unexpected decrease in the attraction proportion of other stores during the process of broadening their visibility.

The flow of the potential customers of the $j$th store is modelled as 
\begin{align}
	N_{jt}&=\iint_{\mathbb{R}^2}  \tilde{q}_{jt}(\mathbf{x})u \mathrm{d}\mathbf{x}\\
	\label{w1}
	&= \iint_{\mathbb{R}^2}  Q_t(\mathbf{x})\mathds{1}_{\{\Vert\mathbf{x}\Vert\leq kt\}} e^{-\delta\Vert\mathbf{x}\Vert}u \mathrm{d}\mathbf{x}
\end{align}
where $Q_t(\mathbf{x})=\frac{\max_{0\leq i \leq m }{q}_{it}(\mathbf{x})}{\sum_{i=0}^m {q}_{it}(\mathbf{x})}\leq 1.$ Comparing (\ref{ll1}) and (\ref{w1}), it can be seen that the presence of other stores in the proximity narrows the attraction proportion of the store for customers, which leads to a decrease of the flow of potential customers.

\begin{example}
   Consider spatial competition between two stores. One store opens at time $0,$ and another store opens at time $t$ located at a certain distance from the first store. 
    Figure \ref{hhh} shows how the first store's flow of potential customers is affected by the new store. (1) and (2) show the case where the new store opens at $t = 150$ and $t = 300$, respectively. The different curves represent the distance of the new store from the first store, where the highest curve represents the situation where no new store is opened. 

\begin{figure}[h]
	\centering
	\begin{minipage}[h]{0.4\textwidth}
		\centering
		\includegraphics[width=\textwidth]{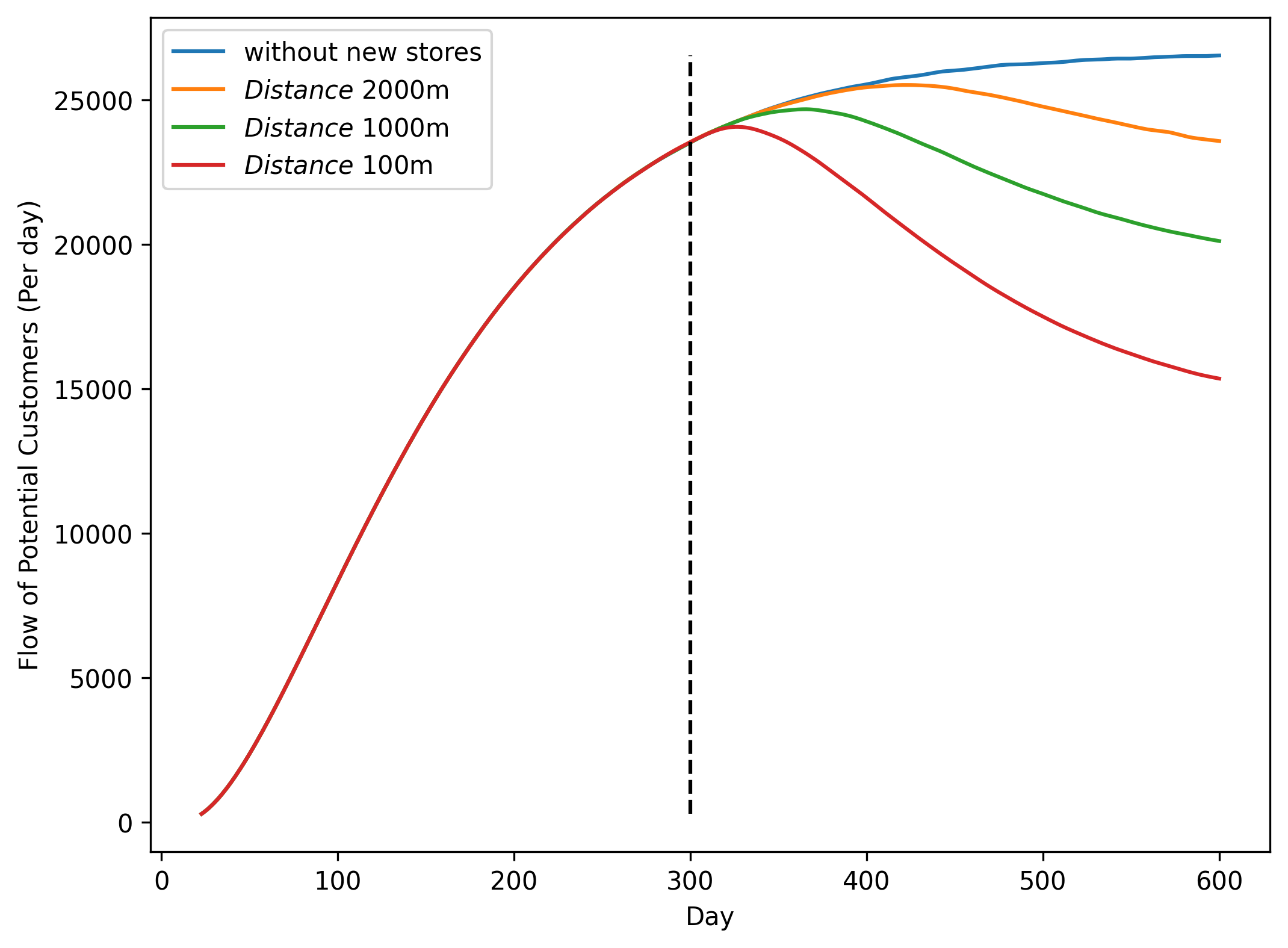}
		\raisebox{1.5ex}{(1)}
		\label{fig:image1}
	\end{minipage}
	\hspace{0.1\textwidth}
	\begin{minipage}[h]{0.4\textwidth}
		\centering
		\includegraphics[width=\textwidth]{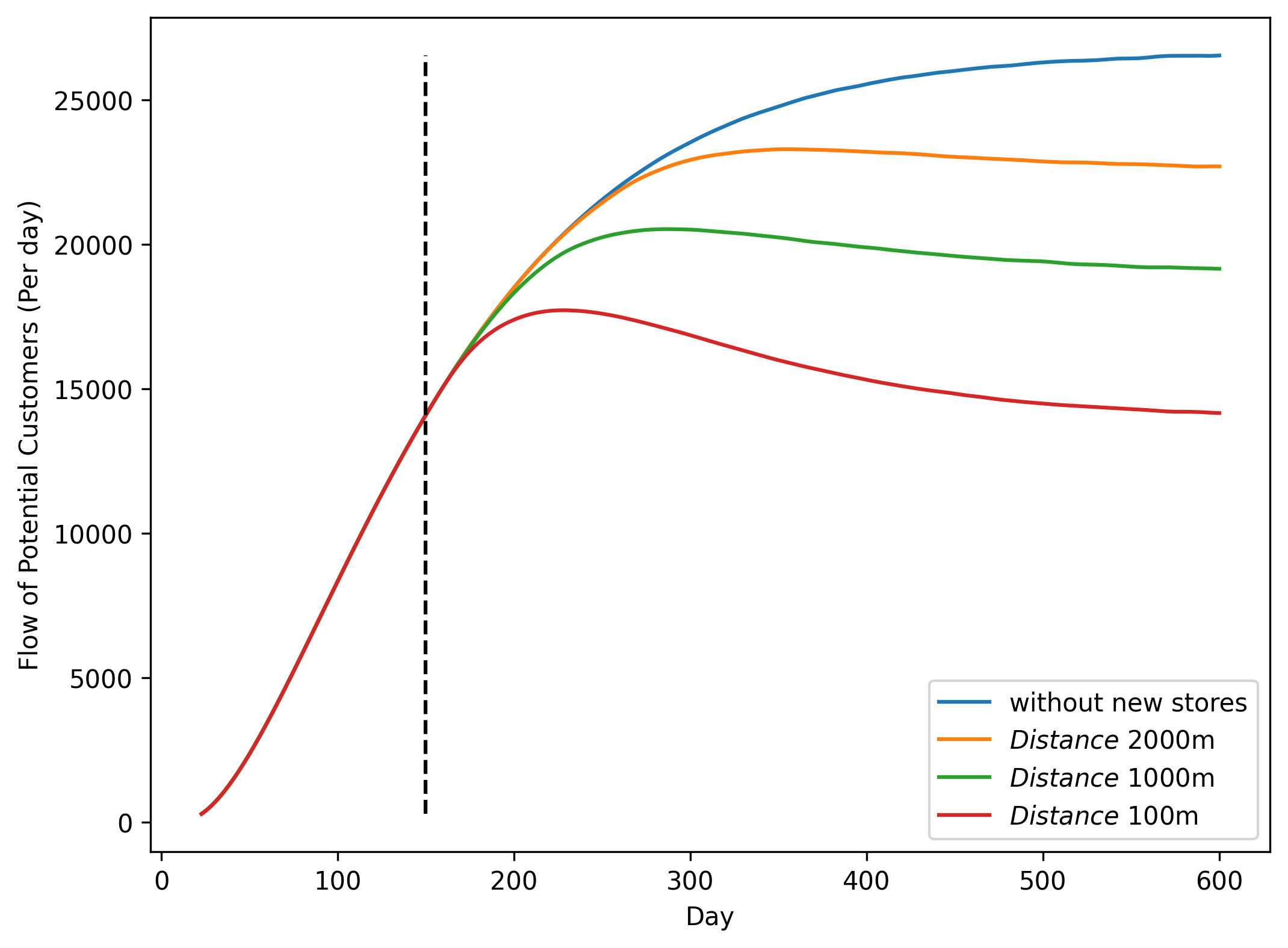}
		\raisebox{1.5ex}{(2)}
		\label{fig:image2}
	\end{minipage}
	\caption{Impact of Spatial Competition}
	\label{hhh}
\end{figure}

\end{example}

As an explicit expression for $N_t$ is unavailable, Figure \ref{hhh} is generated through Monte Carlo simulation (see \cite{metropolis1953equation}). 
This limitation leads to the complexity of calibrating parameters in empirical studies, which gives motivation for us to introduce Theorem \ref{jj} and the competing equivalent foot traffic density.

\begin{theorem}
	\label{jj}
	There exists a competing equivalent foot traffic density $u'$ for a given store (say $0$th store) such that
	\begin{equation}
		\iint_{\mathbb{R}^2}  {q}_{0t}(\mathbf{x})u '\mathrm{d}\mathbf{x} = \iint_{\mathbb{R}^2}  \tilde{q}_{0t}(\mathbf{x})u \mathrm{d}\mathbf{x} + o(1)\ (t\rightarrow \infty)
	\end{equation} holds. And $u'$ can be calculated as
	\begin{equation}
		\label{calcu}
		u' =  u  \frac{\delta^2} {2\pi} \iint_{\mathbb{R}^2} {e^{- \delta\Vert\mathbf{x}-\mathbf{x}_0\Vert-
				\delta d(\mathbf{x})}}\left({\sum_{j=0}^m e^{-\delta\Vert\mathbf{x}-\mathbf{x}_j\Vert}
		}\right)^{-1} \mathrm{d}\mathbf{x},
	\end{equation}
	where $d(\mathbf{x}) = \inf_{1\leq j \leq m} \Vert \mathbf{x}-\mathbf{x}_j\Vert.$
\end{theorem}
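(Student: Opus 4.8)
The plan is to exploit the $o(1)$ slack in the statement. Both sides are integrals over disks that expand to fill $\mathbb{R}^2$ as $t\to\infty$, so each converges to a finite limit, and writing $A(t)=\iint q_{0t}u'\,\mathrm{d}\mathbf{x}$ and $B(t)=\iint\tilde q_{0t}u\,\mathrm{d}\mathbf{x}$, the asserted identity $A(t)=B(t)+o(1)$ holds precisely when the two limits agree. Hence it suffices to compute $\lim_t A(t)$ and $\lim_t B(t)$ and then \emph{define} $u'$ so that they coincide. The first limit is immediate: since $\mathbf{x}_0=\mathbf{0}$, $t_0=0$ give $q_{0t}(\mathbf{x})=\mathds{1}_{\{\Vert\mathbf{x}\Vert\le k_0 t\}}e^{-\delta\Vert\mathbf{x}\Vert}$, passing to polar coordinates yields $A(t)=2\pi u'\int_0^{k_0 t} r e^{-\delta r}\,\mathrm{d}r\to 2\pi u'/\delta^2$, exactly as in the earlier computation of $N_t$.

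For the second limit I would invoke dominated convergence. Fix $\mathbf{x}\in\mathbb{R}^2$. Because every broadening speed $k_j$ is positive, once $t$ is large enough that $\Vert\mathbf{x}-\mathbf{x}_j\Vert\le k_j(t-t_j)$ for all $j$, every indicator equals $1$ and $q_{jt}(\mathbf{x})=e^{-\delta\Vert\mathbf{x}-\mathbf{x}_j\Vert}$ exactly; thus $\tilde q_{0t}(\mathbf{x})$ is eventually equal to
\begin{equation}
\frac{\max_{0\le i\le m}e^{-\delta\Vert\mathbf{x}-\mathbf{x}_i\Vert}\cdot e^{-\delta\Vert\mathbf{x}-\mathbf{x}_0\Vert}}{\sum_{i=0}^m e^{-\delta\Vert\mathbf{x}-\mathbf{x}_i\Vert}},
\end{equation}
so the pointwise limit holds for every $\mathbf{x}$ (on the shrinking region where no store is yet visible we take $\tilde q_{0t}=0$). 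Since the normalizing factor never exceeds $1$, we have the uniform bound $\tilde q_{0t}(\mathbf{x})\le q_{0t}(\mathbf{x})\le e^{-\delta\Vert\mathbf{x}-\mathbf{x}_0\Vert}$, which is integrable over $\mathbb{R}^2$; dominated convergence then identifies $\lim_t B(t)$ with $u$ times the integral of the displayed pointwise limit.

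Finally I would rewrite the limiting integrand. The identity $\max_{0\le i\le m}e^{-\delta\Vert\mathbf{x}-\mathbf{x}_i\Vert}=e^{-\delta\min_{0\le i\le m}\Vert\mathbf{x}-\mathbf{x}_i\Vert}$ turns the numerator into $e^{-\delta\Vert\mathbf{x}-\mathbf{x}_0\Vert-\delta\min_{0\le i\le m}\Vert\mathbf{x}-\mathbf{x}_i\Vert}$, so that
\begin{equation}
\lim_{t\to\infty}B(t)=u\iint_{\mathbb{R}^2} e^{-\delta\Vert\mathbf{x}-\mathbf{x}_0\Vert-\delta\min_{0\le i\le m}\Vert\mathbf{x}-\mathbf{x}_i\Vert}\Big(\sum_{j=0}^m e^{-\delta\Vert\mathbf{x}-\mathbf{x}_j\Vert}\Big)^{-1}\mathrm{d}\mathbf{x},
\end{equation}
and equating this with $2\pi u'/\delta^2$ and solving for $u'$ produces an expression of the form \eqref{calcu}, with $d(\mathbf{x})$ a nearest-store distance. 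The main obstacle is twofold. First, the dominated-convergence step must be made airtight: one checks that each disk $\{\Vert\mathbf{x}-\mathbf{x}_j\Vert\le k_j(t-t_j)\}$ eventually swallows any fixed $\mathbf{x}$, that the denominator stays positive on the region of integration, and that the switching of the finitely many indicators does not obstruct convergence, after which the majorant $e^{-\delta\Vert\mathbf{x}-\mathbf{x}_0\Vert}$ closes the argument. Second, and more delicate, is the exact bookkeeping of which stores enter the minimum defining $d(\mathbf{x})$: the maximum in $\tilde q_{0t}$ ranges over all $m+1$ stores including the $0$th, so one must argue carefully how the $e^{-\delta\Vert\mathbf{x}-\mathbf{x}_0\Vert}$ factor already present interacts with the nearest-store term to leave the stated $d(\mathbf{x})=\inf_{1\le j\le m}\Vert\mathbf{x}-\mathbf{x}_j\Vert$ over the competing stores.
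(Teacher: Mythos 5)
Your proposal follows essentially the same route as the paper's proof: identify the pointwise limit of $\tilde q_{0t}$ once all visibility indicators have switched on, dominate by $\tilde q_{0t}\le q_{0t}\le e^{-\delta\Vert\mathbf{x}-\mathbf{x}_0\Vert}$, pass to the limit by dominated convergence, compute $\lim_t\iint q_{0t}u'\,\mathrm{d}\mathbf{x}=2\pi u'/\delta^2$ in polar coordinates, and solve for $u'$. You actually supply more detail than the paper, which simply asserts the pointwise limit and the value of $\iint q_{0t}$ and then cites dominated convergence.

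The one point you flag as ``delicate bookkeeping'' is in fact a genuine discrepancy that you should not expect to resolve in the paper's favour. Your computation of the pointwise limit is the correct one: since the maximum in $\tilde q_{0t}$ ranges over all $m+1$ stores including the $0$th, the limit is
\begin{equation}
\frac{e^{-\delta\Vert\mathbf{x}-\mathbf{x}_0\Vert-\delta\min_{0\le i\le m}\Vert\mathbf{x}-\mathbf{x}_i\Vert}}{\sum_{j=0}^m e^{-\delta\Vert\mathbf{x}-\mathbf{x}_j\Vert}},
\end{equation}
whereas the paper's formula \eqref{calcu} uses $d(\mathbf{x})=\inf_{1\le j\le m}\Vert\mathbf{x}-\mathbf{x}_j\Vert$, which excludes store $0$. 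The two exponents agree only where some competitor is at least as close as store $0$; on the region $\{\Vert\mathbf{x}-\mathbf{x}_0\Vert<d(\mathbf{x})\}$ --- typically a neighbourhood of $\mathbf{x}_0$ carrying much of the integral's mass --- they differ, with the paper's integrand the larger. The paper's own proof does not address this: it asserts the limit with $d(\mathbf{x})$ over $1\le j\le m$ without computation. So your derivation is the sound one, and the correct conclusion is that \eqref{calcu} should read $\min_{0\le i\le m}\Vert\mathbf{x}-\mathbf{x}_i\Vert$ in place of $d(\mathbf{x})$ (equivalently $\min\{\Vert\mathbf{x}-\mathbf{x}_0\Vert,d(\mathbf{x})\}$); do not spend effort trying to massage your limit into the stated form. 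Everything else --- the eventual switching of the finitely many indicators, positivity of the denominator, and the integrable majorant --- is handled correctly and matches the paper.
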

\begin{proof}
Notice that when $t\rightarrow \infty$
\begin{equation}
    \tilde{q}_{0t}(\mathbf{x}) \rightarrow \tilde{q}_{0}(\mathbf{x}) = {e^{- \delta\Vert\mathbf{x}-\mathbf{x}_0\Vert-
\delta d(\mathbf{x})}}\left({\sum_{j=0}^m e^{-\delta\Vert\mathbf{x}-\mathbf{x}_j\Vert}
}\right)^{-1},
\end{equation}
for $ \mathbf{x}\in \mathbb{R}^2,$ and 
\begin{equation}
   \iint_{\mathbb{R}^2}  {q}_{0t}(\mathbf{x})\mathrm{d}\mathbf{x}= \frac{2\pi}{\delta^2}(1-(\delta kt+1)e^{-\delta kt}) \rightarrow  \frac{2\pi}{\delta^2}.
\end{equation}
It is sufficient to show
\begin{equation}
    \lim_{t\rightarrow \infty}\iint_{\mathbb{R}^2}  \tilde{q}_{0t}(\mathbf{x})u \mathrm{d}\mathbf{x} =\iint_{\mathbb{R}^2} \lim_{t\rightarrow \infty} \tilde{q}_{0t}(\mathbf{x})u \mathrm{d}\mathbf{x}.
\end{equation}
Noting that \begin{equation}
    \tilde{q}_{0t} \leq {q}_{0t} \leq e^{- \delta\Vert\mathbf{x}-\mathbf{x}_0\Vert},
\end{equation}the proposition holds by the dominated convergence theorem, see \cite{folland1999real}.
\end{proof}

\section{Part IV: Full model of cash flow}
\label{Part IV: Full model of cash flow}
Choose transform function as $\exp(\cdot)$, we obtain the full model of cash flow in uncompetitive case as follows: 
\begin{equation}
	\text{CF}_t = 2\pi \frac{u}{\delta^2}(1-(\delta kt+1)e^{-\delta kt})\cdot\theta(I)\sum_{j=1}^{m}\mathbf{P}^{(j)} \tilde{\beta}_{j0}(I)\exp\left(\mu_j t- \nu_j t^2\right).
\end{equation}
For the case where the spatial competition exists, replace $u$ with competing equivalent foot traffic density $u'$ to get an approximation as
\begin{equation}
	\text{CF}_t \approx 2\pi \frac{u'}{\delta^2}(1-(\delta kt+1)e^{-\delta kt})\cdot\theta(I)\sum_{j=1}^{m}\mathbf{P}^{(j)} \tilde{\beta}_{j0}(I)\exp\left(\mu_j t- \nu_j t^2\right).
\end{equation}
Figure \ref{Cash_Flow_simulation_1} shows a typical cash flow curve in the case of 10 consumer types. Here, $t=1$ is set as one day. The monetary unit of the cash flow $\text{CF}_t$ is Chinese yuan/ day. In Figure \ref{Cash_Flow_simulation_1}, we can see the cash flow has a rainbow-shaped curve. In the early stage, the cash flow of the store gradually increases, mainly due to the increase in the flow of potential customers, which is essentially caused by the visibility broadening. The decrease in cash flow in the later period is due to a decrease in the conversion rate caused by style-shifting.

\begin{figure}[h]
	\centering
	\includegraphics[width = 0.8\textwidth]{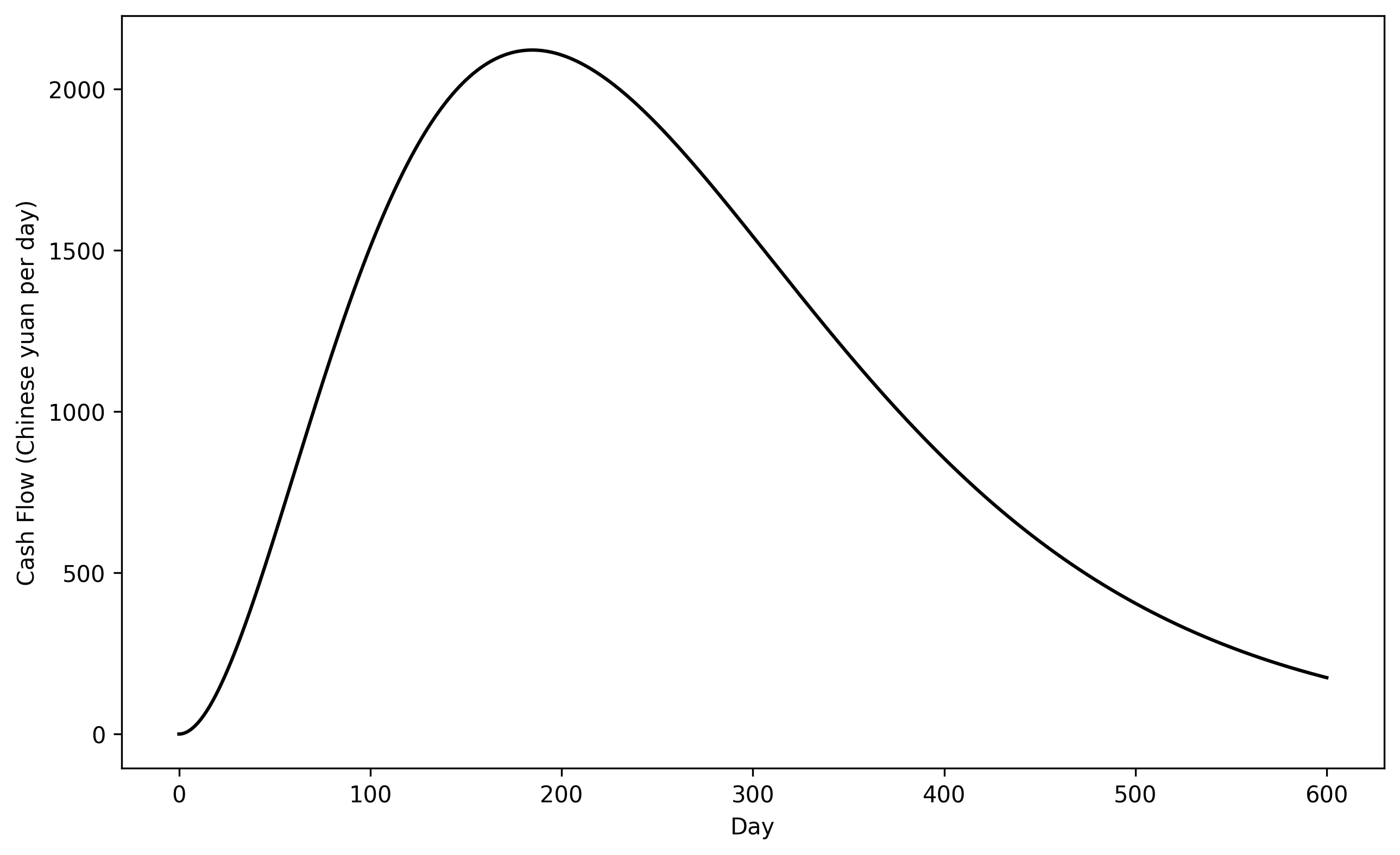}
	\caption{Cash Flow Curve}   
	\label{Cash_Flow_simulation_1}
\end{figure}
Subsequently, we set $m = 1$ and assess curve performance under varying settings: (1) population flow density $u$, (2) decrease coefficient $\nu$, and (3) visibility broadening speed $k$. Our examination centers on two key aspects: the peak of the cash flow's rainbow-shaped curve and the point at which it decreases by 95\% from the peak (referred to as the closing point). The former represents the store's maximum theoretical cash flow, while the latter signifies the duration of the store's lifecycle. Figure \ref{hh} illustrates cash flow trends over time for a representative store utilizing our comprehensive model.

Above all, Figure \ref{hh} signifies the inherent lifecycle of a store has a rainbow-shaped curve. Second, a higher foot traffic density $u$ results in a larger peak cash flow yet has no bearing on the store's lifecycle. This outcome challenges conventional beliefs that a store situated in an area with low foot traffic experiences a shorter lifecycle. But our results find that each type of stores has its own viability with a response to different environments. 

Third, the decrease coefficient $\nu$ impacts both the lifecycle and the peak of the cash flow. A faster preference shift results in a shorter lifecycle and a lower peak of the cash flow. It is explained that for a store located in a shopping mall, the operational risk is larger and the ACP is also higher; Otherwise, the same investments cannot yield a similar shape.

Notably, the visibility broadening speed $k$ minimally affects the lifecycle but significantly influences peak performances. A higher $k$ leads to an earlier and higher peak, indicating the significance of the opening ceremony in the long-term performance of a store. In conclusion, our model serves as a robust tool for evaluating store performance.

\begin{figure}[h]
	\centering
	\begin{minipage}[b]{0.3\textwidth}
		\centering
		\includegraphics[width=\textwidth]{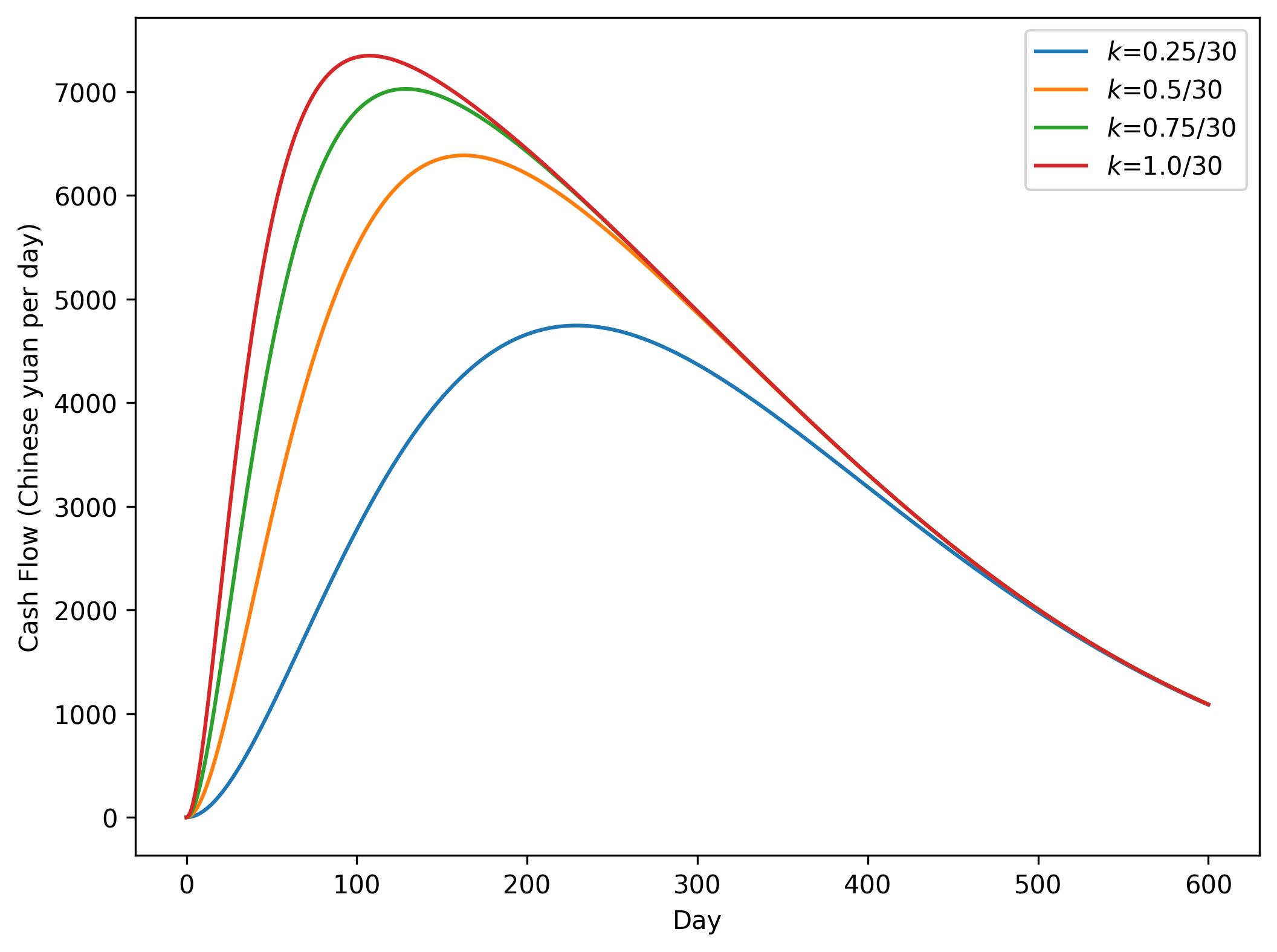}
		\raisebox{1.5ex}{(1)}
		\label{fig:image1}
	\end{minipage}
	\hfill 
	\begin{minipage}[b]{0.3\textwidth}
		\centering
		\includegraphics[width=\textwidth]{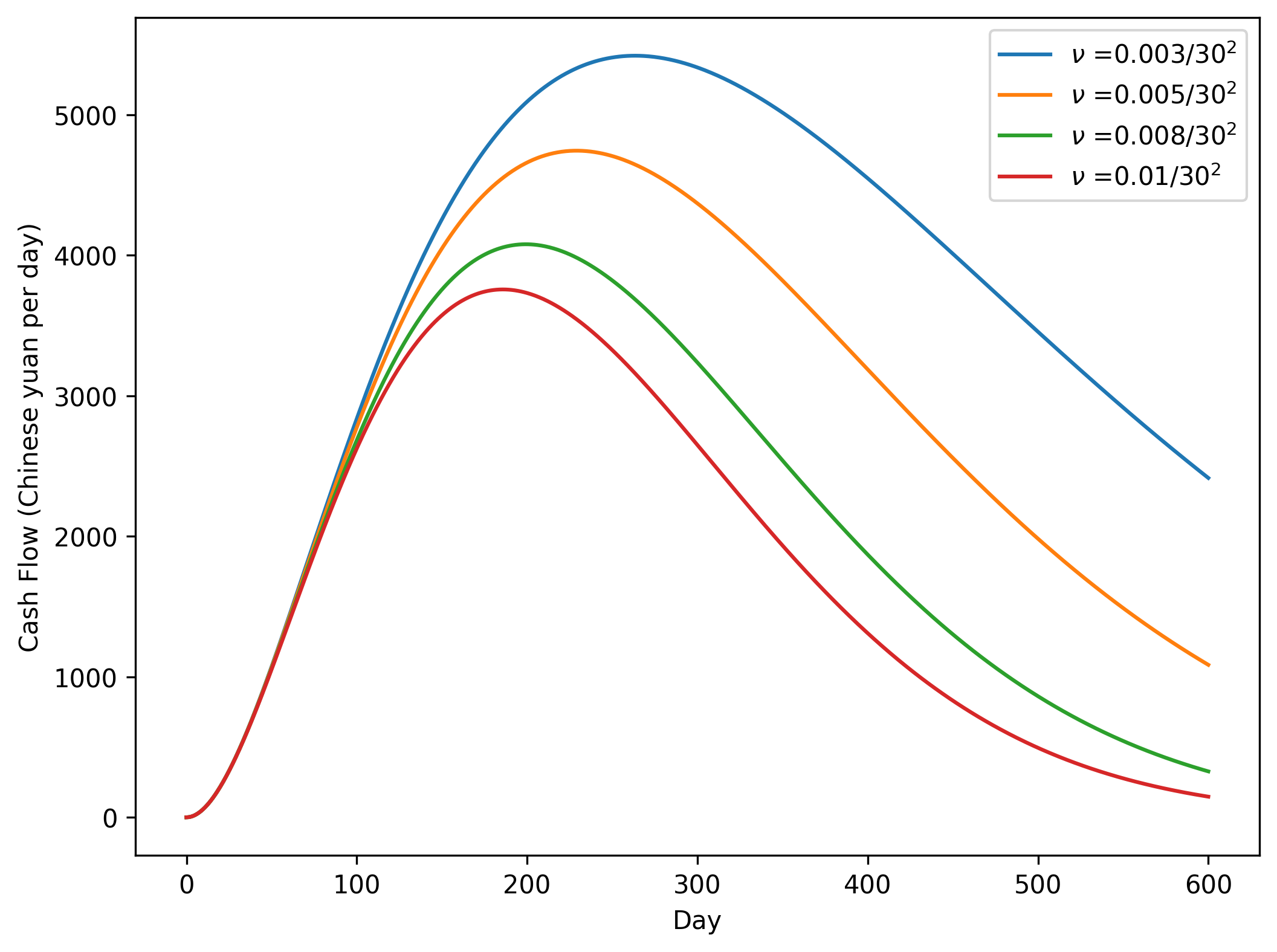}
		\raisebox{1.5ex}{(2)}
		\label{fig:image2}
	\end{minipage}
	\hfill 
	\begin{minipage}[b]{0.3\textwidth}
		\centering
		\includegraphics[width=\textwidth]{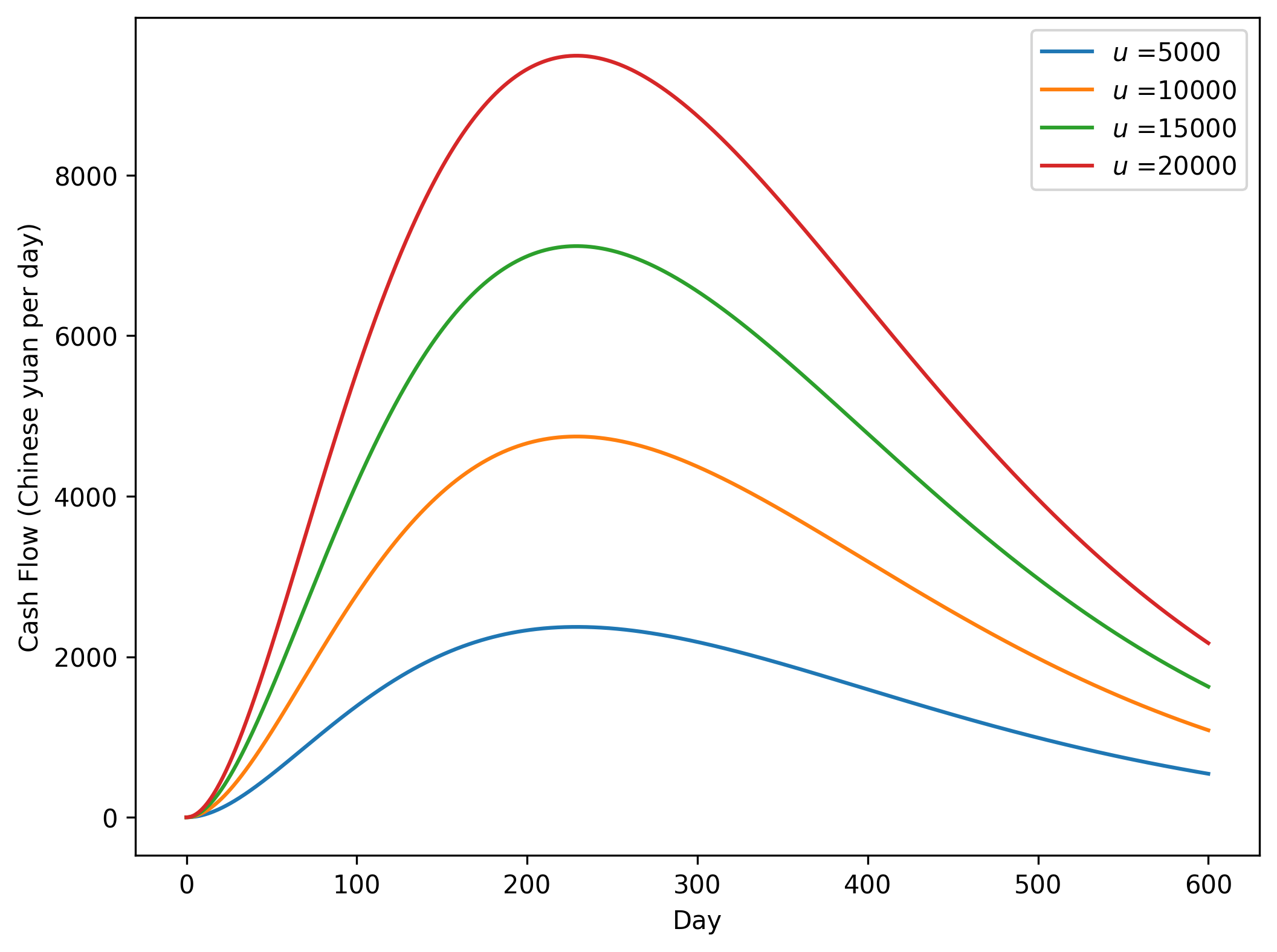}
		\raisebox{1.5ex}{(3)}
		\label{fig:image3}
	\end{minipage}
	\caption{Cash Flow with Varying Parameters}
	\label{hh}
\end{figure}

\section{Empirical Evidence}
\label{Empirical Evidence}
\
In this section, for simplicity in empirical research, we consider that there is only one type of people in the population structure, i.e. $m=1$. In actuality, the methods and conclusions can be extended to more complex population structures. Moreover, spatial competition is also considered, so we use the competitive equivalent foot traffic density. Under this setting, we use the following model for empirical research: 
\begin{equation}
	\text{CF}_t = 2\pi \frac{u'}{\delta^2}(1-(\delta kt+1)e^{-\delta kt})\cdot\theta(I) \tilde{\beta}_{0}(I)\exp\left(-\nu t^2\right)+\epsilon_t,
	\label{formula_cf_empirical}
\end{equation}
where $\epsilon_t$ is the random residual term. We denote (\ref{formula_cf_empirical}) as $\text{CF}_t=h(t:\delta,u',\theta,k,\nu,\beta_0)+\epsilon_t.$ In this model, $t=1$ equals one day, and the monetary unit of $\text{CF}_t$ is Chinese yuan/day.

\subsection{Cash flow data and basic information of stores}
\begin{table}[h]
	\centering
	\caption{The summary statistics.\\
		This table reports the summary statistics for the daily cash flow of three representative stores from three industries in China's real economics: the retail industry (store A), the restaurant industry (store B), and the service industry (store C). Panel A displays the descriptive statistics of cash flow, while Panel B shows basic information. In particular, the ramp-up period comprises the dates that cash flow increases from zero to the optimal cash flow.
	}
	\label{tab_summary}
	\scalebox{1.0}{
		\begin{tabular}{cccc}
			\hline
			\multicolumn{4}{c}{Panel A: The summary statistics for cash flow }\\
			\hline
			& Store A & Store   B  & Store C  \\
			\hline
			Mean                                           &     $4236.45	$    &      $1141.75$      &     $767.75$     \\
			STD                                            &    $1648.13$     &      $311.34$      &      $757.89$    \\
			Min                                            &    $12.00$     &     $110.00$       &     $25.00$\\
			Max                                            &     $10182.51$    &     $2958.00$       &      $5454.90$\\
			Skewness                                       &     $-0.04$    &         $1.49$   &      $2.70$\\
			Kuitosis                                       &    $0.15$     &      $5.18$      &     $10.76$     \\
			AR(1)                                          &     $0.84$    &      $0.53$      &    $-0.02$      \\
			Opening date                                   &    2022-01-11     &       2022-01-10     &    2022-08-30      \\
			Closing date                                   &   2023-09-19      &       2023-07-25     &      2023-10-25    \\
			Ramp up period                                 &    $325$     &   $22$         &      $77$    \\
			OBS                                            &     $616$    &    $556$        &      $157$    \\
			\hline
			&         &            &          \\
			\hline
			\multicolumn{4}{c}{Panel B: The basic information} \\
			\hline
			& Store A & Store B    & Store C  \\
			\hline
			ACP                                            &    $19.24$     &      $17.49$       &   $81.16$      \\
			Industry& Retail  & Restaurant & Service  \\
			Location                                       & Yuxi, Yunnan Province   &Yuxi, Yunnan Province          & Yuxi, Yunnan Province         \\
			\hline
	\end{tabular}}
\end{table}

First of all, we collect the daily cash flow data of three stores from three representative industries of China's real economics: the retail industry (store A), the restaurant industry (store B) and the service industry (store C) \footnote{In fact, we collects the order-level data and sum them into daily cash flow.}. The data is authorized for use by the store owners. Note that the stores are usually closed on Chinese holidays (such as Lunar New Year). Therefore, we interpolate the missing value with the mean of two cash flow values on the same weekday of the preceding and subsequent weeks. Table \ref{tab_summary} displays the summary statistics for these three stores. The average cash flow of store A is 4236.45 Chinese yuan with a standard deviation of 1648.13 Chinese yuan, while that of store C is 767.75 Chinese yuan with a standard deviation of 757.89 Chinese yuan. It shows that cash flow from the service industry is more volatile than that of the retail industry, which is expected because the commodities of the retail industry are generally the same throughout and the demands of consumers are stable. However, the ACP of store A (19.24 Chinese yuan) is lower than that of store C (81.16 Chinese yuan). This result suggests that the commodity of the service industry is lower frequency but higher gross profit rate.


Furthermore, through real-life experiences, we calibrate an important parameter: the distance attenuation coefficient $\delta$. Not many people are willing to travel more than three kilometers (km) away to make a purchase\footnote{For example, see https://www.canyin88.com/zixun/2019/07/05/73882.html for more details.}. Therefore, the distance attenuation coefficient $\delta$ is set as $\frac13\ln\frac{1}{0.01}\approx 1.535$ i.e. $\hat{\delta} = 1.535$, which means the attraction of the store to the foot traffic decreases to less than $1\%$ at a distance of $3$km.

Estimation of traffic density by using Theorem \ref{jj}
We collect traffic heat-maps from \href{https://map.baidu.com/}{Baidu} map, including the absolute values of the population per 100 $m^2$ in Yuxi City, Yunnan Province for two years, 2022 and 2023. By weighted averaging, we obtain the absolute density of the population within a three-kilometer radius of each store. In order to get a more accurate idea of the consumer density near each store, we also collect the number of comparable stores that belong to the same category (or sub-industry) with our sample stores from \href{https://www.dianping.com/}{Dianping}, a China's online platform where the public writes reviews about different businesses. However, the ``similar style" means that the distance on the style vector of stores is sufficiently small, which does not equate to the ``same category". To avoid data manipulation, we simply use the ``same category" to substitute ``similar styles", a concept that is defined in our paper, resulting in some problems. We will provide further discussion in \ref{Additional Discussions}.

Therefore, we use the formula \ref{jj} to estimate the competitive equivalent $\hat{u}'$. Table \ref{tab_empirical_parameters} reports the specific county, category, number of competitors, and real foot traffic density, as well as the competitive equivalent foot traffic density estimated by the formula \eqref{jj}. First, store A is in Xinping County, while stores B and C are in Hongta District, showing reasonable results that the population density (above 50000) in the district is larger than that (6246.85) in the county. Second, only two stores exist in the same category as store C (from the service industry), while 454 stores exist that are similar to store B (from the restaurant industry). The results suggest that the demand of the service industry is relatively lower than that of the rigid demand of the restaurant industry. Finally, taking the competition into account, the competitive equivalent foot traffic density of store A in the county is surprisingly highest (3470.17). This means that not only the absolute population but also the intensity of competition matters. Overall, our paper provides a strong tool to consider both the distribution of populations and the intensity of competition.


\begin{table}[h]
	\centering
	\caption{The calibration of foot traffic competitive equivalent density $u'$.\\
		This table reports the estimation results of traffic density by the following formula which is derived in Theorem \ref{jj},\\
		\begin{minipage}{\textwidth}
			\[u' =  u  \frac{\delta^2} {2\pi} \iint_{\mathbb{R}^2} {e^{- \delta\Vert\mathbf{x}-\mathbf{x}_0\Vert-
					\delta d(\mathbf{x})}}\left({\sum_{j=0}^m e^{-\delta\Vert\mathbf{x}-\mathbf{x}_j\Vert}
			}\right)^{-1} \mathrm{d}\mathbf{x},\]
		\end{minipage}
		where $d(\mathbf{x}) = \inf_{1\leq j \leq m} \Vert \mathbf{x}-\mathbf{x}_j\Vert.$
		In particular, we present the specific county of store A, store B, and store C as well. Furthermore, this table shows the number of competitors in the same category (or sub-industry) of the same county, where the categories are defined as the same as that in the \href{https://www.dianping.com/}{Dianping} website, the well-known China’s online platform where the public reviews stores and businesses. Furthermore, this table shows the average real foot traffic density from both 2022 and 2023 combined. The population density data stems from the \href{https://map.baidu.com/}{Baidu} map, the famous online map provider in China. Our sample period spans from January 2022 to December 2023.
	}
	\label{tab_empirical_parameters}
	\begin{tabular}{>{\centering\arraybackslash}
			m{1.5cm}
			>{\centering\arraybackslash}m{2.5cm}
			>{\centering\arraybackslash}m{2.5cm}
			>{\centering\arraybackslash}m{2cm}
			>{\centering\arraybackslash}m{2cm}
			>{\centering\arraybackslash}m{4cm}}
		\hline
		\multirow{2}{*}{{Store}}& \multirow{2}{*}{{County}}& \multirow{2}{*}{{Category}}& \multirow{2}{*}{{Competitors}}& \multicolumn{2}{c}{{Foot traffic density}} \\
		\cline{5-6}
		&                 &        &                              & {Real data $\hat{u}$}& {Competitive equivalent $\hat{u}'$}\\
		\hline
		Store A        & Xinping County  & Convenience & 2        & 6246.85    & 3470.17 \\
		Store B        & Hongta District & Simple food          & 454        & 59538.79     & 692.41 \\
		Store C        & Hongta District & Barber          & 289          & 95950.60       & 3014.02 \\
		\hline
	\end{tabular}
\end{table}



\subsection{The model speculation: non-linear least squares regression }

Here our main objective is to estimate and test the three parameters (visibility broadening speed $k$, initial conversion rate $\beta_0,$ and decrease coefficient $\nu$). In doing so we observe on the one hand whether there is a significant increase and decrease in the cash flow of the store over its life cycle, and on the other hand to estimate the true initial purchase probability and the theoretical length of the store's life cycle. We use a special case of Generalized Method of Moments estimation (GMM): Nonlinear Least Squares regression (NLS) to estimate the parameters $(k,$  $\beta_0,$  $\nu)$ using the cash flow data from each store, see \cite{hansen1982large} and \cite{ruckstuhl2010introduction}. The estimator is formulated as 
\begin{equation}
	(\hat{k},\hat{\nu},\hat{\beta_0}) = \underset{(k,\nu,\beta_0)}{\arg\max} \sum_{t=1}^T\left\Vert \text{CF}_t-h(t:\hat{\delta},\hat{u}',\bar{\theta},k,\nu,\beta_0)\right\Vert^2.
\end{equation}
Considering the heteroskedasticity and autocorrelation of the residuals, we report the \cite{newey1986simple} adjusted standard errors. The \cite{newey1986simple} adjusted estimated variance is 
\begin{equation}
	\widehat{\text{\rm{Var}}} \left(\hat{k},\hat{\nu},\hat{\beta_0}\right) = T\left(J' J\right)^{-1} \cdot\hat{\Omega}\cdot\left(J' J\right)^{-1},
\end{equation}
where
\begin{equation}
	J = \left.\left(\begin{array}{ccc}
		\frac{\partial h(1)}{\partial k} & \ldots & \frac{\partial h(T)}{\partial k} \\
		\frac{\partial h(1)}{\partial \nu} & \ldots & \frac{\partial h(T)}{\partial \nu} \\
		\frac{\partial h(1)}{\partial \beta_0} & \ldots & \frac{\partial h(T)}{\partial \beta_0} \\
	\end{array}\right)\right|_{\left(\hat{k}, \hat{\nu}, \hat{\beta_0}\right)}
\end{equation}
and 
\begin{equation}
	\hat{\Omega} = \frac1T\left(\sum_{t=1}^T\hat{\epsilon}_t^2J_tJ_t'+\sum_{l=1}^L\sum_{t=l+1}^T \omega_l \hat{\epsilon}_t\hat{\epsilon}_{t-l}(J_tJ_{t-l}'+J_{t-l}
	J_t')\right),\ \omega_l = 1-\frac{l}{L+1},
\end{equation}
where $\hat{\epsilon}$ is the regression residual and the maximum autocorrelation order $L$ is set as $7.$



\subsection{Empirical results}

\begin{table}[ht]
	\centering
	\captionof{table}{The Non-linear Least Squares regression (NLS) results in daily frequency.\\
		This table reports the NLS results of the following model:\\
		\begin{minipage}{\textwidth}
			\[\text{CF}_t = 2\pi \frac{u'}{\delta^2}(1-(\delta kt+1)e^{-\delta kt})\cdot\theta(I) \tilde{\beta}_{0}(I)\exp\left(-\nu t^2\right)+\epsilon_t,\]
		\end{minipage}
		where $\epsilon_t$ is the random residual term. 
		We set $t=1$ as one day, and the monetary unit of $\text{CF}_t$ is Chinese yuan/day. Furthermore, we present three representative stores distributed in three industries of China's real economics: the retail industry (model 1), the restaurant industry (model 2), and the service industry (model 3). Moreover, we present the estimates of $k$, $\nu$, and $\beta_0$ by times $10^6$, $10^6$, $10^2$, respectively, because the scale is too small in daily frequency. The standard error adjusted by \cite{newey1986simple} is enclosed in parentheses. The sample period spans from January 2022 to September 2023, during which China's digital business rapidly grows.
	}
	
	\label{tab_regression_results}
	\scalebox{1.0}{
		\begin{tabular}{cccc}
			\hline
			\multicolumn{4}{c}{Panel A: NLS results}\\
			\hline
			& Model 1: Store A            & Model 2: Store B & Model 3: Store C  \\
			\hline
			$k(\times 10^2)$         & 2.50&         24.65&       9.41\\
			& (0.26)&         (15.64)&           (2.57)\\
			
			$\nu (\times 10^6)$         & 2.88&         0.83&       57.11\\
			& (0.35)&         (0.22)&           (12.40)\\
			
			$\beta_0 (\times 10^2)$  & 3.59&          3.46&       0.17\\
			& (0.14)&          (0.13)&       (0.02)\\
			&                           &                   &                 \\
			OBS        & 616&            556&       139\\
			F-statistics          & 244.71&         53.73&     63.46\\
			Adj. $R^2$    & 44.40\%&         16.15\%&          29.59\%\\
			\hline
			&                           &                   &                 \\
			\hline
			\multicolumn{4}{c}{Panel B: The Initial conversion rate and the theoretical life span}\\
			\hline
			& Store A            & Store B           & Store C  \\
			\hline
			The initial conversion rate           &            3.59\%&           3.46\%&        0.17\%\\
			Theoretical life span (in days)           &            1043&           1896&        233\\
			\hline
	\end{tabular}}
\end{table}

Panel A of Table \ref{tab_regression_results} shows the NLS results of (\ref{formula_cf_empirical}). First, the decrease coefficients of store A, store B, and store C are $2.88 \times 10^{-6}$ (with standard error = $0.15 \times 10^{-6}$ ), $0.83 \times 10^6$ (with standard error = $0.22 \times 10^{-6}$ ) and $57.11 \times 10^{-6}$ (with standard error = $12.40 \times 10^{-6}$ ), respectively. It means that the decrease coefficients are all statistically significant at a 1\% confidence level. The decreased speed of cash flow is shown as an exponential form, meaning that the decreased speed will increase as time goes by. For instance, it means that the cash flow of store C decreases above 1\% after 68 days ( $exp(- 57.11\times10^{-6} \times 68^2) - exp(- 57.11\times10^{-6} \times 67^2)$). This parameter is the key factor that impacts the life cycle of stores. More importantly, the results in Panel B of Table \ref{tab_regression_results} means that store C from the service industry has the longer theoretical life span (233 days), while store B from the restaurant industry has the shortest life span (1896 days). Second, store A from the retail industry has the smallest visibility broadening speed ($2.50 \times 10^{-2}$ with a standard error =$0.26 \times 10^{-2}$), while store B from the restaurant industry has the largest ($24.65 \times 10^{-2}$ with a standard error =$15.64 \times 10^{-2}$), indicating that our model performs a significant estimate at a 1\% confidence level. It suggests that the people around the store know quickly the restaurant store, possibly because of its larger monetary investment in its opening activities. Finally, the initial conversion rate of store A from the retail industry is the largest (3.59\% with a standard error = 0.14\%), that of store C from the service industry is the smallest (0.17\% with standard error = 0.02\%), and that of store B from restaurant industry is the middle one (3.46\% with a standard error =0.01\%). It shows that the parameters are all significant at 1\% confidence level, corresponding with the real-life experience. In our daily lives, product difference among retail stores is ignored, while the service store highly relies on the service and technique of the owners. It leads to the reasonable conclusion that store A from retail owns the highest initial conversion rate, while store C owns the lowest initial conversion rate.

Overall, we provide a strong empirical tool to evaluate the key parameters of the stores, including the competition-adjusted potential consumers $\hat{u}'$, the visibility broadening speed $k$, the initial conversion rate $\beta_0$ and the decrease coefficient $\nu$. These parameters capture the life cycle of the stores.

\subsection{Visualizations of cash flow}
Figure \ref{Fitted} shows the raw cash flow and the fitted values of our model. Intuitively, our model performs well in capturing the trend of increase and decrease in cash flow. More precisely, different industries have different rules in terms of frequency. First, the cash flow of the retail store C fits well in all three frequencies, suggesting that the consumer may not have a special preference for one specific store, meaning that the cash flow of the retail store is nearly independently and identically distributed. However, the restaurant store B performs well in weekly frequencies, while the service store C fits well in monthly frequencies. The results may indicate that these two industries have a strong seasonality. For example, we may change our taste every week, while we cut our hair monthly.

\begin{figure}[ht]
	\centering
	\begin{minipage}{.3\textwidth}
		\centering
		\includegraphics[width=\linewidth]{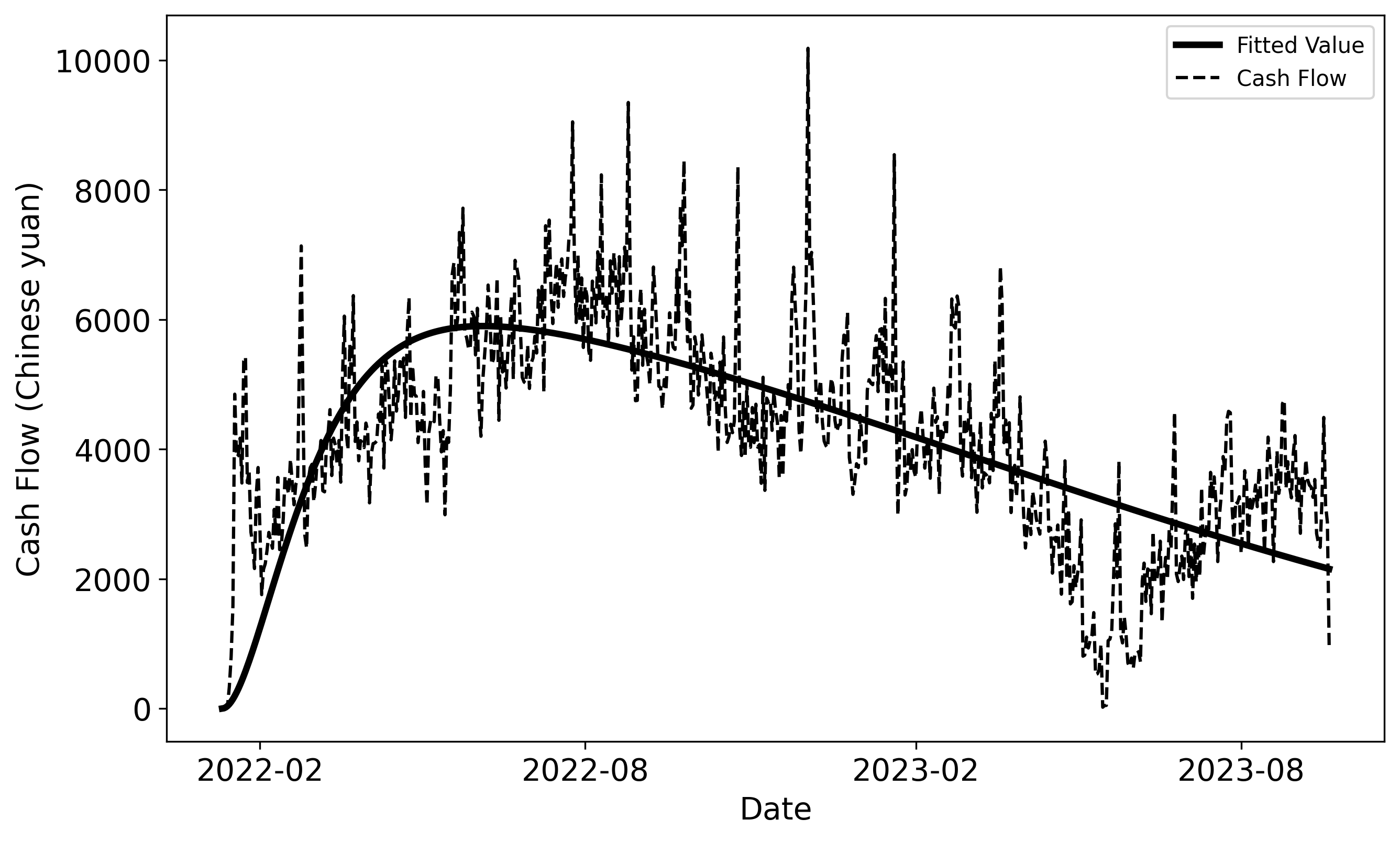}
		\raisebox{1.5ex}{Store A: Daily}
		\label{fig:test1}
	\end{minipage}%
	\hfill 
	\begin{minipage}{.3\textwidth}
		\centering
		\includegraphics[width=\linewidth]{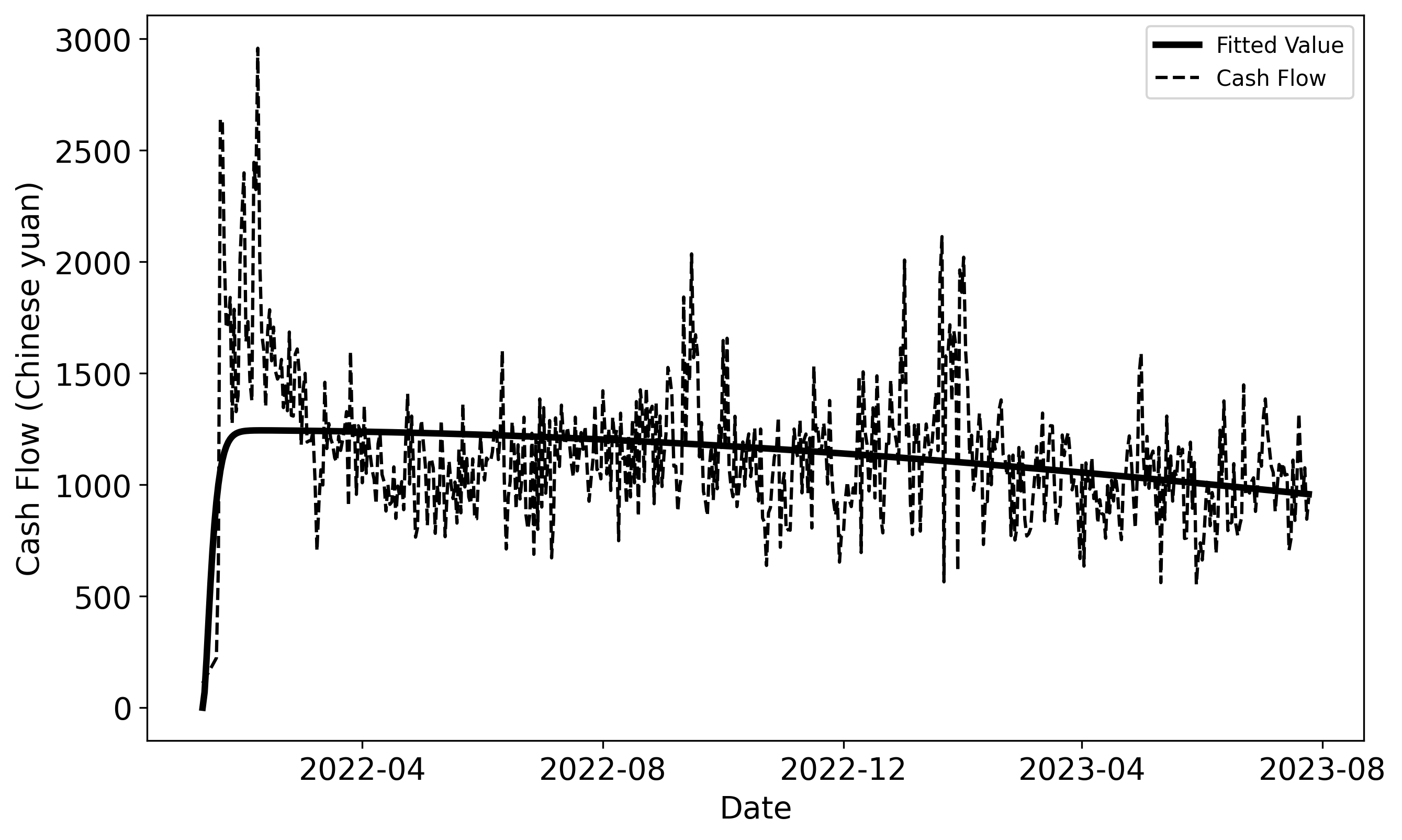}
		\raisebox{1.5ex}{Store B: Daily}
		\label{fig:test2}
	\end{minipage}%
	\hfill 
	\begin{minipage}{.3\textwidth}
		\centering
		\includegraphics[width=\linewidth]{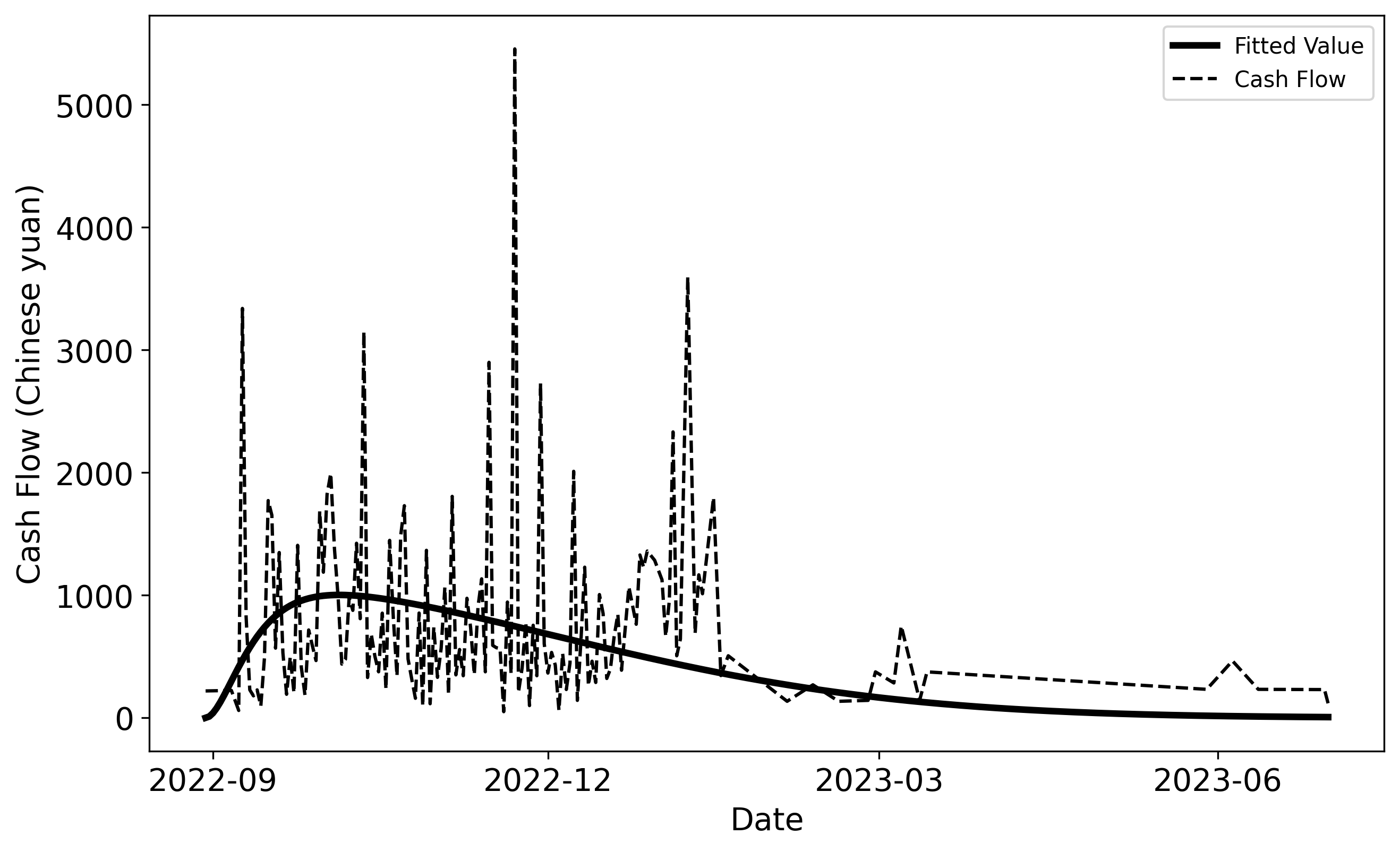}
		\raisebox{1.5ex}{Store C: Daily}
		\label{fig:test3}
	\end{minipage}
	
	\vspace{1em} 
	
	\begin{minipage}{.3\textwidth}
		\centering
		\includegraphics[width=\linewidth]{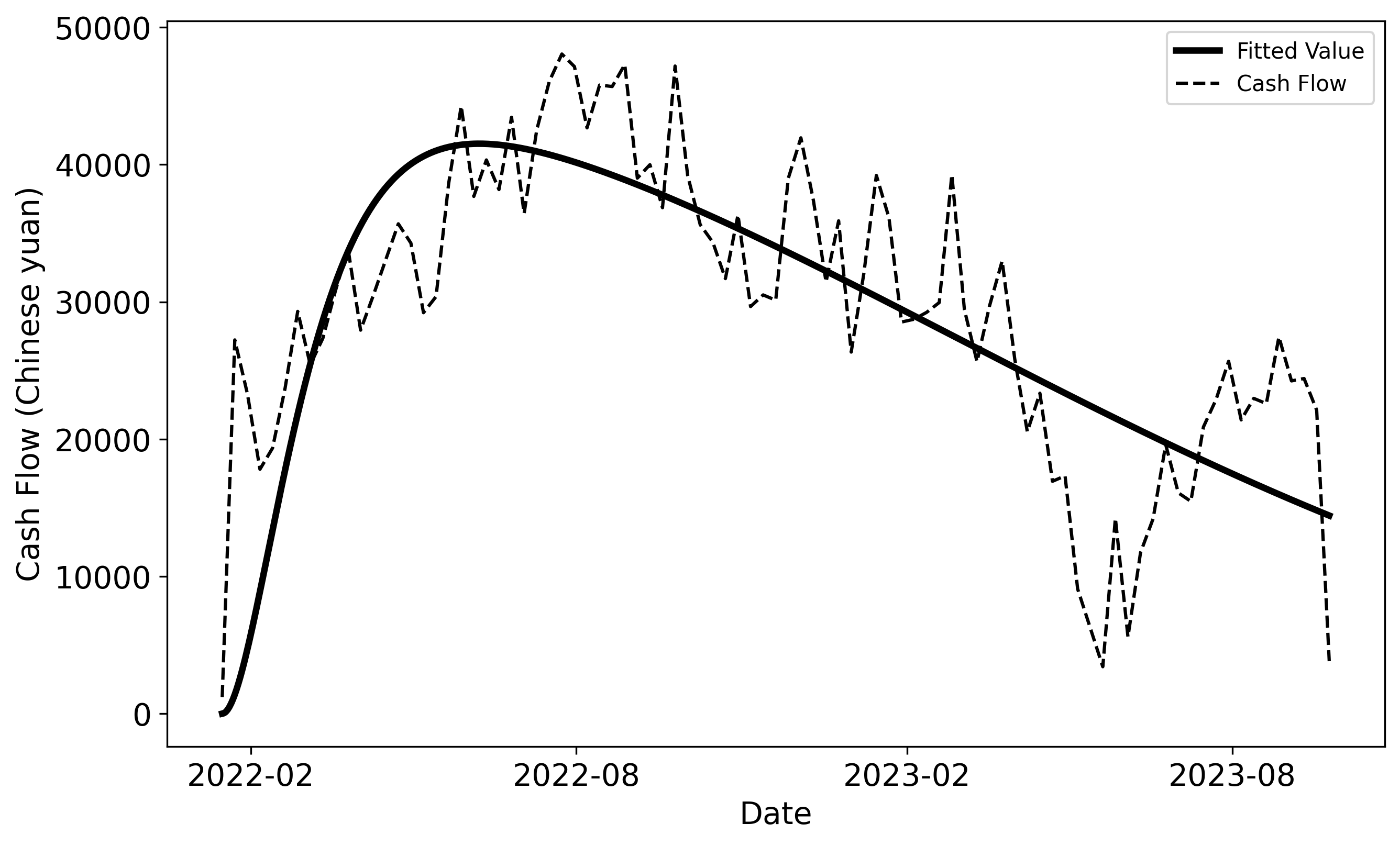}
		\raisebox{1.5ex}{Store A: Weekly}
		\label{fig:test4}
	\end{minipage}%
	\hfill 
	\begin{minipage}{.3\textwidth}
		\centering
		\includegraphics[width=\linewidth]{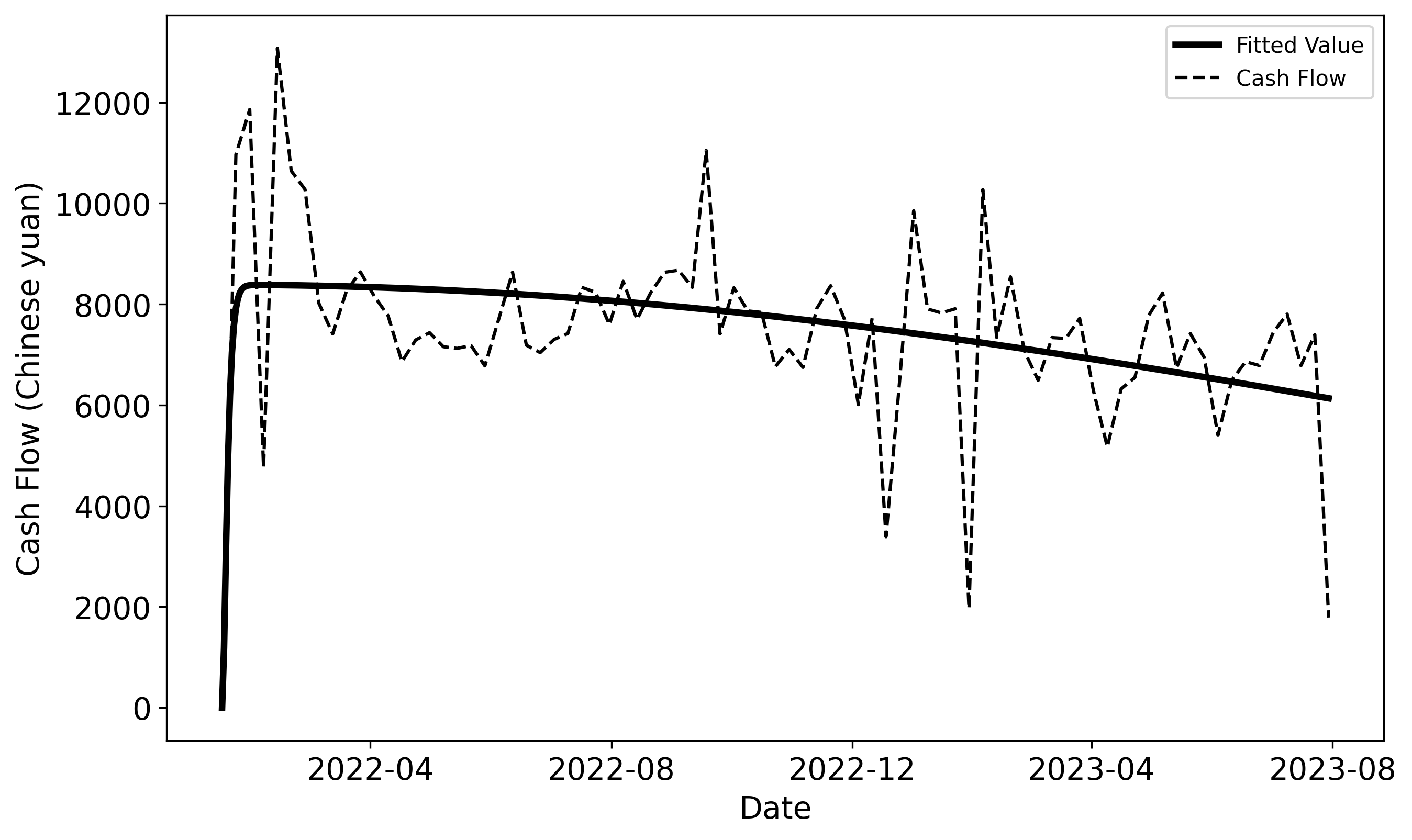}
		\raisebox{1.5ex}{Store B: Weekly}
		\label{fig:test5}
	\end{minipage}%
	\hfill 
	\begin{minipage}{.3\textwidth}
		\centering
		\includegraphics[width=\linewidth]{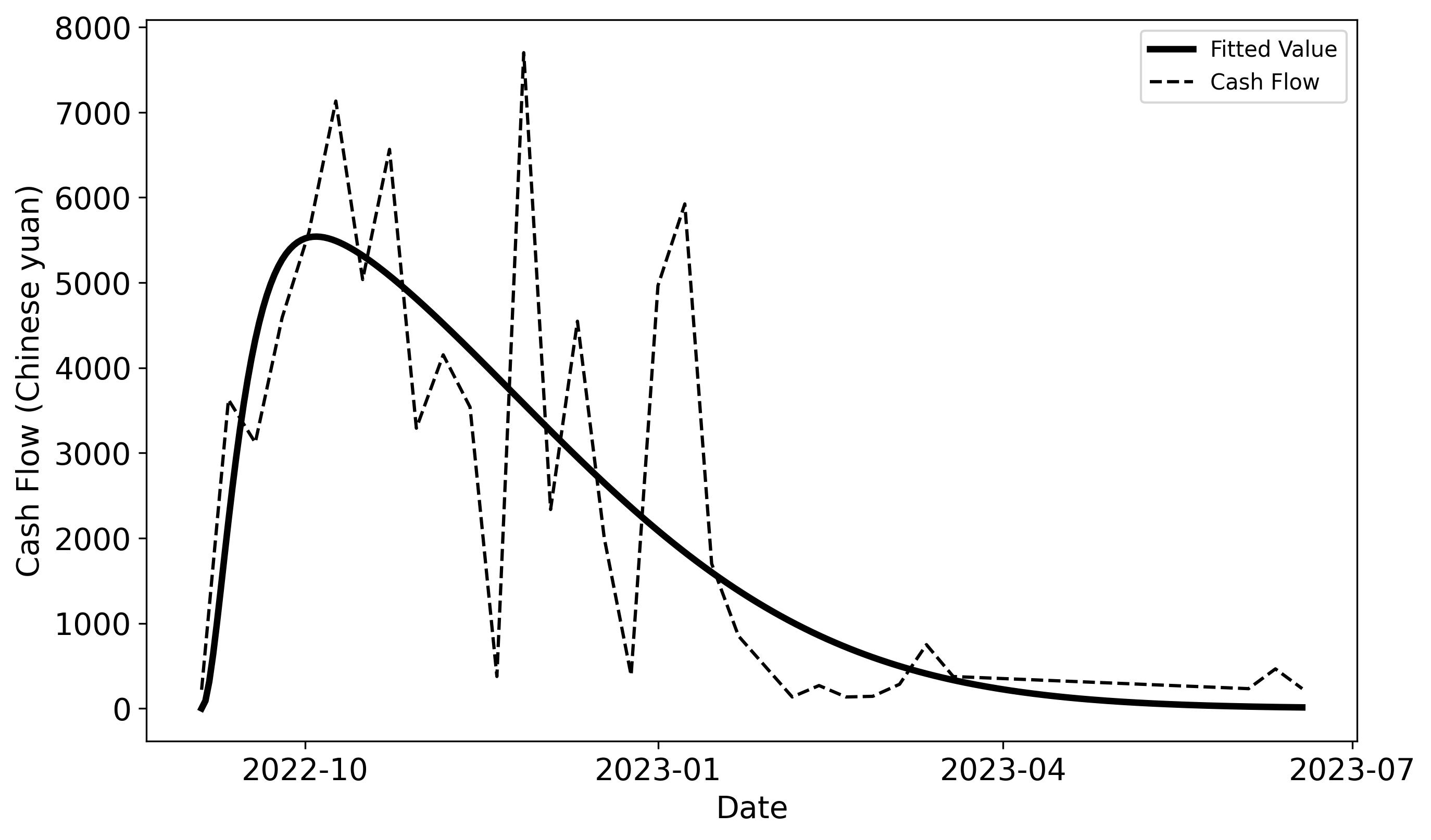}
		\raisebox{1.5ex}{Store C: Weekly}
		\label{fig:test6}
	\end{minipage}
	
	\vspace{1em} 
	
	\begin{minipage}{.3\textwidth}
		\centering
		\includegraphics[width=\linewidth]{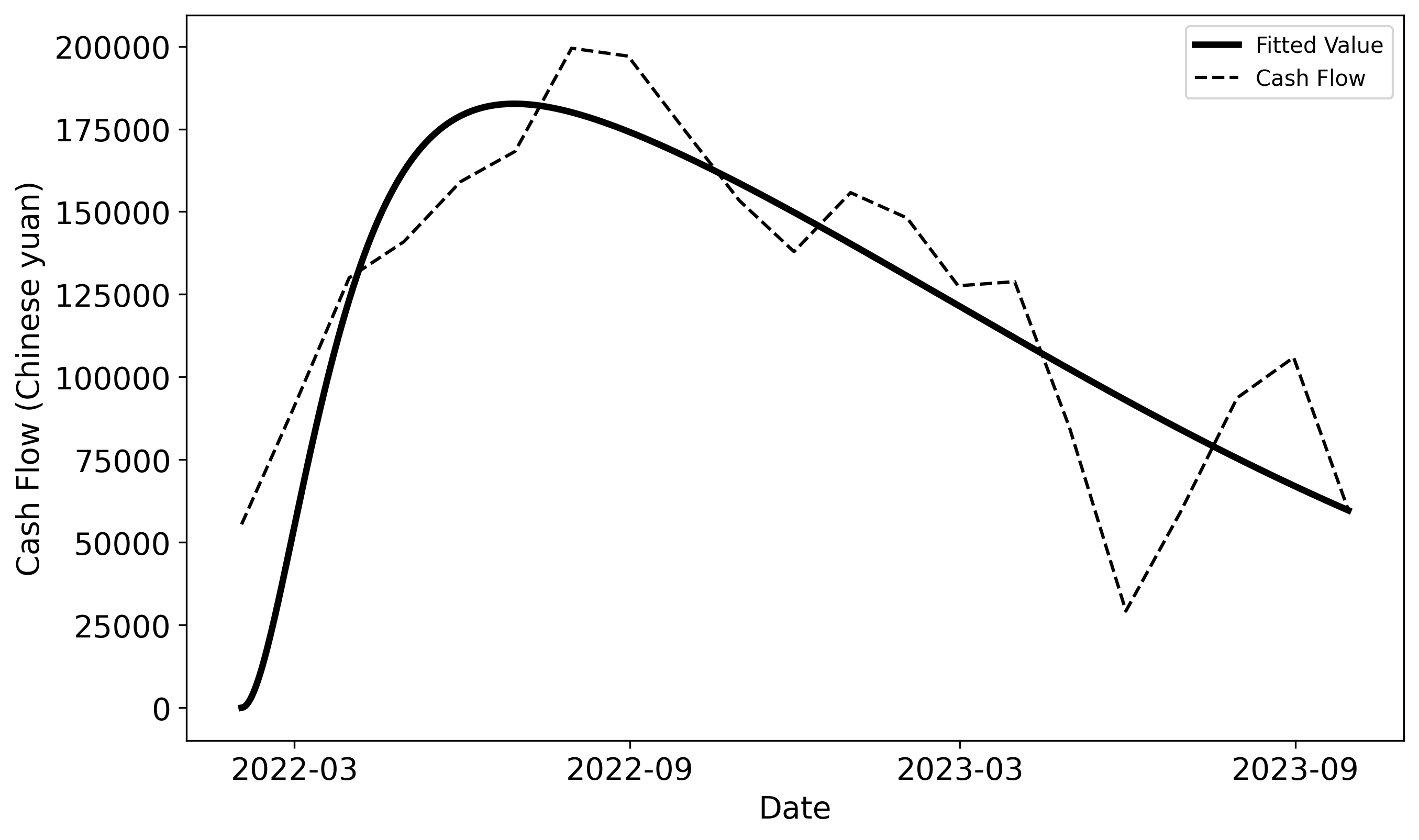}
		\raisebox{1.5ex}{Store C: Monthly}
		\label{fig:test7}
	\end{minipage}%
	\hfill 
	\begin{minipage}{.3\textwidth}
		\centering
		\includegraphics[width=\linewidth]{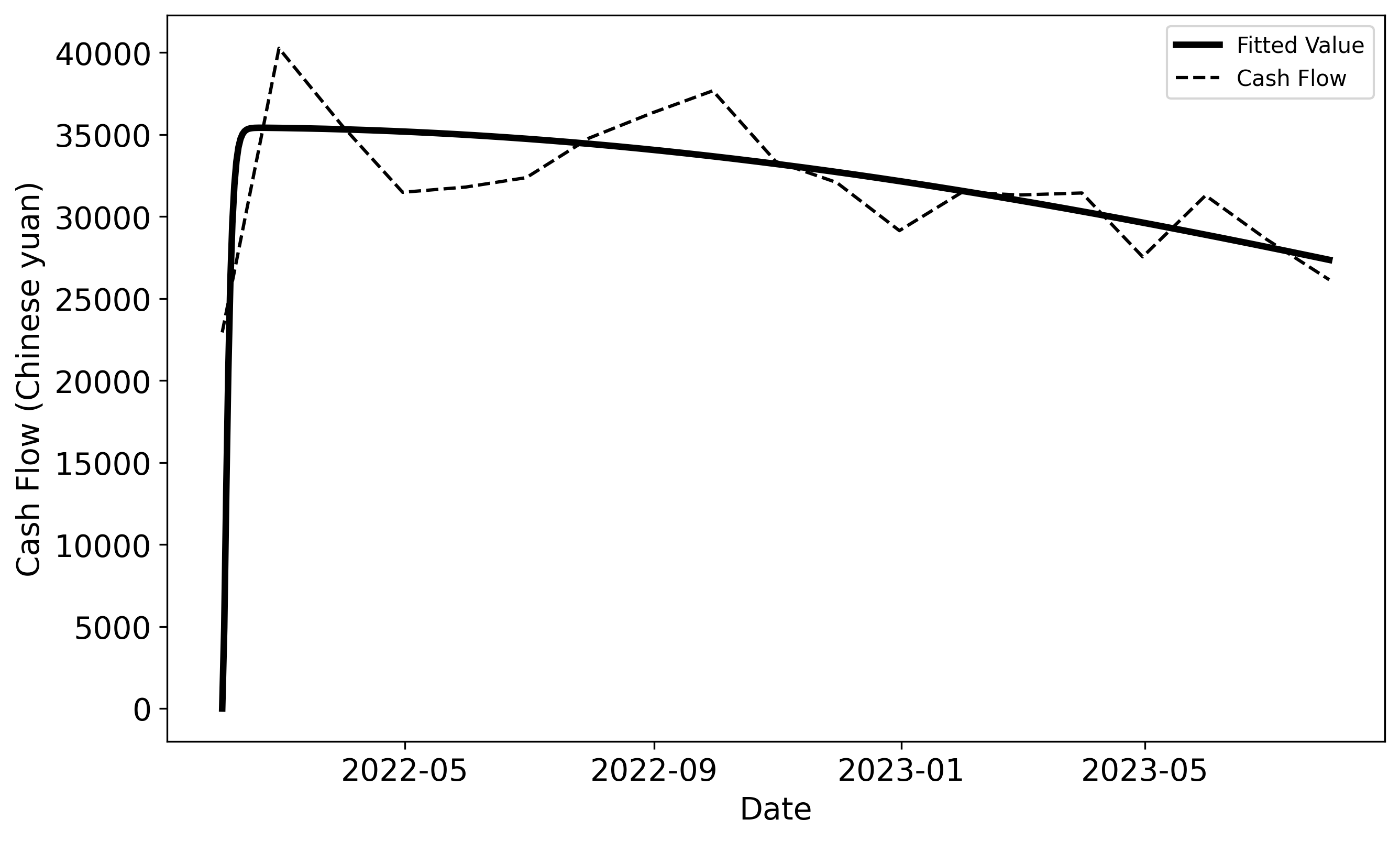}
		\raisebox{1.5ex}{Store C: Monthly}
		\label{fig:test8}
	\end{minipage}%
	\hfill 
	\begin{minipage}{.3\textwidth}
		\centering
		\includegraphics[width=\linewidth]{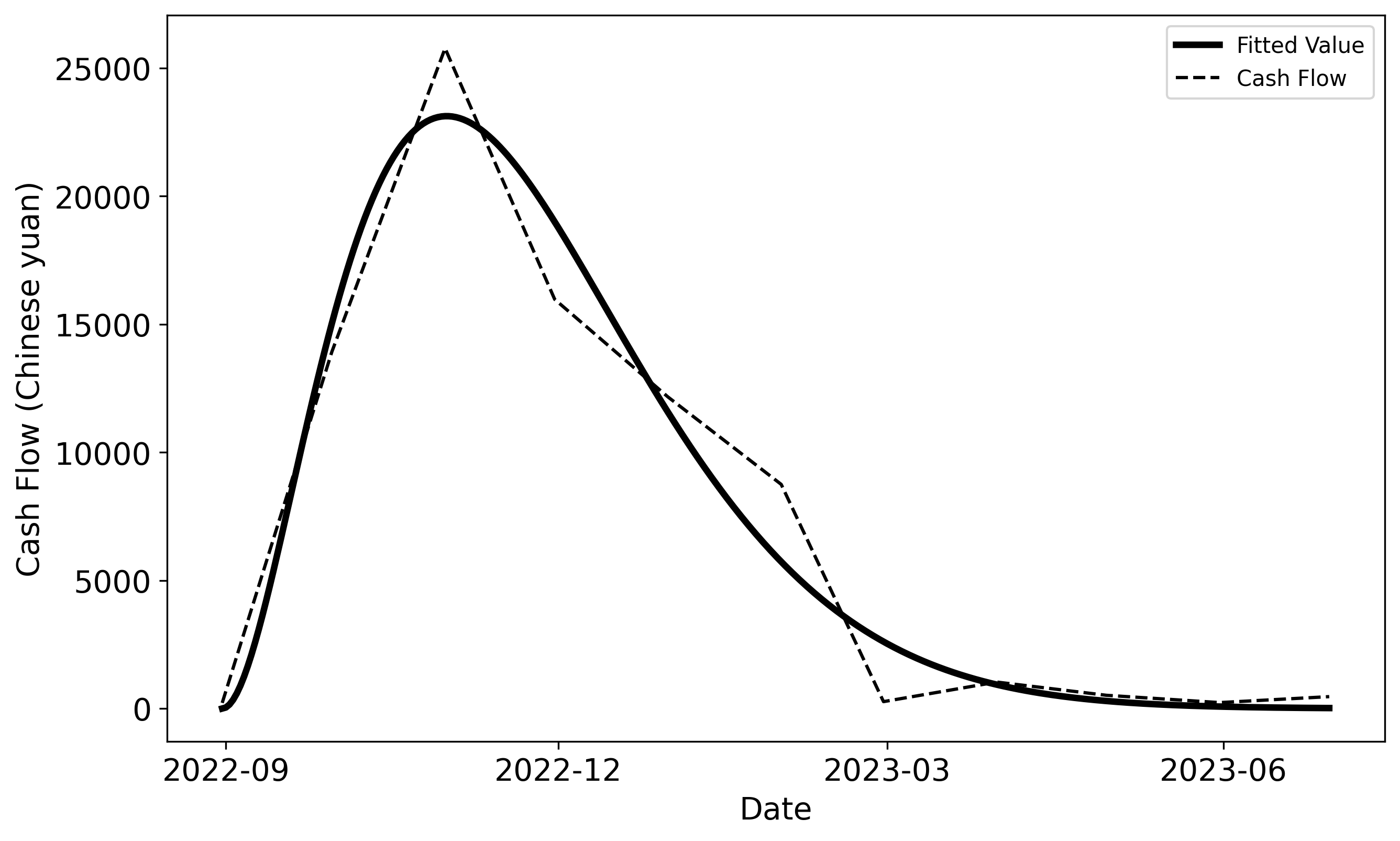}
		\raisebox{1.5ex}{Store C: Monthly}
		\label{fig:test9}
	\end{minipage}
	\caption{The figure illustrates the raw cash flow and fitted values of our model on three kinds of frequency: daily, weekly and monthly. The dashed line plots the raw cash flow, while the solid one presents the fitted values of our model. The sample period spans from January 2022 to September 2023.}
	\label{Fitted}
\end{figure}


Overall, our model well captures the trend of cash flow of China's stores over time in both qualitative and quantitative ways.

\section{Additional Discussions}
\label{Additional Discussions}
In this section, we will discuss insights, shortcomings, and further research directions. First and foremost, our model leads to some interesting implications, such as the diversely shaped cash flow curve with the diversity of consumer preferences, providing strong explanations to our real life. Then, we present some challenges faced in the empirical experiments. For instance, how we distinguish the store's style and the industry affects the estimation of potential consumers. Finally, we indicate the pros and cons of some basic settings in our model.

\subsection{Some interesting implications}
Our structural model provides micro-foundations for the data generation process of a store's cash flow, from which we derive some interesting implications. First, in the real economy, though most of the stores show a rainbow-shaped cash flow, some may present as an “M” shape. Our model implies that this is due to the diversity of consumer preferences. In the beginning, the style of the store may be outdated for some consumers but may be ahead of time for others, which means cash flow may experience increases and decreases. Second, Figure \ref{hh} shows that faster broadening of visibility not only shortens the ramp-up period but also increases the height of peak cash flow. This means that we should encourage the owner to increase monetary investment into opening promotion activities. Third, the decrease coefficient $\nu$ is the key parameter to determine the life cycle of a store. To avoid the shifting of consumer preference, the stores should frequently update their products to keep up with style. Stores that have higher budgets for product innovation will have an advantage in style updating and thus their life cycle will be longer. Similarly, policymakers should encourage the development of new product research departments. For example, a Chinese milk tea store, NAIXUE, listed in the Hong Kong exchange market updates its products per month. 

\subsection{Challenges in the empirical design}
Several challenges are encountered during the empirical testing of our model. First, in Section \ref{Empirical Evidence}, we use the category (or the sub-industry) to filter the competitors, implying that the stores in the same category own a similar style. However, this may result in some problems. For example, the stores of Miishien and Hangzhou soup dumpling stores belong to the same category of ``simple food", but it is not reliable that they are indistinguishable styles for consumers. Therefore, we may have overestimated the number of competitors for each store in our empirical research, thus underestimating the potential customers and overestimating the initial conversion rate. In empirical results, the initial conversion rate of store B from the restaurant industry is only slightly lower than that of store A from the retail industry may be because of the overestimation of store B's competitors and underestimation of the potential consumers. 

In practical scenarios, certain stores allocate significant financial resources to promotional endeavors during their opening period. This proactive approach serves to enhance awareness among the local population, thereby accelerating the initial growth. Within our framework, the parameter $k$, denoting the visibility broadening speed, governs the duration of this ramp-up period. A higher value for $k$ means a shorter ramp-up period for the store. In our empirical investigation, we observed that the parameter $k$ for store B yielded statistically insignificant results, implying a negligible ramp-up period for this particular store. Our model also accounts for these unusual cases.


Furthermore, the daily cash flow exhibits significant periodicity, particularly in the context of retail stores. For example, stores located within office buildings often experience reduced cash flow during weekends. This predictable nature presents another influential factor affecting cash flow. It unfortunately leads to a problem of heteroscedasticity and autocorrelation. Despite this challenge, our model effectively captures the primary trends in cash flow dynamics.



Finally, in real-life scenarios, when a store's cash flow dips below the break-even point, owners may not promptly opt for the closure of their store. Figure \ref{hh} depicting store C highlights the owner's persistent efforts to bolster cash flow, which ultimately proved unsuccessful. In practice, owners typically dedicate an extended period to enhancing cash flow, often through methods like discounted promotions. This period is usually three months in China, because of quarterly rent obligations. 

In conclusion, 
while our framework is comprehensive in evaluating key parameters and in validating the explanatory power, these challenges still require future consideration and resolution.

\subsection{Comments on the model settings}
There are also some settings of our theoretical model that are worth emphasizing and discussing in detail. 
First, in Proposition \ref{unique}, we show that even stores that open with the same initial monetary investment at the same time may choose different styles, which is consistent with real life. However, we assume a probability ordering rule to simplify to derive the Nash Equilibrium. Even if it is not assumed, we suspect that there should be an equilibrium in this style market, possibly a mixed equilibrium. Second, one might argue that our assumption of population structure does not change over time, but in fact, the assumption of preference shifting implies the explanation. What is more likely to happen in reality is that both population structure and consumers' preferences change, but we can always encapsulate common preference groups that are stable over time into more specific types (in the most specific case, each individual is a type). The only thing we need to focus on is how the preferences of the specific types change. Third, the preference shifting of consumers is depicted by the classic Cobb-Douglas utility function in Assumption \ref{uf}. Nevertheless, the theory of behavioral economics has become more and more popular and is shown to be more realistic; see \cite{BARBERIS20031053}. E.g., the cumulative prospect theory (\cite{tversky1992advances}) and related S-shaped utility functions with new variants (\cite{liang2020classification,liang2024unified}) could be explored to calibrate the customers' preferences. 
Lastly, the Nash Equilibrium derived in this paper relies on the existence of a type of dominant consumers with a high proportion. If this is not the case, the store's decisions with respect to consumers' behavior may be discontinuous. Some counterexamples prove that the equilibrium in Theorem \ref{t1} is not guaranteed under the existing conditions. The main reason for this is that it is hard to ensure that the optimal purchase probability density for each type of consumer is simultaneously normalized.

\section{Concluding remarks}
\label{Conclusion}

We develop a theoretical framework to characterize a store's daily cash flow and explore its data production mechanism. Within this model, we establish a Nash equilibrium to elucidate the impact of the shifting consumer preferences; this results in a gradual decrease trend of cash flow. Additionally, our theory highlights that competitive intensity among stores serves as another reason for unexpected store closures.
Using real data from three distinct industries in China (retail, restaurant, and service), we apply the Non-linear Least Squares regression (NLS) method to estimate initial conversion rates and decrease speeds, validating our model. Our empirical findings exhibit robust qualitative and quantitative performance, demonstrating a monthly cash flow decrease by a maximum of 5.01\%, corroborating survey insights from the China State Statistics Bureau.

The structural model we present offers insight into the cash flow generation processes at the micro level, providing explanations applicable to practical scenarios. Notably, while most stores exhibit a standard rainbow-shaped cash flow pattern, anomalies such as an ``M" shape can be accounted for by considering diverse consumer preferences in our model. Besides, we find that a faster visibility broadening speed not only accelerates ramp-up periods but also amplifies cash flow peaks (Figure \ref{hh}). This suggests that store owners should be encouraged to increase promotional activities during store openings.
Furthermore, the decreasing coefficient $\nu$ emerges as a pivotal parameter shaping a store's life cycle dynamics, wherein regular menu updates or periodic location changes can mitigate the shifting effect of consumer preference. China's bubble tea stores, such as NAIXUE listed in the Hong Kong exchange market, update their products per month. In this regard, policymakers may consider supporting innovative products to foster sustained market competitiveness. To conclude, this paper provides an answer to the 
open question of why the typical store usually has such a short life cycle. 


The originality of store modeling, along with the practical challenges inherent in empirical studies, presents promising avenues for future research. Numerous unresolved questions persist regarding specific configurations and data accessibility. For example, in our investigation, we employ a non-random scoring function, diverging from random events in real operations of a store, which could be a subject for subsequent exploration. Furthermore, given the focal point on preferences in our study, we presume the average consumer purchase (ACP) to be constant. To enhance accuracy, the ACP may be modeled by a deterministic function or a stochastic process. 
Finally, a comprehensive examination of rainbow-shaped curve structures, particularly in circumstances of censored data, remains on the precipice of interest for further study.

\quad 

\noindent
{\bf Acknowledgements.}
The authors thank Zhuo Chen, Zherui Fan, Chengxin Gong, Yangbo He, Zhiyi Song, Ningxin Zhang, Zhen Zhou, Elizabeth Zhu, and members of the research group on financial mathematics and risk management at CUHK-Shenzhen for their helpful discussion and important feedback. 
Y. Liu acknowledges financial support from the research startup fund at The Chinese University of Hong Kong, Shenzhen (Grant No. UDF01003336). 

\newpage
\appendix
\section{Details in definition}
\label{append1}
\subsection{Definition of purchase probability density.}
As long as $\text{dim}(\mathbb{Z}_t)>0,$ we can define the purchase probability density. Denote the Lebesgue measure (or Hausdorff measure) on $\mathbb{Z}_t$ as $\sigma_{\mathbb{Z}_t},$ The type $j$th consumers' purchase probability measure on $\mathbb{Z}_t$ is $\mathbb{P}_{jt},$ which satisfies that $\mathbb{P}_{jt}\ll \sigma_{\mathbb{Z}_t}.$ Then we define 
\begin{equation}
    \rho_{jt}(\boldsymbol{z}) = \tilde{\rho}_{jt}(\boldsymbol{z}) = \frac{\mathrm{d}\mathbb{P}_{jt}}{\mathrm{d}\sigma_{\mathbb{Z}_t}}
\end{equation}
and the integrals with respect to it is defined as \begin{equation}
    \int_{\mathbb{Z}_t} f(\boldsymbol{z},\rho_{jt}(\boldsymbol{z})) \mathrm{d}\boldsymbol{z} = \int_{\mathbb{Z}_t} f(\boldsymbol{z},\tilde{\rho}_{jt}(\boldsymbol{z})) \mathrm{d}\sigma_{\mathbb{Z}_t}.
\end{equation}
See more in \cite{cohn2013measure} and \cite{bertsekas2008introduction}.

\section{Proofs}
\label{append2}
\subsection{Missing proof of Theorem \ref{t1}}
According to Proposition \ref{pro7}, there exists $\tilde{B}_0(\cdot)$ such that on $\Psi_{\tilde{B}_0}$ we have $\psi_j$ are smooth w.r.t. $\boldsymbol{K}$ and
\begin{equation}
	\frac{\partial}{\partial K_j} \int_0^{t'}\psi_j(\boldsymbol{K},t,I)\mathrm{d}\tau > \sum_{2\leq i \leq m, i\neq j}\frac{\partial}{\partial K_i}\left|\int_0^{t'}\psi_j(\boldsymbol{K},t,I)\mathrm{d}\tau\right|>0
\end{equation}
for $j=2,\ldots,m$ and $t \in (t_{\epsilon},\infty)$ for some $t_{\epsilon}>0.$ It is evident that there exists $b_0>0$ and such that
\begin{equation}
	\sup_{\boldsymbol{K}\in \Psi_{\tilde{B}_0}|_{K_{1}\geq b_0 } }b_0\int_{\mathcal{I}}\int_0^{T(\boldsymbol{K},I)} \psi_1(\boldsymbol{K},t,I)\mathrm{d}\tau\mathrm{d}I\leq b_0\int_{\mathcal{I}}\int_0^{\infty} \sup_{\boldsymbol{K}\in \Psi_{\tilde{B}_0}|_{K_{1}\geq b_0 } } \psi_1(\boldsymbol{K},t,I)\mathrm{d}\tau\mathrm{d}I < 1.
\end{equation}
By Proposition \ref{pro8} when $\mathbf{P}^{(1)}$ is sufficiently large, we can simultaneously ensure that
\begin{align}
	&\inf_{\boldsymbol{K}\in \Psi_{\tilde{B}_0}|_{K_{1}\geq b_0 } }\tilde{B}_0(b_0)\int_{\mathcal{I}}\int_0^{T(\boldsymbol{K},I)} \psi_j(\boldsymbol{K},t,I)\mathrm{d}\tau\mathrm{d}I \\
	\geq &
	\tilde{B}_0(b_0)\int_{\mathcal{I}}\int_0^{\underset{{\boldsymbol{K}\in \Psi_{\tilde{B}_0}|_{K_{1}\geq b_0 } }}{\inf}T(\boldsymbol{K},I)} \underset{{\boldsymbol{K}\in \Psi_{\tilde{B}_0}|_{K_{1}\geq b_0 } }}{\inf}\psi_j(\boldsymbol{K},t,I)\mathrm{d}\tau\mathrm{d}I >1
\end{align}
for $j=2,\ldots,m$ and 
\begin{equation}
	\check{\Psi} = [b_0,\infty)\times\left[0,\tilde{B}_0(b_0)\right]^{m-1}\subseteq \Psi_{\tilde{B}_0}.
\end{equation}
From now on, we restrict $\boldsymbol{K}$ to $\check{\Psi}$ to explore the existence. 
Take $r_{m}>0$ such that when $0<r(I)\leq r_{m}(I\in\mathcal{I}),$
\begin{equation}
	\underset{{\boldsymbol{K}\in \Psi_{\tilde{B}_0}|_{K_{1}\geq b_0 } }}{\inf}T(\boldsymbol{K},I) > t_\epsilon,\ I\in\mathcal{I}.
\end{equation} Furthermore, there exists $W(\boldsymbol{K},I)>0$ such that $\psi_j(\boldsymbol{K},t,I)$ decreases to $0$ w.r.t. $t$ on $[W(\boldsymbol{K},I),\infty)$ for any $j=1,2,\ldots,m.$ So $\sum_{j=1}^m\mathbf{P}^{(j)}K_j\psi_j(\boldsymbol{K},t,I)$ also decreases to $0$ w.r.t. $t$ on $[W(\boldsymbol{K},I),\infty).$ Therefore, $T(\boldsymbol{K},I)$ is the unique zero of
\begin{equation}
	\theta({\boldsymbol{z}}_{0}^*(\boldsymbol{K},I))\sum_{j=1}^m \mathbf{P}^{(j)}K_j \cdot F_{j0}({\boldsymbol{z}}_{0}^*(\boldsymbol{K},I)-\tau{\boldsymbol{d}},\theta({\boldsymbol{z}}_{0}^*(\boldsymbol{K},I)))- r(I)
\end{equation} 
on $[W(\boldsymbol{K},I),\infty).$ By the implicit function theorem (see \cite{rudin1964principles}), $T(\boldsymbol{K},I)$ is smooth w.r.t. $\boldsymbol{K}.$ Then we propose the following lemma. Thus,
$
	v(\boldsymbol{K},I),\ V(\boldsymbol{K})
$
are both smooth w.r.t. $\boldsymbol{K}.$

\begin{lemma}
	\label{lemma1}
Restrict $\boldsymbol{K}$ to $\check{\Psi},$ 
there exists $\bar{r}>0$ such that as long as $0<r(I)\leq \bar{r}(I\in\mathcal{I})$ holds, for $n=m-1,m-2,\ldots,1,0,$ given $\bar{K}_1,\bar{K}_2,\ldots,\bar{K}_n,$ there exist $K^*_{n+1},K^*_{n+2},\ldots,K^*_{m}$ such that when $\boldsymbol{K} = (\bar{K}_1,\bar{K}_2,\ldots,\bar{K}_n,K^*_{n+1},K^*_{n+2},\ldots,K^*_{m}),$ 
\begin{equation}
	1 = \int_{\mathcal{I}} \int_{0}^{T(\boldsymbol{K},I)}K^*_j\psi_j(\boldsymbol{K},t,I)\mathrm{d}\tau\mathrm{d}I,\ j=n+1,\ldots,m.
\end{equation}
hold. Moreover, when $n\geq 1,$ $K^*_j$ is smooth w.r.t. $\bar{K}_i$ for $n+1\leq j\leq m, 1\leq i \leq n,$ and
\begin{equation}
	\left|\frac{\partial K^*_j}{\partial \bar{K_i}}\right|<1.
\end{equation}
for $n+1\leq j \leq m, 2\leq i \leq n.$
\end{lemma}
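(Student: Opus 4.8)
The plan is to prove the lemma by backward induction on $n$, descending from $n=m-1$ to $n=0$. Writing $g_j(\boldsymbol{K}):=K_jV_j(\boldsymbol{K})$, the assertion at level $n$ is that, with $\bar{K}_1,\dots,\bar{K}_n$ held fixed in $\check{\Psi}$, the subsystem $g_j(\boldsymbol{K})=1$ $(j=n+1,\dots,m)$ can be solved for $K_{n+1}^*,\dots,K_m^*$ with the stated smoothness and the bound $|\partial K_j^*/\partial\bar{K}_i|<1$ for $2\le i\le n$. Two facts drive everything: the strict diagonal dominance of Proposition \ref{pro7}, which on $\check{\Psi}$ dominates, for each $j\ge 2$, the cross-derivatives $\partial g_j/\partial K_i$ $(i\ge 2,\ i\ne j)$ by the positive own-derivative $\partial g_j/\partial K_j$; and the two endpoint estimates recorded just before the lemma, namely $g_1<1$ at $K_1=b_0$ and $g_j>1$ at $K_j=\tilde{B}_0(b_0)$, each uniform over the remaining coordinates. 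For the base case $n=m-1$ I would fix $\bar{K}_1,\dots,\bar{K}_{m-1}$ and study $H(K_m):=g_m(\bar{K}_1,\dots,\bar{K}_{m-1},K_m)$ on $[0,\tilde{B}_0(b_0)]$: since $H(0)=0<1$, $H(\tilde{B}_0(b_0))>1$, and $H'=\partial g_m/\partial K_m>0$ by Proposition \ref{pro7}, the map $H$ increases strictly and hits $1$ at a unique $K_m^*$.

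For the inductive step from level $n$ to level $n-1$ (with $n\ge 2$) I would treat $\bar{K}_n$ as a free parameter, insert the solution maps $K_{n+1}^*,\dots,K_m^*$ furnished by level $n$, and reduce to the single equation $H(K_n):=g_n(\bar{K}_1,\dots,\bar{K}_{n-1},K_n,K_{n+1}^*,\dots,K_m^*)=1$. The key computation is
\[
\frac{dH}{dK_n}=\frac{\partial g_n}{\partial K_n}+\sum_{j>n}\frac{\partial g_n}{\partial K_j}\,\frac{\partial K_j^*}{\partial K_n},
\]
where the inductive bound $|\partial K_j^*/\partial K_n|<1$ (available exactly because $n\ge 2$) together with the diagonal dominance of $g_n$ forces $dH/dK_n>0$; strict monotonicity plus the endpoint estimates then yield a unique $K_n^*$. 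Smoothness of the enlarged solution follows from the implicit function theorem, since the Jacobian of the $(m-n+1)$-dimensional subsystem is diagonally dominant, hence invertible.

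The derivative bound is obtained by a Gershgorin-type argument on the full subsystem. Fixing $i$ with $2\le i\le n-1$, differentiating implicitly, and setting $\sigma=\max_j|\partial K_j^*/\partial\bar{K}_i|$ attained at some index $j^\star$, I would move the free off-diagonal terms to the right and bound the single fixed term by $S_{\mathrm{fix}}:=\sum_{i'}|\partial g_{j^\star}/\partial\bar{K}_{i'}|$, obtaining $\sigma(|\partial g_{j^\star}/\partial K_{j^\star}|-S_{\mathrm{free}})\le S_{\mathrm{fix}}$ where $S_{\mathrm{free}}$ collects the free off-diagonal magnitudes; diagonal dominance gives $|\partial g_{j^\star}/\partial K_{j^\star}|>S_{\mathrm{free}}+S_{\mathrm{fix}}$, whence $\sigma<1$. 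The terminal passage from level $1$ to level $0$ must be handled separately, because the type-$1$ direction is not governed by Proposition \ref{pro7}: there I would drop monotonicity and apply the intermediate value theorem to $g_1(K_1,K_2^*,\dots,K_m^*)$ on $[b_0,\infty)$, using $g_1<1$ at $K_1=b_0$ and $g_1\to\infty$ as $K_1\to\infty$ (as $V_1$ stays bounded below once type $1$ dominates). This produces the vector $\boldsymbol{K}$ with $g_j(\boldsymbol{K})=1$ for all $j$ and closes the equilibrium construction.

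The step I expect to be the main obstacle is transferring the diagonal dominance from the fixed-horizon integrals $\int_0^{t'}\psi_j\,d\tau$ of Proposition \ref{pro7} to $V_j=\int_{\mathcal{I}}\int_0^{T(\boldsymbol{K},I)}\psi_j\,d\tau\,dI$, whose upper limit $T$ depends on $\boldsymbol{K}$. Differentiating in $K_\ell$ generates boundary contributions $\psi_j(\boldsymbol{K},T,I)\,\partial T/\partial K_\ell$, and the delicate point is that these must not overturn the interior diagonal dominance. This is precisely where $0<r(I)\le\bar{r}$ is used: taking $\bar{r}$ small forces the break-even time $T$ to be large, so that the regime $t'>t_\epsilon$ of Proposition \ref{pro7} covers the whole integration range and, since $\sum_i\mathbf{P}^{(i)}K_i\psi_i(\boldsymbol{K},T,I)=r(I)/\theta$ is small, the densities at the break-even instant are tiny and the boundary terms are negligible beside the strictly dominant interior terms.
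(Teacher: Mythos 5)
Your proposal is correct and follows essentially the same route as the paper's proof: backward induction from $n=m-1$ down to $n=0$, existence of each $K_n^*$ via the intermediate value theorem using the endpoint estimates on $\check{\Psi}$, uniqueness and smoothness from the strict monotonicity supplied by the diagonal dominance of Proposition \ref{pro7} together with the implicit function theorem, a separate non-monotone IVT argument on $[b_0,\infty)$ for the final type-$1$ equation, and control of the boundary terms $\psi_j(\boldsymbol{K},T,I)\,\partial T/\partial K_\ell$ by taking $\bar r$ small — which is exactly how the paper uses the hypothesis $0<r(I)\le\bar r$. The only cosmetic difference is that you derive the bound $|\partial K_j^*/\partial\bar K_i|<1$ by a Gershgorin-type estimate on the full subsystem, whereas the paper obtains it sequentially as the ratio of off-diagonal to diagonal derivatives of $K_nV_n$ followed by a chain-rule composition for the higher-indexed components; both rest on the same dominance inequality.
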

\begin{proof}[proof of Lemma \ref{lemma1}]
We use backward mathematical induction for $n$ to prove.

\textbf{Basic Case: } When $n=m-1,$ given $\bar{K}_1,\bar{K}_2,\ldots,\bar{K}_{m-1},$ note that
\begin{equation}
	K_m V_m(\bar{K}_1,\bar{K}_2,\ldots,\bar{K}_{m-1},K_m)\rightarrow 0\ (K_m\rightarrow 0 )
\end{equation}
and 
\begin{equation}
	K_m V_m(\bar{K}_1,\bar{K}_2,\ldots,\bar{K}_{m-1},K_m)> 1 \ (K_m\rightarrow \tilde{B}_0(b_0) ).
\end{equation}
By continuity, there exists $K^*_1 \in \left[0,\tilde{B}_0(b_0)\right]$ such that
\begin{equation}
	K^*_m V_m(\bar{K}_1,\bar{K}_2,\ldots,\bar{K}_{m-1},K^*_m)= 1.
\end{equation}
Moreover, for $i = 2,\ldots,m,$
\begin{align}
	\frac{\partial }{\partial K_i}(K_mV_m(\boldsymbol{K})) &= K_m\int_{\mathcal{I}}\psi_m(\boldsymbol{K},T,I)\mathrm{d}I\cdot\frac{\partial }{\partial K_i}T(\boldsymbol{K},I) + \int_{\mathcal{I}}\int_0^{T}\frac{\partial }{\partial K_i} (K_m\psi_m(\boldsymbol{K},t,I))\mathrm{d}t\mathrm{d}I\\
	&= \int_{\mathcal{I}}\int_0^{T}\frac{\partial }{\partial K_i} (K_m\psi_m(\boldsymbol{K},t,I))\mathrm{d}t\mathrm{d}I + o(1)\ \left(\sup_{\mathcal{I}}r(I)\rightarrow 0 \right).
\end{align}
Thus there exists $0<r_{m-1}\leq r_{m},$ when $0<r(I)\leq r_{m-1}(I\in\mathcal{I}),$
\begin{equation}
	\frac{\partial }{\partial K_m}(K_mV_m(\boldsymbol{K})) > \sum_{i=2}^{m-1}\left|\frac{\partial }{\partial K_i}(K_mV_m(\boldsymbol{K}))\right|>0,\ I\in\mathcal{I}.
\end{equation}
So we know that $K_m V_m(\bar{K}_1,\bar{K}_2,\ldots,\bar{K}_{m-1},K_m)$ increases w.r.t. $K_m,$ thus $K^*_m$ is unique. By the implicit function theorem (see \cite{rudin1964principles}), $K^*_m$ is smooth w.r.t. $\bar{K_i},i=1,2,\ldots,m-1$ and  
\begin{equation}
	\left|\frac{\partial K^*_m}{\partial \bar{K_i}}\right| = \frac{\left|\frac{\partial }{\partial K_i}(K_mV_m(\boldsymbol{K}))\right|}{\frac{\partial }{\partial K_m}(K_mV_m(\boldsymbol{K}))}<1,
\end{equation}
for $i = 2,\ldots,m-1.$

\textbf{Inductive Step: }Assume the statement is true for $n\ (2\leq n\leq m-1),$ and below we show that it is true for $n-1.$ Given $\bar{K}_{1},\ldots,\bar{K}_{n-1},$ and arbitrary $K_n,$ by the inductive hypothesis there exist $K^*_{n+1}(K_n),\ldots,K^*_{m}(K_n)$ such that for $j=n+1,\ldots,m,$
\begin{equation}
	K^*_j(K_n)V_j(\boldsymbol{K}(K_n))  =1,
\end{equation}
where $\boldsymbol{K}(K_n)=(\bar{K}_{1},\ldots,\bar{K}_{n-1},K_n,K_{n+1}^*(K_n),\ldots,K_{m}^*(K_n))$ is smooth w.r.t. $K_n.$ Note that 
\begin{equation}
	K_n V_n(\boldsymbol{K}(K_n))\rightarrow 0\ (K_n\rightarrow 0 )
\end{equation}
and 
\begin{equation}
	K_n V_n(\boldsymbol{K}(K_n))> 1 \ (K_n\rightarrow \tilde{B}_0(b_0) ).
\end{equation}
By continuity, there exists $K_n^*\in\left[0,\tilde{B}_0(b_0)\right]$ such that
\begin{equation}
	K^*_n V_n(\boldsymbol{K}(K^*_n))= 1.
\end{equation}
Furthermore, for $i = 2,\ldots,n-1,$
\begin{align}
	&\frac{\partial }{\partial K_i}(K_nV_n(\boldsymbol{K}(K_n))) \\
	=& K_n\int_{\mathcal{I}}\psi_n(\boldsymbol{K}(K_n),T,I)\mathrm{d}I\cdot\frac{\partial }{\partial K_i}T(\boldsymbol{K}(K_n),I) + \int_{\mathcal{I}}\int_0^{T}\frac{\partial }{\partial K_i} (K_m\psi_m(\boldsymbol{K}(K_n),t,I))\mathrm{d}t\mathrm{d}I\\
	=& \int_{\mathcal{I}}\int_0^{T}\frac{\partial }{\partial K_i} (K_n\psi_n(\boldsymbol{K}(K_n),t,I))\mathrm{d}t\mathrm{d}I + o(1)\ \left(\sup_{\mathcal{I}}r(I)\rightarrow 0 \right).
\end{align}
and 
\begin{align}
	&\frac{\partial }{\partial K_n}(K_nV_n(\boldsymbol{K}(K_n))) \\
	=&K_n\int_{\mathcal{I}}\psi_n(\boldsymbol{K}(K_n),T,I)\mathrm{d}I\cdot\left(\frac{\partial }{\partial K_n}T(\boldsymbol{K}(K_n),I)+\sum_{j=n+1}^{m}\frac{\partial K^*_j}{\partial K_n}\frac{\partial }{\partial K_j}T(\boldsymbol{K}(K_j),I) \right)\notag\\
	 &+\left(\int_{\mathcal{I}}\int_0^{T}\frac{\partial }{\partial K_n} (K_n\psi_n(\boldsymbol{K}(K_n),t,I))\mathrm{d}t\mathrm{d}I+\sum_{j=n+1}^{m}\int_{\mathcal{I}}\int_0^{T}\frac{\partial K^*_j}{\partial K_n}\frac{\partial }{\partial K_j} (K_n\psi_n(\boldsymbol{K}(K_n),t,I))\mathrm{d}t\mathrm{d}I\right)\\
	 =& \int_{\mathcal{I}}\int_0^{T}\frac{\partial }{\partial K_n} (K_n\psi_n(\boldsymbol{K}(K_n),t,I))\mathrm{d}t\mathrm{d}I+\sum_{j=n+1}^{m}\int_{\mathcal{I}}\int_0^{T}\frac{\partial K^*_j}{\partial K_n}\frac{\partial }{\partial K_j} (K_n\psi_n(\boldsymbol{K}(K_n),t,I))\mathrm{d}t\mathrm{d}I + o(1),
\end{align}
where
\begin{align}
	&\sum_{j=n+1}^{m}\left|\int_{\mathcal{I}}\int_0^{T}\frac{\partial K^*_j}{\partial K_n}\frac{\partial }{\partial K_j} (K_n\psi_n(\boldsymbol{K}(K_n),t,I))\mathrm{d}t\mathrm{d}I\right|\\
	\leq &\sum_{j=n+1}^{m}\left|\frac{\partial K^*_j}{\partial K_n}\right|\left|\int_{\mathcal{I}}\int_0^{T}\frac{\partial }{\partial K_j} (K_n\psi_n(\boldsymbol{K}(K_n),t,I))\mathrm{d}t\mathrm{d}I\right| \\
	\leq & \sum_{j=n+1}^{m}\left|\int_{\mathcal{I}}\int_0^{T}\frac{\partial }{\partial K_j} (K_n\psi_n(\boldsymbol{K}(K_n),t,I))\mathrm{d}t\mathrm{d}I\right|
.\end{align}
Thus there exists $0<r_{n-1}\leq r_{n},$ when $0<r(I)\leq r_{n}(I\in\mathcal{I}),$
\begin{equation}
	\frac{\partial }{\partial K_n}(K_nV_n(\boldsymbol{K}(K_n))) > \sum_{i=2}^{n-1}\left|\frac{\partial }{\partial K_i}(K_nV_n(\boldsymbol{K}(K_n)))\right|>0,\ I\in\mathcal{I}.
\end{equation}
So we know that $K_n V_n(\boldsymbol{K}(K_n))$ increases w.r.t. $K_n,$ thus $K^*_n$ is unique. By the implicit function theorem (see \cite{rudin1964principles}), $K^*_n$ is smooth w.r.t. $\bar{K_i},i=1,2,\ldots,n-1$ and  
\begin{equation}
	\left|\frac{\partial K^*_n}{\partial \bar{K_i}}\right| = \frac{\left|\frac{\partial }{\partial K_i}(K_nV_n(\boldsymbol{K}(K_n)))\right|}{\frac{\partial }{\partial K_n}(K_nV_n(\boldsymbol{K}(K_n)))}<1,
\end{equation}
for $i = 2,\ldots,n-1.$ Furthermore, $K_j^*=K_j^*(K_n^*(\bar{K}_i))$ is smooth w.r.t. $\bar{K_i}$ for $1\leq i \leq n-1,n+1\leq j\leq m$ and 
\begin{equation}
	\left|\frac{\partial}{\partial \bar{K_i}}K_j^*(K_n^*(\bar{K}_i))\right| = \left|\frac{\partial K_j^*}{\partial K_n}\right|\cdot \left| \frac{\partial K_n}{\partial \bar{K}_i}\right|<1,
\end{equation}
for $2\leq i \leq n-1,n+1\leq j\leq m.$

\textbf{Final Step: }When $n=0,$ for arbitrary $K_1,$ by the inductive steps, there exist $K^*_{2}(K_1),\ldots,K^*_{m}(K_1)$ such that for $j=2,\ldots,m,$
\begin{equation}
	K^*_j(K_1)V_j(\boldsymbol{K}(K_1))  =1,
\end{equation}
where $\boldsymbol{K}(K_1)=(K_1,K_{2}^*(K_1),\ldots,K_{m}^*(K_1))$ is smooth w.r.t. $K_1.$ Note that
\begin{equation}
	K^*_1(K_1)V_1(\boldsymbol{K}(K_1))  <1\ (K_1\rightarrow b_0),
\end{equation}
and 
\begin{equation}
	K^*_1(K_n)V_1(\boldsymbol{K}(K_1))  \rightarrow \infty\ (K_1\rightarrow \infty).
\end{equation}
By continuity, there exists $K_1^*\in\left[b_0,\infty\right)$ such that
\begin{equation}
	K^*_1 V_1(\boldsymbol{K}(K^*_1))= 1.
\end{equation}
Then we finish the proof of lemma.
\end{proof}
Using the case $n = 0$ in Lemma \ref{lemma1}, the proof is complete.
\subsection{Proof of Proposition \ref{pro32}}
  Let 
    \begin{equation}
        \rho_t({\boldsymbol{z}}|\theta) = \sum_{j=1}^m \mathbf{P}^{(j)}\rho_{jt}({\boldsymbol{z}}|\theta).
    \end{equation}
First we are going to show that given $\theta,$ $\rho_t({\boldsymbol{z}}|\theta)$ has a maximum point on $D=\left\{{\boldsymbol{z}}\in\mathbb{R}^{p+q}:G(\boldsymbol{z})\leq I\right\},$ which is a convex set. Noting that 
\begin{equation}
    \rho_t({\boldsymbol{z}}|\theta) \rightarrow 0\  (\boldsymbol{z}\rightarrow \infty\  \text{on}\  D),
\end{equation}
and
\begin{equation}
    \rho_t({\boldsymbol{z}}|\theta)>0,\ \boldsymbol{z}\in D,
\end{equation}
we know that $\rho_t({\boldsymbol{z}}|\theta)$ has a maximum point by continuity. Take the largest maximum point and denote it as ${\boldsymbol{z}}^*(\theta).$ Then we are going to show that $\theta \rho_t({\boldsymbol{z}}^*(\theta)|\theta)$ has a maximum point on $(0,\infty).$ Noting that
\begin{equation}
    \theta \rho_t({\boldsymbol{z}}^*(\theta)|\theta) \rightarrow 0 \ (\theta \rightarrow 0).
\end{equation}
Furthermore, we have
\begin{align}
    \theta\cdot\phi\left(C_j - \frac{\lambda_j}{2}\left\Vert\boldsymbol{z}^*(\theta)-\hat{\boldsymbol{z}}_{j,t} \right\Vert^2\right)\cdot e^{ -\gamma_j \theta}  &\leq \theta\cdot\phi\left(C_j \right) e^{  -\gamma_j \theta } \rightarrow 0\ (\theta \rightarrow \infty).
\end{align}
Thus, we have 
\begin{equation}
   \theta \cdot\rho_{jt}({\boldsymbol{z}}^*(\theta)|\theta) \rightarrow 0 \ (\theta \rightarrow \infty),
\end{equation}
which in turn implies that 
\begin{equation}
   \theta \cdot\rho_{t}({\boldsymbol{z}}^*(\theta)|\theta) = \sum_{j=1}^{m}\mathbf{P}^{(j)}\left(\theta \cdot\rho_{jt}({\boldsymbol{z}}^*(\theta)|\theta)\right) \rightarrow 0 \ (\theta \rightarrow \infty).
\end{equation}
By continuity, the maximum point exists.

\subsection{Proof of Proposition \ref{pro5}}
Consider 
    \begin{align}
\left({\boldsymbol{z}}_1,\theta\left({\boldsymbol{z}}_1\right)\right) &= \underset{\left(\boldsymbol{z},\theta\right):G(\boldsymbol{z}) \leq I}{\arg\widetilde{\max}}\quad\theta\mathbf{P}^{(1)}\rho_{1t}({\boldsymbol{z}}|\theta)\\
       &= \underset{\left(\boldsymbol{z},\theta\right):G(\boldsymbol{z}) \leq I}{\arg\widetilde{\max}}\quad K_{1t} \phi\left(C_1 - \frac{\lambda_1}{2}\left\Vert\boldsymbol{z}-\hat{\boldsymbol{z}}_{1,t} \right\Vert^2\right)\cdot e^{ -\gamma_1 \theta}
    \end{align}
    From Proposition \ref{p5}, we know that
    \begin{equation}
    	\left({\boldsymbol{z}}_1,\theta\left({\boldsymbol{z}}_1\right)\right) = \underset{\left(\boldsymbol{z},\theta\right):G(\boldsymbol{z}) \leq I}{\arg{\max}}\quad K_{1t} \phi\left(C_1 - \frac{\lambda_1}{2}\left\Vert\boldsymbol{z}-\hat{\boldsymbol{z}}_{1,t} \right\Vert^2\right)\cdot e^{ -\gamma_1 \theta},
    \end{equation}
    where the right-hand side is unique.
    Given $K_{1t}\in(0,\infty),$ when $(K_{2t},K_{3t},\ldots,K_{mt})\rightarrow \boldsymbol{0},$
    \begin{equation}
     \theta\sum_{j=1}^m \mathbf{P}^{(j)}\rho_{jt}({\boldsymbol{z}}|\theta) \rightarrow \theta\mathbf{P}^{(1)}\rho_{1t}({\boldsymbol{z}}|\theta),
    \end{equation}
    continuously and uniformly. Thus for any $\delta_0>0$
    \begin{equation}
        \left({\boldsymbol{z}}_t^{(s)}(I),\theta\left({\boldsymbol{z}}_t^{(s)}(I)\right)\right) \in U(\left({\boldsymbol{z}}_1,\theta\left({\boldsymbol{z}}_1\right)\right),\delta_0)
    \end{equation}
    when $K_{jt}(2\leq j\leq m)$ are sufficiently small, where $U$ represents the neighborhood. According to the Lagrange multiplier method (see \cite{rudin1964principles}), $\left({\boldsymbol{z}}_1,\theta\left({\boldsymbol{z}}_1\right)\right)$ and $\left({\boldsymbol{z}}_t^{(s)}(I),\theta\left({\boldsymbol{z}}_t^{(s)}(I)\right)\right)$ are given by
    \begin{equation}
    	\frac{\partial}{\partial( {\boldsymbol{z}},\theta,a) }L_1 =\frac{\partial}{\partial\left({\boldsymbol{z}},\theta,a\right)}\left(\theta\mathbf{P}^{(1)}\rho_{1t}({\boldsymbol{z}}|\theta)-a\left(G(\boldsymbol{z})-I\right)\right) =0,
    \end{equation}
    \begin{equation}
    	\frac{\partial}{\partial({\boldsymbol{z}},\theta,a)}L_m =\frac{\partial}{\partial\left({\boldsymbol{z}},\theta,a\right)} \left(\theta\sum_{j=1}^m\mathbf{P}^{(j)}\rho_{jt}({\boldsymbol{z}}|\theta)-a\left(G(\boldsymbol{z})-I\right)\right)=0,
    \end{equation}
     respectively.
     We calculate the Hessian matrix of $L_1$ with respect to $ ({\boldsymbol{z}},\theta)$ as 
     \begin{equation}
     	H_{L_1} ({\boldsymbol{z}},\theta) = \begin{pmatrix}
     		\frac{\partial^2 {L_1}}{\partial {\boldsymbol{z}}^2} & \frac{\partial^2 {L_1}}{\partial {\boldsymbol{z}} \partial \theta} \\
     		\frac{\partial^2 {L_1}}{\partial \theta \partial {\boldsymbol{z}}} & \frac{\partial^2 {L_1}}{\partial \theta^2} \\
     	\end{pmatrix}
     \end{equation}
     Therefore $H_{L_1}({\boldsymbol{z}}_1,\theta\left({\boldsymbol{z}}_1\right))$ is negative definite, so by continuity $H_{L_1}({\boldsymbol{z}},\theta)$ is negative definite on $U(({\boldsymbol{z}}_1,\theta({\boldsymbol{z}}_1)),\delta_0)$ for some $\delta_0>0.$ Moreover, we have
     \begin{equation}
     	H_{L_m}({\boldsymbol{z}},\theta) \rightarrow H_{L_1}({\boldsymbol{z}},\theta)
     \end{equation}
     on $U(({\boldsymbol{z}}_1,\theta({\boldsymbol{z}}_1)),\delta_0)$ when $(K_{2t},K_{3t},\ldots,K_{mt})\rightarrow \boldsymbol{0}.$ There exists $B_t(K_{1t})>0$ when $K_{jt}<B_t(K_{1t})$ for all $j=2,3,\ldots,m,$ $H_{L_m}({\boldsymbol{z}},\theta)$ is negative definite on $U(({\boldsymbol{z}}_1,\theta({\boldsymbol{z}}_1)),\delta_0).$ Easily, $B_t$ can be chosen as bounded w.r.t. $K_{1t}.$ As a result, the right-hand side of (\ref{p5}) is unique, which is equal to the left-hand side. By the implicit function theorem (see \cite{rudin1964principles}), it is smooth w.r.t. $(K_{1t},\ldots,K_{mt})$ and
     \begin{equation}
     \frac{\partial}{\partial K_{jt}}\left({\boldsymbol{z}}_t^{(s)}(I),\theta\left({\boldsymbol{z}}_t^{(s)}(I)\right)\right) \left.= H^{-1}_{L_m}\left(\boldsymbol{z},\theta\right)\cdot \mathbf{P}^{(j)}\frac{\partial}{\partial(\boldsymbol{z},\theta)}(\theta F_{jt}(\boldsymbol{z},\theta)) \right|_{\left({\boldsymbol{z}}_t^{(s)}(I),\theta\left({\boldsymbol{z}}_t^{(s)}(I)\right)\right)}.
     \end{equation}

     \subsection{Proof of Proposition \ref{pro7}}
     Without loss of generality, assume that $t=0$. Given $K_{10},$ let
	\begin{equation}
		\Psi_{B_0}|_{K_{10}} = \{K_{10}\}\times \left\{(K_{2t},\ldots,K_{mt}):,0<K_{jt}<{B}_t(K_{1t})\right\}.
	\end{equation} Note that 
	\begin{align}
		&\frac{\partial}{\partial K_{i0}}\int_t^{t'}K_{j0}F_{j\tau}\left({\boldsymbol{z}}_0^{(s)}(I),\theta\left({\boldsymbol{z}}_0^{(s)}(I)\right)\right) \mathrm{d}\tau  \\ =&\mathds{1}_{\{i=j\}}\cdot\int_0^{t'}F_{j\tau}\left({\boldsymbol{z}}_0^{(s)}(I),\theta\left({\boldsymbol{z}}_0^{(s)}(I)\right)\right) \mathrm{d}\tau +K_{j0}\int_0^{t'}\frac{\partial}{\partial K_{i0}}F_{j\tau}\left({\boldsymbol{z}}_0^{(s)}(I),\theta\left({\boldsymbol{z}}_0^{(s)}(I)\right)\right) \mathrm{d}\tau,
	\end{align}
	where
	\begin{align}
		&\int_0^{t'}\frac{\partial}{\partial K_{i0}}F_{j\tau}\left({\boldsymbol{z}}_0^{(s)}(I),\theta\left({\boldsymbol{z}}_0^{(s)}(I)\right)\right) \mathrm{d}\tau \\=& \left(\int_0^{t'}\left.\frac{\partial F_{j\tau}}{\partial  (\boldsymbol{z},\theta)}\right|_{\left({\boldsymbol{z}}_0^{(s)}(I),\theta\left({\boldsymbol{z}}_0^{(s)}(I)\right)\right)} \mathrm{d}\tau\right)'\cdot\frac{\partial }{\partial K_{i0}}\left({\boldsymbol{z}}_0^{(s)}(I),\theta\left({\boldsymbol{z}}_0^{(s)}(I)\right)\right)	\end{align}
	By Proposition \ref{pro6}, the above equation can be further simplified to
	\begin{equation}
		\label{equ67}
		\left(\int_0^{t'}\left.\frac{\partial F_{j0}}{\partial  (\boldsymbol{z},\theta)}\right|_{\left({\boldsymbol{z}}_0^{(s)}(I)-\tau\boldsymbol{d},\theta\left({\boldsymbol{z}}_0^{(s)}(I)\right)\right)} \mathrm{d}\tau\right)'\cdot\frac{\partial }{\partial K_{i0}}\left({\boldsymbol{z}}_0^{(s)}(I),\theta\left({\boldsymbol{z}}_0^{(s)}(I)\right)\right),
	\end{equation}
	where 
	\begin{align}
		&\int_0^{t'}\left.\frac{\partial F_{j0}}{\partial  (\boldsymbol{z},\theta)}\right|_{\left({\boldsymbol{z}}_0^{(s)}(I)-\tau\boldsymbol{d},\theta\left({\boldsymbol{z}}_0^{(s)}(I)\right)\right)} \mathrm{d}\tau \\=&
		\left. \left(\begin{array}{c}
			-\lambda_j e^{-\gamma_j\theta}\int_{0}^{t'}\phi'\left(C_j-\frac{\lambda_j}{2}\Vert{\boldsymbol{z}}-\tau\boldsymbol{d}-\hat{\boldsymbol{z}}_{j,0}\Vert^2\right)\left({\boldsymbol{z}}-\tau\boldsymbol{d}-\hat{\boldsymbol{z}}_{j,0}\right)\mathrm{d}\tau \\
			-\gamma_j e^{-\gamma_j \theta}\int_0^{t'}\phi\left(C_j-\frac{\lambda_j}{2}\Vert{\boldsymbol{z}}-\tau\boldsymbol{d}-\hat{\boldsymbol{z}}_{j,0}\Vert^2\right)\mathrm{d}\tau
		\end{array}\right)\right|_{\left({\boldsymbol{z}}_0^{(s)}(I),\theta\left({\boldsymbol{z}}_0^{(s)}(I)\right)\right)}\\
		\stackrel{\triangle}{=}&\left.\boldsymbol{\alpha}^{(j)}_1(t',\boldsymbol{z},\theta)\right|_{\left({\boldsymbol{z}}_0^{(s)}(I),\theta\left({\boldsymbol{z}}_0^{(s)}(I)\right)\right)},
	\end{align}
	and 
	\begin{align}
		&\frac{\partial}{\partial K_{i0}}\left({\boldsymbol{z}}_0^{(s)}(I),\theta\left({\boldsymbol{z}}_0^{(s)}(I)\right)\right) \\
		=&\left. H^{-1}_{L_m}\left(\boldsymbol{z},\theta\right)\cdot \mathbf{P}^{(i)}\frac{\partial}{\partial(\boldsymbol{z},\theta)}(\theta F_{i0}(\boldsymbol{z},\theta)) \right|_{\left({\boldsymbol{z}}_0^{(s)}(I),\theta\left({\boldsymbol{z}}_0^{(s)}(I)\right)\right)}\\
		=&H^{-1}_{L_m}\left(\boldsymbol{z},\theta\right)\cdot \mathbf{P}^{(i)}\left. \left(\begin{array}{c}
			-\lambda_i\theta e^{-\gamma_i\theta}\phi'\left(C_i-\frac{\lambda_i}{2}\Vert{\boldsymbol{z}}-\hat{\boldsymbol{z}}_{i,0}\Vert^2\right)\left({\boldsymbol{z}}-\hat{\boldsymbol{z}}_{i,0}\right) \\
			(1-\gamma_i\theta)  e^{-\gamma_i \theta}\phi\left(C_i-\frac{\lambda_i}{2}\Vert{\boldsymbol{z}}-\hat{\boldsymbol{z}}_{i,0}\Vert^2\right)
		\end{array}\right)\right|_{\left({\boldsymbol{z}}_0^{(s)}(I),\theta\left({\boldsymbol{z}}_0^{(s)}(I)\right)\right)}\\
		\stackrel{\triangle}{=}&H^{-1}_{L_m}\left(\boldsymbol{z},\theta\right)\cdot \mathbf{P}^{(i)}\left.\boldsymbol{\alpha}^{(i)}_2(\boldsymbol{z},\theta)\right|_{\left({\boldsymbol{z}}_0^{(s)}(I),\theta\left({\boldsymbol{z}}_0^{(s)}(I)\right)\right)}.
	\end{align}
	Therefore, for \eqref{equ67}, we have  
	\begin{align}
		 &\left|\mathbf{P}^{(i)}\left(\boldsymbol{\alpha}^{(j)}_1(t',\boldsymbol{z},\theta)\right)'H^{-1}_{L_m}\left(\boldsymbol{z},\theta\right) \left.\boldsymbol{\alpha}^{(i)}_2(\boldsymbol{z},\theta)\right|_{\left({\boldsymbol{z}}_0^{(s)}(I),\theta\left({\boldsymbol{z}}_0^{(s)}(I)\right)\right)}\right| \\
		 \label{eq76}
		 \leq &\left.\mathbf{P}^{(i)} \left\Vert\boldsymbol{\alpha}^{(j)}_1(t',\boldsymbol{z},\theta)\right\Vert \cdot \left|e_{\min}\left(H^{-1}_{L_m}\left(\boldsymbol{z},\theta\right)\right)\right|\cdot\left\Vert\boldsymbol{\alpha}^{(i)}_2(\boldsymbol{z},\theta) \right\Vert\right|_{\left({\boldsymbol{z}}_0^{(s)}(I),\theta\left({\boldsymbol{z}}_0^{(s)}(I)\right)\right)},
	\end{align}
	where $e_{\min}\left(H^{-1}_{L_m}\left(\boldsymbol{z},\theta\right)\right)$ represents the smallest eigenvalue of $H^{-1}_{L_m}\left(\boldsymbol{z},\theta\right)$; see \cite{strang2012linear}. It can be seen that
	\begin{equation}
		\left\Vert\boldsymbol{\alpha}^{(j)}_1\left(t',{\boldsymbol{z}}_0^{(s)}(I),\theta\left({\boldsymbol{z}}_0^{(s)}(I)\right)\right)\right\Vert,\ \left\Vert\boldsymbol{\alpha}^{(i)}_2\left({\boldsymbol{z}}_0^{(s)}(I),\theta\left({\boldsymbol{z}}_0^{(s)}(I)\right)\right) \right\Vert
	\end{equation}
	are both bounded w.r.t. $(K_{10},\ldots,K_{m0})$ on $\Psi_{B_0}|_{K_{10}}$ and $t'\in(t_{\epsilon},\infty)$ because \begin{equation}
		\left({\boldsymbol{z}}_0^{(s)}(I),\theta\left({\boldsymbol{z}}_0^{(s)}(I)\right)\right) \in U(\left({\boldsymbol{z}}_1,\theta\left({\boldsymbol{z}}_1\right)\right),\delta_0)
		\ 	\text{on} \ \Psi_{B_0}.
	\end{equation}
	Moreover, when $(K_{20},\ldots,K_{m0})\rightarrow \boldsymbol{0},$
	\begin{align}
		H_{L_m}\left({\boldsymbol{z}}_0^{(s)}(I),\theta\left({\boldsymbol{z}}_0^{(s)}(I)\right)\right) &\rightarrow H_{L_1}\left({\boldsymbol{z}}_1,\theta\left({\boldsymbol{z}}_1\right)\right)
	\end{align}
	which is negative definite and the largest eigenvalue of which has a non-zero upper bound for any $(K_{10},\ldots,K_{m0})$ in $\Psi_{B_0}|_{K_{10}}.$
	Therefore, (\ref{eq76}) is bounded w.r.t. $(K_{10},\ldots,K_{m0})$ in $\Psi_{B_0}|_{K_{10}}$ and $t'\in(t_{\epsilon},\infty)$.
	Given $K_{10}>0,$ 
	\begin{align}
		\mathbf{P}^{(j)}\frac{\partial}{\partial K_{i0}}\int_t^{t'}K_{j0}F_{j\tau}\left({\boldsymbol{z}}_0^{(s)}(I),\theta\left({\boldsymbol{z}}_0^{(s)}(I)\right)\right) \mathrm{d}\tau   = \mathds{1}_{\{i=j\}}\cdot\int_0^{t'}F_{j\tau}\left({\boldsymbol{z}}_0^{(s)}(I),\theta\left({\boldsymbol{z}}_0^{(s)}(I)\right)\right) \mathrm{d}\tau  +o(1)
	\end{align}
	for $2\leq j\leq m$ and $2 \leq i \leq m$ when $(K_{20},\ldots,K_{m0})\rightarrow 0.$ As 
	\begin{equation}
		\int_0^{t'}F_{j\tau}\left({\boldsymbol{z}}_0^{(s)}(I),\theta\left({\boldsymbol{z}}_0^{(s)}(I)\right)\right)\mathrm{d}\tau
	\end{equation}
	has non-zero lower bound on $\Psi_{B_0}|_{K_{10}},$ which is uniform w.r.t. $t' \in (t_{\epsilon},\infty).$ The existence of $\tilde{B}_0$ is clearly.
 \subsection{Proof of Proposition \ref{pro8}}

 Still, we assume that $t=0$ without loss of generality. Given $K_{10},$ we show first that this is true for $B_t(\cdot)$ in Proposition \ref{pro5}. From the proof in Proposition \ref{pro5}, we know that
	\begin{align}
		\theta\sum_{j=1}^m\mathbf{P}^{(j)}\rho_{j0}(\boldsymbol{z}|\theta) =\theta\sum_{j=1}^m \mathbf{P}^{(j)}K_{j0}F_{j0}(\boldsymbol{z},\theta),\\
		L_m = \theta\sum_{j=1}^m \mathbf{P}^{(j)}K_{j0}F_{j0}(\boldsymbol{z},\theta)-a\left(G(\boldsymbol{z}-I)\right).
	\end{align}
	Thus it only requires $\left(\mathbf{P}^{(2)}K_{20},\ldots,\mathbf{P}^{(m)}K_{m0}\right)\rightarrow \boldsymbol{0}$ to make sure that 
	\begin{align}
		\label{eq86}
		\left({\boldsymbol{z}}_0^{(s)}(I),\theta\left({\boldsymbol{z}}_0^{(s)}(I)\right)\right) \rightarrow \left({\boldsymbol{z}}_1,\theta\left({\boldsymbol{z}}_1\right)\right),
	\end{align}
	which is constant vector w.r.t. $K_{10}\in[b,\infty),$ and
	\begin{align}
		\label{eq87}
		H_{L_m}({\boldsymbol{z}},\theta) \rightarrow H_{L_1}({\boldsymbol{z}},\theta).
		\end{align}
		which is negative definite on $U(({\boldsymbol{z}}_1,\theta({\boldsymbol{z}}_1)),\delta_0)$ for any $K_{10}\in[b,\infty).$
	Note that the convergence in (\ref{eq86}) and (\ref{eq87}) is uniform w.r.t. $K_{10}\in [b,\infty),$ so there exists increasing $B'_0(b)$ such that when 
	$
		\mathbf{P}^{(j)}K_{j0}\leq B'_0(b)
	$ i.e. $K_{j0}\leq \frac{B'_0(b)}{\mathbf{P}^{(j)}},$
	the conclusion of Proposition \ref{pro5} holds for any $K_{10}\geq b$. Let \begin{equation}
		B_0(K_{10}) = \inf_{j\geq 2}\frac{B'_0(K_{10})}{\mathbf{P}^{(j)}},
	\end{equation}
	then 
	\begin{equation}
		\inf_{K_{1t}\geq b }{B}_t(K_{1t}) =\inf_{K_{1t}\geq b,j\geq 2}\frac{B'_0(K_{10})}{\mathbf{P}^{(j)}} \geq \inf_{j\geq 2}\frac{B'_0(b)}{\mathbf{P}^{(j)}} \geq M_0,
	\end{equation}
	when $\mathbf{P}^{(1)}$ is sufficiently large. 
	Then let
	\begin{equation}
		\Psi_{B_0}|_{K_{10}\geq b } = [b,\infty)\times \left\{(K_{2t},\ldots,K_{mt}):,0<K_{jt}<{B}_t(K_{1t})\right\}.
	\end{equation}
	 From the proof of Proposition \ref{pro7}, we know that
	\begin{align}
			&\frac{\partial}{\partial K_{i0}}\int_t^{t'}K_{j0}F_{j\tau}\left({\boldsymbol{z}}_0^{(s)}(I),\theta\left({\boldsymbol{z}}_0^{(s)}(I)\right)\right) \mathrm{d}\tau  \\ =&\mathds{1}_{\{i=j\}}\cdot\int_0^{t'}F_{j\tau}\left({\boldsymbol{z}}_0^{(s)}(I),\theta\left({\boldsymbol{z}}_0^{(s)}(I)\right)\right) \mathrm{d}\tau +K_{j0}
				\mathbf{P}^{(i)}\left(\boldsymbol{\alpha}^{(j)}_1(t',\boldsymbol{z},\theta)\right)'H^{-1}_{L_m}\left(\boldsymbol{z},\theta\right) \left.\boldsymbol{\alpha}^{(i)}_2(\boldsymbol{z},\theta)\right|_{\left({\boldsymbol{z}}_0^{(s)}(I),\theta\left({\boldsymbol{z}}_0^{(s)}(I)\right)\right)},
	\end{align}
	where 
	\begin{align}
		&\left|K_{j0}
		\mathbf{P}^{(i)}\left(\boldsymbol{\alpha}^{(j)}_1(t',\boldsymbol{z},\theta)\right)'H^{-1}_{L_m}\left(\boldsymbol{z},\theta\right) \left.\boldsymbol{\alpha}^{(i)}_2(\boldsymbol{z},\theta)\right|_{\left({\boldsymbol{z}}_0^{(s)}(I),\theta\left({\boldsymbol{z}}_0^{(s)}(I)\right)\right)}\right| \\
		\leq &	\left.K_{j0}\mathbf{P}^{(i)} \left\Vert\boldsymbol{\alpha}^{(j)}_1(t',\boldsymbol{z},\theta)\right\Vert \cdot \left|e_{\min}\left(H^{-1}_{L_m}\left(\boldsymbol{z},\theta\right)\right)\right|\cdot\left\Vert\boldsymbol{\alpha}^{(i)}_2(\boldsymbol{z},\theta) \right\Vert\right|_{\left({\boldsymbol{z}}_0^{(s)}(I),\theta\left({\boldsymbol{z}}_0^{(s)}(I)\right)\right)}.
		\end{align}
	For the same reason as Proposition \ref{pro7}, 
	\begin{equation}
		\left\Vert\boldsymbol{\alpha}^{(j)}_1\left(t',{\boldsymbol{z}}_0^{(s)}(I),\theta\left({\boldsymbol{z}}_0^{(s)}(I)\right)\right)\right\Vert,\ \left\Vert\boldsymbol{\alpha}^{(i)}_2\left({\boldsymbol{z}}_0^{(s)}(I),\theta\left({\boldsymbol{z}}_0^{(s)}(I)\right)\right) \right\Vert
	\end{equation}
	are both bounded w.r.t. $ \left(\mathbf{P}^{(1)},\ldots,\mathbf{P}^{(m)}\right)\in\left\{\sum_{j=1}^m\mathbf{P}^{(j)}=1\right\}, (K_{10},\ldots,K_{m0})\in\Psi_{B_0}|_{K_{10}\geq b}$ and $t'\in(t_{\epsilon},\infty).$ Furthermore, when $(\mathbf{P}^{(2)}K_{20},\ldots
	,\mathbf{P}^{(m)}K_{m0})\rightarrow \boldsymbol{0},$
	\begin{align}
		H_{L_m}\left({\boldsymbol{z}}_0^{(s)}(I),\theta\left({\boldsymbol{z}}_0^{(s)}(I)\right)\right) &\rightarrow H_{L_1}\left({\boldsymbol{z}}_1,\theta\left({\boldsymbol{z}}_1\right)\right)
	\end{align}
	which is negative definite. When $\mathbf{P}^{(1)}$ is away from $0$, say $\mathbf{P}^{(1)}\geq\frac{1}{m},$ the largest eigenvalue of  $H_{L_1}\left({\boldsymbol{z}}_1,\theta\left({\boldsymbol{z}}_1\right)\right)$ has a non-zero upper bound for any $\left(\mathbf{P}^{(1)},\ldots,\mathbf{P}^{(m)}\right)\in\left\{\sum_{j=1}^m\mathbf{P}^{(j)}=1,\mathbf{P}^{(1)}\geq\frac{1}{m}\right\}$ and $(K_{10},\ldots,K_{m0})\in\Psi_{B_0}|_{K_{10}\geq b}.$ Thus, it only requires $\mathbf{P}^{(i)}K_{j0}\rightarrow0\ (i,j=2,\ldots,m)$ for (\ref{eq62}) to hold true. There exists increasing $\tilde{B}'_0(b)$ such that when $\mathbf{P}^{(i)}K_{j0} <\tilde{B}'_0(b)\ (i,j=2,\ldots,m),$ (\ref{eq62}) holds for any $K_{10}\geq b.$ Let \begin{equation}
		\tilde{B}_0(K_{10}) = \inf_{j\geq 2}\frac{\tilde{B}'_0(K_{10})}{\mathbf{P}^{(j)}},
	\end{equation}
	then 
	\begin{equation}
		\inf_{K_{1t}\geq b }\tilde{B}_t(K_{1t}) =\inf_{K_{1t}\geq b,j\geq 2}\frac{\tilde{B}'_0(K_{10})}{\mathbf{P}^{(j)}} \geq \inf_{j\geq 2}\frac{\tilde{B}'_0(b)}{\mathbf{P}^{(j)}} \geq M_0,
	\end{equation}
	when $\mathbf{P}^{(1)}$ is sufficiently large.
\subsection{Proof of Proposition \ref{pro9}}
Let $\mu_j'= \langle{{\boldsymbol{z}}^*_0(I)-\hat{\boldsymbol{z}}_{j0},\boldsymbol{d}}\rangle$ and $\nu'=\frac{1}{2}\Vert{\boldsymbol{d}}\Vert^2.$ We have 
	\begin{equation}
		\tilde{\beta}_{t}(I) = \sum_{j=1}^{m}\mathbf{P}^{(j)} \tilde{\beta}_{j0}(I)\exp\left(\lambda_j( \mu_j' t- \nu' t^2)\right).
	\end{equation}
	When $\sup_{1\leq i,j\leq m }\left\Vert \frac{\boldsymbol{a}_i}{\lambda_i}-\frac{\boldsymbol{a}_j}{\lambda_j}\right\Vert\rightarrow 0,$ naturally we have 
	\begin{equation}
		\sup_{1\leq i,j\leq m } \Vert{\hat{\boldsymbol{x}}_{i0}-\hat{\boldsymbol{x}}_{j0}}\Vert\rightarrow 0.
	\end{equation}
	According to the definition of ${\boldsymbol{z}}^*_0(I)$ and the monotonic descent of $F_{j0}(\cdot)$ with respect to $\Vert\hat{\boldsymbol{x}}-\hat{\boldsymbol{x}}_{j0}\Vert,$ denoting $\boldsymbol{z}^*_0(I)=(\boldsymbol{x}^*_0(I)',\boldsymbol{\xi}^*_0(I)')',$ we have 
	\begin{equation}
		\left\Vert{\boldsymbol{x}}^*_0(I)-\hat{\boldsymbol{x}}_{j0}\right\Vert \leq \sup_{1\leq i,j \leq m } \left\Vert\hat{\boldsymbol{x}}_{i0}-\hat{\boldsymbol{x}}_{j0}\right\Vert, \quad j =1,2,\ldots,m.
	\end{equation}
	Since $\boldsymbol{d}=(\boldsymbol{c}',\boldsymbol{0}')',$ there is
	\begin{equation}
		|\mu_j|=|\langle{{\boldsymbol{z}}^*_0(I)-\hat{\boldsymbol{z}}_{j0},\boldsymbol{d}}\rangle| \leq \left\Vert{\boldsymbol{x}}^*_0(I)-\hat{\boldsymbol{x}}_{j0}\right\Vert\cdot\Vert\boldsymbol{c}\Vert \rightarrow 0, \quad j =1,2,\ldots,m.
	\end{equation}
	The first derivative of $\tilde{\beta}_t(I)$ with respect to $t$ is
	\begin{equation}
		\frac{\partial \tilde{\beta}_t(I)}{\partial t} = \sum_{j=1}^{m}\mathbf{P}^{(j)} \tilde{\beta}_{j0}(I)\exp\left(\lambda_j( \mu_j' t- \nu' t^2)\right)\cdot\lambda_j(\mu_j'-2\nu' t),
	\end{equation}
	and the second derivative is
	\begin{equation}
		\frac{\partial^2 \tilde{\beta}_t(I)}{\partial t^2} = \sum_{j=1}^{m}\mathbf{P}^{(j)} \tilde{\beta}_{j0}(I)\exp\left(\lambda_j( \mu_j' t- \nu' t^2)\right)\cdot\lambda_j\left(-2\nu'+\lambda_j(\mu_j'-2\nu' t)^2\right).
	\end{equation}
	Take $\delta_0 >0.$ There exists $\epsilon>0$ s.t. when $\sup_{1\leq i,j\leq m }\left\Vert \frac{\boldsymbol{a}_i}{\lambda_i}-\frac{\boldsymbol{a}_j}{\lambda_j}\right\Vert<\epsilon,$
	\begin{equation}
		|\mu_j| < \min\left\{2\nu'\delta_0,\sqrt{\frac{2\nu'}{M}},\left|2\nu'\delta_0-\sqrt{\frac{2\nu'}{M}}\right|\right\}.
	\end{equation}
	In this case for $j=1,2,\ldots,m,$ we have 
	\begin{equation}
		\mu_j'-2\nu' t \leq \mu_j'-2\nu' \delta_0 <0,\quad t\in[\delta_0,\infty),
	\end{equation}
	and 
	\begin{equation}
		-2\nu'+\lambda_j(\mu_j'-2\nu' t)^2 \leq \max\{-2\nu'+M\mu_j'^2,-2\nu'+M(\mu_j'-2\nu' \delta_0)^2\}<0,\quad t\in[0,\delta_0],
	\end{equation}
	which is exactly
	\begin{equation}
		\frac{\partial \tilde{\beta}_t(I)}{\partial t}<0,\quad t\in[\delta_0,\infty),
	\end{equation}
	and
	\begin{equation}
		\frac{\partial^2 \tilde{\beta}_t(I)}{\partial t^2}<0,\quad t\in[0,\delta_0].
	\end{equation}
	So $\tilde{\beta}_t(I)$ is monotonic on $[\delta_0,\infty)$ and has at most one maximum point on $[0,\delta_0].$
 
\newpage
\bibliography{mybibfile}
\end{document}